\title{Improved Sample Complexity of Imitation Learning\\ for Barrier Model Predictive Control\thanks{The first two authors contributed equally. This work extends our previous result in \cite{pfrommer2024sample}, which has been accepted for publication in CDC 2024. An earlier version of this manuscript was submitted as part of DP's Master's thesis \cite{pfrommer2024samplethesis}.}
}
\author{Daniel Pfrommer\thanks{Massachusetts Institute of Technology. Email: \texttt{dpfrom@mit.edu}. }
 \and Swati Padmanabhan\thanks{Massachusetts Institute of Technology. Email: \texttt{pswt@mit.edu}.}
 \and Kwangjun Ahn\thanks{Microsoft Research. Email: \texttt{kwangjunahn@microsoft.com}.}
 \and Jack Umenberger\thanks{University of Oxford. Email: \texttt{jack.umenberger@eng.ox.ac.uk}.}
 \and Tobia Marcucci\thanks{Massachusetts Institute of Technology. Email: \texttt{tobiam@mit.edu}.}
 \and Zakaria Mhammedi\thanks{Massachusetts Institute of Technology. Email: \texttt{mhammedi@mit.edu}.}
 \and Ali Jadbabaie\thanks{Massachusetts Insitute of Technology. Email: \texttt{jadbabai@mit.edu}.}
 }
\definecolor{otherlightblue}{RGB}{0, 80, 250}
\definecolor{otherblue}{RGB}{0, 50, 100}
\definecolor{othergreen}{RGB}{60, 120, 0}
\numberwithin{equation}{section}
\crefname{equation}{Equation}{Equations}
\crefname{fact}{Fact}{Facts}
\crefname{lemmma}{Lemma}{Lemmas}
\crefname{figure}{Figure}{Figures}
\crefname{example}{Example}{Examples}
\crefname{defn}{Definition}{Definitions}
\crefname{ineq}{Inequality}{Inequalities}
\crefname{corollary}{Corollary}{Corollaries}
\let\ref\cref
\newcommand{\compresslist}{ 
\setlength{\itemsep}{1pt}
\setlength{\parskip}{0pt}
\setlength{\parsep}{0pt}
}
\theoremstyle{plain}
\newtheorem{theorem}{Theorem}[section]
\newtheorem{claim}[theorem]{Claim}
\newtheorem{definition}[theorem]{Definition}
\newtheorem{fact}[theorem]{Fact}
\newtheorem{lemma}[theorem]{Lemma}
\newtheorem{example}[theorem]{Example}
\newtheorem{assumption}[theorem]{Assumption}
\newtheorem{problem}[theorem]{Problem}
\newtheorem{corollary}[theorem]{Corollary}
\newlist{propenum}{enumerate}{1} 
\setlist[propenum]{label=\alph*), ref=\theproposition(\alph*)}
\crefname{prop}{Proposition}{Propositions} 
\newlist{thmenum}{enumerate}{1} 
\setlist[thmenum]{label=(\roman*), ref=\thetheorem(\roman*)}
\crefname{theorem}{Theorem}{Theorems}
\newcommand{\bsigma}{\boldsymbol{\sigma}}
\newcommand{\nnz}{\mathrm{nnz}}
\newcommand{\vemod}{\widetilde{\mathrm{e}}}
\newcommand\numberthis{\addtocounter{equation}{1}\tag{\theequation}}  
\newlist{itemizec}{itemize}{2}
\setlist[itemizec,1]{label=\faCaretRight ,wide, parsep= 0.05pt, left = 15pt}
\newcommand{\nconstr}{m} 
\newcommand{\T}{T}
\newcommand{\R}{\mathbb{R}}
\newcommand{\Diag}{\mathrm{Diag}}
\newcommand{\fcl}[1]{f_{\mathrm{cl}}^{#1}}
\newcommand{\E}{\mathbb{E}}
\newcommand{\pirs}{{\boldsymbol{\pi}}^{\mathrm{rs}}}
\newcommand{\piexpert}{{\boldsymbol{\pi}}^{\star}} 
\newcommand{\pilearned}{\widehat{\boldsymbol{\pi}}}
\newcommand{\pimpc}{\boldsymbol{\pi}_{\mathrm{mpc}}}
\newcommand{\pibmpc}{\boldsymbol{\pi}_{\mathrm{mpc}}^\eta}
\DeclareMathOperator*{\argmin}{arg\,min}
\DeclareMathOperator*{\adj}{adj}
\newcommand{\calX}{\mathcal{X}}
\newcommand{\X}{\calX}
\newcommand{\calU}{\mathcal{U}}
\newcommand{\vueta}{\vec{u}^{\eta}}
\numberwithin{equation}{section}
\crefname{prob}{Problem}{Problems}
\crefname{ineq}{Inequality}{Inequalities}
\newcommand{\mat}[1]{{#1}}
\renewcommand{\vec}[1]{{#1}}
\newcommand{\mXt}{\widetilde{\mat{X}}}
\newcommand{\mI}{\mat{I}}
\newcommand{\mX}{\mat{X}}
\newcommand{\mLambda}{\mat{\Lambda}}
\newcommand{\mA}{\mat{A}}
\newcommand{\mB}{\mat{B}}
\newcommand{\mD}{\mat{D}}
\newcommand{\mQ}{\mat{Q}}
\newcommand{\mL}{\mat{L}}
\newcommand{\mR}{\mat{R}}
\newcommand{\mH}{\mat{H}}
\newcommand{\mF}{\mat{F}}
\newcommand{\mS}{\mat{S}}
\newcommand{\mG}{\mat{G}}
\newcommand{\mP}{\mat{P}}
\newcommand{\vw}{\vec{w}}
\newcommand{\mM}{\mat{M}}
\newcommand{\mDmod}{\widetilde{\mD}}
\newcommand{\vbmod}{\widetilde{\vb}}
\newcommand{\DuetaDx}{\frac{\partial \vueta}{\partial \vec{x}_0}}
\newcommand{\va}{\vec{a}}
\newcommand{\vd}{\vec{d}}
\newcommand{\vb}{\vec{b}}
\newcommand{\vx}{\vec{x}}
\newcommand{\vh}{\vec{h}}
\newcommand{\vu}{\vec{u}}
\newcommand{\vv}{\vec{v}}
\newcommand{\vy}{\vec{y}}
\newcommand{\ve}{\vec{e}}
\newcommand{\vg}{\vec{g}}
\newcommand{\vz}{\vec{z}}
\newcommand{\mK}{\mat{K}}
\newcommand{\mC}{\mat{C}}
\newcommand{\mU}{\mat{U}}
\newcommand{\mV}{\mat{V}}
\newcommand{\vxt}{\vx_t}
\newcommand{\zero}{\mathbf{0}}
\newcommand{\calK}{\mathcal{K}}
\newcommand{\calB}{\mathcal{B}}
\newcommand{\calV}{\mathcal{V}}
\newcommand{\calQ}{\mathcal{Q}}
\renewcommand{\det}{\mathrm{det}}
\newcommand{\mAmod}{\widetilde{\mA}}
\newcommand{\philb}{\textrm{res}_{\textrm{$\ell$.b.}}}
\newcommand{\phiK}{\phi_{\calK}}
\global\long\def\vc{\vec{c}}%
\global\long\def\vx{\vec{x}}%
\global\long\def\gap{\mathrm{gap}}%
\global\long\def\u{\vec{u}}%
\global\long\def\ue{\u_{\eta}}%
\global\long\def\v{\vec{v}}%
\global\long\def\vstar{\v^{\star}}%
\global\long\def\veta{\v_{\eta}}%
\global\long\def\us{\u^{\star}}%
\global\long\def\re{\textrm{res}}%
\newcommand{\phitilde}{\widetilde{\phi}}
\begin{document}
\maketitle
\begin{abstract}
Recent work in imitation learning has shown that having an expert controller that is both suitably smooth and stable enables stronger guarantees on the performance of the  learned controller. However, constructing such smoothed expert controllers for arbitrary systems remains challenging, especially in the presence of input and state constraints. As our primary contribution, we show how such a smoothed expert can be designed for a general class of systems using a log-barrier-based relaxation of a standard Model Predictive Control (MPC) optimization problem. 

Improving upon our previous work, we show that barrier MPC achieves theoretically optimal error-to-smoothness tradeoff along some direction. 
At the core of this theoretical guarantee on  smoothness is an improved lower bound we prove on the optimality gap of the analytic center associated with a convex Lipschitz function, which we believe could be of independent interest. We validate our theoretical findings via experiments, demonstrating the merits of our smoothing approach over randomized smoothing.

\end{abstract}

\newpage

\section{Introduction}\label{sec:introduction}
Imitation learning has emerged as a powerful tool in machine learning, enabling agents to learn complex behaviors  by imitating expert demonstrations  acquired either from a human demonstrator or a policy computed offline~\cite{pomerleau1988alvinn, ratliff2009learning, abbeel2010autonomous, ross2011reduction}. Despite its significant success, imitation learning often suffers from a {compounding error problem}: Successive evaluations of the approximate policy could accumulate error, resulting in out-of-distribution failures
\cite{pomerleau1988alvinn}. Recent results in imitation learning \cite{pfrommer2022tasil, tu2022sample, block2023provable} have identified \emph{smoothness} (i.e., Lipschitzness of the derivative of the optimal controller with respect to the initial state) and \emph{stability} of the expert as two key properties that circumvent this issue, thereby allowing for end-to-end performance guarantees for the final learned controller.

In this paper, our focus is on enabling such guarantees when the expert being imitated is a Model Predictive Controller (MPC), a powerful class of control algorithms based on solving an optimization problem over a receding prediction horizon~\cite{borrelli2017predictive}.
In some cases, the solution to this multiparametric optimization problem, known as the explicit MPC representation \cite{bemporad2002explicit},
can be pre-computed. For instance, in our setup --- linear systems with polytopic constraints --- the optimal control input is a piecewise affine  (and, hence, highly non-smooth) function of the state~\cite{bemporad2002explicit}.
However, the number of these pieces may grow exponentially with the time horizon and the state and input dimension, which  makes pre-computing and storing such a representation  impractical in high dimensions.

While the approximation of a linear MPC controller has garnered significant attention~\cite{chen2018approximating, maddalena2020neural, ahn2023model}, these prior works are primarily concerned with approximating the non-smooth explicit MPC using a neural network and then introducing schemes for enforcing the stability of the learned policy. In contrast, in our paper, we first construct a smoothed version of the expert and then apply theoretical results  derived from the imitation of a smoothed expert.

In particular, we demonstrate --- both theoretically and empirically ---  that a log-barrier  formulation of the underlying MPC optimization yields  smoothness properties similar to its randomized-smoothing-based counterpart, while being faster to compute. Similar to prior works~\cite{wills2004barrier, feller2013barrier, feller2014barrier}, our barrier MPC formulation replaces the constraints in the MPC optimization problem by a log-barrier in the objective (cf. \Cref{sec:barrier_mpc_all}). 
We show that, in conjunction with a black-box imitation learning algorithm, this provides end-to-end guarantees on the performance of the learned policy.

\paragraph{Our Contributions.} 
It is known from classical optimization theory~\cite{beck2012smoothing} that any smooth approximation of a nonsmooth function which is $O(\epsilon)$ close everywhere must have a smoothness constant (Lipschitzness of the gradient) at least $O(1/\epsilon)$. The well-known randomized smoothing technique~\cite{duchi2012randomized} (convolution with a smoothing kernel) is optimal in this sense; however, it does not preserve the stability properties of the underlying controller and hence is not well-suited for controls applications. 

Our main result is that log-barrier-based MPC~\cite{wills2004barrier} is an optimal smoother along some direction and outperforms randomized smoothing for controls tasks. More formally, for a given MPC,  letting $\vu^\star$ be the solution of the explicit MPC and $\vueta$ be the solution of the barrier-MPC formulation, with $\eta$ being the weight on the barrier, our main contributions for barrier MPC are as follows. 

We provide in \Cref{thm:hess_ueta_bounded} an upper bound of $O(\tfrac{1}{\sqrt{\eta + d^2} - d)})$ on the spectral norm of the Hessian of $\vueta$ with respect to $\vx_0$, where $d$ is the distance of the unconstrained solution from the polytope boundary under the appropriate metric.
Separately, we show that there exists a direction $\va$ (independent of $\eta$) along which the error $\va^\top(\vueta - \vu^\star)$ is at most  $O(\sqrt{\eta + d^2} - d)$. 
These two results together show that the controller smoothness and the error match along this direction $\va$, from which we infer that barrier MPC is an optimal smoother along this direction.
 Along the way, we show (\Cref{thm:convex_combination}) that the Jacobian of the log-barrier solution can be written as a convex combination of the Jacobian of the solution of the explicit MPC. In particular, this shows that the rate of change of $\vueta$  with respect to $\vx_0$ is bounded independent of the weight $\eta$ applied to the log barrier.  
 We also show (\Cref{thm:error_bound_barrier_mpc}) that overall,  the distance of $\vueta$ from $\vu^\star$ is bounded by $O(\sqrt{\eta})$.  
 Finally, we demonstrate through numerical experiments that barrier MPC outperforms randomized smoothing, thus empirically affirming the merits of controls-aware smoothing techniques.

A crucial technical component in obtaining the aforementioned  bound on the controller smoothness is a \textit{lower} bound on the distance of $\vueta$ from the boundary of the polytope (equivalently, a lower bound on the
optimality gap of the analytic center associated with a convex Lipschitz function). Intuitively, the nature of the self-concordant barrier already suggests that the solution to a problem with such a  barrier in the objective cannot be too close to the boundary of the constraint set. However, obtaining the desired upper bound on the controller smoothness requires an \emph{explicit quantification} of this distance. 
We provide (\Cref{thm:quad_opt_result}) such a bound for general convex Lipschitz functions via a novel reduction to the setting of linear objectives and then invoking a result by  \citet{zong2023short}. 
Furthermore, our smoothing analysis demonstrates that our lower bound is tight up to constants. We believe this result could be broadly useful to the optimization community.

\section{Problem Setup and Background}
\label{sec:notation}
\looseness=-1We first state our notation and setup that we use throughout. The notation $\|{}\cdot{}\|$ refers to the $\ell_2$ norm $\|{}\cdot{}\|_2$ for vectors and,  by extension, to the spectral norm (largest singular value) for square matrices. For a positive definite matrix $H$, we denote the local inner product $\|x\|_H = \sqrt{x^\top H x}$.  
Unless transposed, all vectors are column vectors. We use uppercase  letters for matrices and lowercase  letters for vectors. 
For a vector $\vx$, we use $\Diag(\vx)$ for the diagonal matrix with the entries of $\vx$ along its diagonal. We use $[n]$ for the set $\{1, 2, \dots, n\}$. Given a matrix $\mM \in \R^{n \times n}$ and $\sigma \in \{0,1\}^n$, we denote by $[\mM]_{\bsigma}$ 
the principal submatrix of $\mM$ corresponding to the rows and columns $i$ for which $\sigma_{i}=1$. We  use $\mM_{\bsigma}^{-1}$  to denote the matrix obtained by first computing the inverse 
of the matrix $[\mM]_{\bsigma}$ and then appropriately padding it with zeroes so that the resulting matrix $\mM_{\bsigma}^{-1}$ has  the same size as $\mM$. Similarly, we define $\adj(\mM)_{\bsigma}$ to be the matrix obtained by first computing the adjugate (the transpose of the cofactor matrix) of $[\mM]_{\bsigma}$ and then appropriately padding it with zeroes so that $\adj(\mM)_{\bsigma}$ has the same size as $\mM$. Lastly, $\mathcal{O}(\cdot)$ denotes expressions where numerical constants have been suppressed.

\looseness=-1We use the same setup as in our previous work \cite{pfrommer2024sample} and consider constrained discrete-time linear dynamical systems of the form,
\begin{align}
\vx_{t+1} = \mA\vxt + \mB \vu_t, \quad \vxt \in \calX_t, \vu_t \in \calU_t, \label{eq:dynamics}
\end{align}
with state $\vxt \in \calX_t \subseteq \R^{d_x}$ and control-input $\vu_t \in \calU_t \subseteq \R^{d_u}$ indexed by time step $t$, and state and input maps $\mA \in \R^{d_x \times d_x}$, $\mB \in \R^{d_x \times d_u}$. The sets  $\calX_t$ and $\calU_t$, respectively, 
are the compact convex state and input constraint sets described by the polytopes $$\calX_t := \{\vx \in \R^{d_x}\,\mid\, \mA_{x_t} \vx \leq \vb_{x_t}\}, \quad \calU_t := \{\vu\in \R^{d_u} \,\mid\, \mA_{u_t} \vu \leq \vb_{u_t} \},$$ where $\mA_{x_t} \in \R^{k_x \times d_x}$, $\mA_{u_t} \in \R^{k_u \times d_u}$, $\vb_{x_t} \in \R^{k_x}$, and  $\vb_{u_t} \in \R^{k_u}$. We use $\mA_{x} \in \R^{(T \cdot k_x) \times d_x}, \mA_{u} \in \R^{(T \cdot k_u) \times d_u}, \vb_x \in \R^{T \cdot k_x}$, and  $\vb_u \in \R^{T \cdot k_u}$ to denote the vertically stacked constraints for the full sequences $x_{1:\T}$ and $u_{0:\T-1}$. 
A constraint $f(\vx)\leq 0$ is said to be ``active'' at $\vy$ if $f(\vy)=0$. Given a polytope $Ax \leq b$, we say that the quantity $b_i - a_i^\top x$ is its ``$i^\mathrm{th}$ residual''.  
For notational convenience, we overload $\phi$ to compactly denote the vector of constraint residuals for a state $\vx$ and input $\vu$ as well as for the sequences $\vx_{1:\T}$ and $\vu_{0:\T-1}$: 
\[
\phi_t(\vx_t, \vu_{t-1}) := \begin{bmatrix}\vb_{x_t} - \mA_{x_t} \vx_{t} \\ \vb_{u_{t-1}} - \mA_{u_{t-1}} \vu_{t-1}\end{bmatrix}, \quad
\phi(\vx_0,\vu_{0:\T-1}) := \begin{bmatrix}\phi_1(\vx_1, \vu_0) \\ \vdots \\ \phi_{\T}(\vx_\T, \vu_{\T-1})\end{bmatrix}.\numberthis\label{eq:phi}
\] 
We consider deterministic state-feedback control policies of the form $\pi: \calX \to \calU$ and denote the closed-loop system under $\pi$ by
$\fcl{\pi}(\vx) := \mA \vx + \mB\pi(\vx)$. 
We  use $\piexpert$ to refer to the expert policy and $\pilearned$ to refer to its learned approximation. 

\looseness=-1In particular, our
principal choice of $\piexpert$ 
in this paper is an MPC with quadratic cost and linear constraints. The MPC policy is obtained by solving the following minimization problem over future actions $\vu:= \vu_{0:\T-1}$
with quadratic cost in $\vu$ and states $\vx := \vx_{1:\T}$: 
\[ 
\begin{array}{ll}
\vspace{0.5em}
\mbox{minimize}_{\vu} & V(\vx_0, \vu):= \sum_{t=1}^\T \vxt^\top \mQ_t \vxt + \sum_{t=0}^{\T-1} \vu_t^\top \mR_t \vu_t \\
\mbox{subject to } &\vx_{t + 1} := \mA\vxt + \mB\vu_t,  \\
&\vx_\T \in \calX, \vu_0 \in \calU, \\
&\vxt \in \calX, \vu_t \in \calU, \; \forall t \in [\T-1],
\end{array}\label[prob]{eq:V}\numberthis
\] 
where $\mQ_t$ and $\mR_{t-1}$ are positive definite for all $t \in [\T]$.
For a given state $\vx$, the corresponding input $\pimpc$  of the MPC  is: 
\begin{align}\label[prob]{eq:pi_mpc}
\pimpc(\vx) := \argmin_{\vu_0} \min_{\vu_{1:\T-1}} V(\vx, \vu_{0:\T-1}),
\end{align}
where the minimization is  over  the feasible set defined in \Cref{eq:V}.
For $\pimpc$ to be well-defined, we assume that $V(\vx_0, \vu)$ has a unique global minimum in $\vu$ for all {feasible} $\vx_0$.

\subsection{Explicit Solution to MPC}\label{sec:explicitMPC}
As first noted by \citet{bemporad2002explicit}, explicit MPC rewrites  \Cref{eq:pi_mpc} as a multi-parametric quadratic program with linear inequality constraints and  solves it for every possible combination of active constraints, building an analytical solution to the control problem.
{Following this known derivation (see \citep[Section 4]{bemporad2002explicit} and \citep[Chapter 11]{borrelli2017predictive}),} we rewrite \Cref{eq:pi_mpc} as the optimization problem, in variable $\vu := \vu_{0:\T-1} \in \R^{\T  d_u}$, as described below: 
\[ 
\begin{array}{ll}
\vspace{0.5em}
\mbox{minimize}_{\vu} &\calV(\vx_0, \vu):= \tfrac{1}{2}\vu^\top \mH \vu  - \vx_0^\top \mF \vu\\
\mbox{subject to } &\mG \vu \leq \vw + \mP \vx_0,  
\end{array}\numberthis\label[prob]{eq:reformulated}
\] 
with cost matrices $\mH\in\R^{\T \cdot d_u \times T \cdot d_u}$ and $\mF \in \R^{d_x \times \T \cdot d_u}$ and $m$ constraints captured via  $\mG \in \R^{\nconstr \times \T \cdot d_u}$, and $\mP \in \R^{m \times d_x}$, and vector $\vw\in  \R^{\nconstr}$, all given by 
\begin{align*}
    \mH &= \mR_{0:\T-1} + \widehat{\mB}^\top \mQ_{1:\T} \widehat{\mB}, \quad \mF = -2\widehat{\mA}^\top \mQ_{1:\T}\widehat{\mB}, \\
    \mG &= \begin{bmatrix}
        \mA_u \\
        \mA_x\widehat{\mB}
    \end{bmatrix},\quad \mP = \begin{bmatrix}
        0 \\
        -\mA_x\widehat{\mA}
    \end{bmatrix}, \quad \vw = \begin{bmatrix}
        \vb_u \\
        \vb_x
    \end{bmatrix},
\end{align*}
where $\mQ_{1:\T}$ and $\mR_{0:\T-1}$ are block diagonal  with $\mQ_1, \dots, \mQ_\T$ and $\mR_0, \dots, \mR_{\T-1}$ on the diagonal, and $\widehat{\mB}$ and $\widehat{\mA}$ are,
\begin{align*}
    \widehat{\mA} = \begin{bmatrix}\mA 
    \\\mA^2 
    \\ \vdots \\ \mA^{\T}\end{bmatrix}, \quad
    \widehat{\mB} = \begin{bmatrix}
        \mB & 0 & \dots & 0 & \\
        \mA\mB & \mB & \dots & 0 \\
        \vdots & \vdots & \ddots & \vdots \\
        \mA^{\T -1}\mB & \mA^{\T-2}\mB & \dots & \mB
    \end{bmatrix}
\end{align*}
so that $\vx_{1:\T} = \widehat{\mA}\vx_0 + \widehat{\mB}\vu$. 
\begin{assumption}\label{assumption:ball_inside_outside}
We assume that the constraint polytope in \Cref{eq:reformulated} contains   a 
ball of radius $r$ and is contained inside an origin-centered ball of radius $R$.
\end{assumption}
We now state the solution to \Cref{eq:reformulated} and later (in \Cref{thm:convex_combination}) show how it appears in the smoothness of the \textit{barrier} MPC solution.

\begin{lemma}[{\citet[Theorem 2]{bemporad2002explicit}}]\label{thm:dudx}
{Given a feasible initial state $\vx$,} let $\sigma(\vx) \in \{0,1\}^\nconstr$ denote the indicator of active constraints of the optimizer of \Cref{eq:reformulated}, with $\sigma_i(\vx)= 1$ iff the $i$th constraint is active.
For $\bsigma \in \{0,1\}^\nconstr$, let $P_{\bsigma} = \{ \vx | \sigma(\vx) = \bsigma\}$ be the set of initial states $\vx$ for which the solution has active constraints determined by $\bsigma$. Then for $\vx_0 \in P_{\bsigma}$, the solution $\vu$ of \Cref{eq:reformulated} is expressed as
$\vu = K_{\sigma} \vx_0 + k_{\sigma}$, 
where $K_\sigma$ and $k_\sigma$ are defined as: 
\begin{equation}\numberthis\label{eq:K_sigma}
\begin{aligned}
    K_{\sigma} &:= \mH^{-1}[\mF^\top - \mG^\top(\mG \mH^{-1} \mG^\top)_{\bsigma}^{-1}(\mG \mH^{-1} \mF^\top - \mP)],\\
    k_\sigma &:= \mH^{-1} \mG^\top (\mG \mH^{-1} \mG^\top)_{\bsigma}^{-1} \vw.
\end{aligned}
\end{equation}
\end{lemma}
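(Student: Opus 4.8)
The plan is to write down the Karush--Kuhn--Tucker (KKT) system of the strictly convex quadratic program in \Cref{eq:reformulated}, use the fixed active-set pattern $\bsigma$ to collapse it into a square linear system in the optimizer $\vu$ and the multipliers of the active constraints, and then solve that system in closed form to exhibit the affine dependence on $\vx_0$.

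First, since $\mH \succ 0$ the objective $\calV(\vx_0,\cdot)$ is strictly convex, so the minimizer $\vu$ is unique for every feasible $\vx_0$ (consistent with the standing uniqueness assumption), and because the constraints are affine the KKT conditions are necessary and sufficient for optimality. Hence there is a multiplier vector $\lambda\geq 0$ with stationarity $\mH\vu - \mF^\top\vx_0 + \mG^\top\lambda = 0$, primal feasibility $\mG\vu\le \vw + \mP\vx_0$, and complementary slackness $\lambda_i(\mG\vu - \vw - \mP\vx_0)_i = 0$. For $\vx_0\in P_{\bsigma}$ the active constraints are, by definition of $P_{\bsigma}$, exactly those indexed by $\bsigma$, so complementary slackness forces $\lambda_i = 0$ for every $i$ with $\sigma_i = 0$; consequently $\mG^\top\lambda$ involves only the rows of $\mG$ in the active set, and those rows satisfy $\mG\vu = \vw + \mP\vx_0$ with equality on the $\bsigma$-block.

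Next I would eliminate $\vu$ using stationarity, $\vu = \mH^{-1}(\mF^\top\vx_0 - \mG^\top\lambda)$ with $\lambda$ supported on $\bsigma$, and substitute into the $\bsigma$-block of the equality constraints. This produces a square linear system for the active multipliers whose coefficient matrix is precisely the principal $\bsigma$-submatrix $[\mG\mH^{-1}\mG^\top]_{\bsigma}$; inverting it (see below) expresses the active multipliers as an affine function of $\vx_0$, and back-substituting into $\vu = \mH^{-1}(\mF^\top\vx_0 - \mG^\top\lambda)$ and separating the $\vx_0$-linear and constant parts yields $\vu = K_{\sigma}\vx_0 + k_{\sigma}$ with $K_{\sigma},k_{\sigma}$ as in \Cref{eq:K_sigma}. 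The one bookkeeping check is that multiplying the zero-padded inverse $(\mG\mH^{-1}\mG^\top)_{\bsigma}^{-1}$ against the full matrices $\mG^\top$, $(\mG\mH^{-1}\mF^\top - \mP)$, and the vector $\vw$ reproduces exactly the row-restricted computation above, which holds because the padding annihilates every contribution coming from inactive rows.

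The step I expect to be the main obstacle is justifying that $[\mG\mH^{-1}\mG^\top]_{\bsigma}$ is invertible: since $\mH^{-1}\succ 0$, this is equivalent to the rows of $\mG$ indexed by $\bsigma$ being linearly independent, i.e., linear independence constraint qualification (LICQ) holding at the optimizer. In the non-degenerate case this is immediate and the derivation above is complete; degenerate active sets require either restricting to the relative interior of $P_{\bsigma}$, where the representation persists, or a genericity/perturbation argument --- this is precisely the subtlety treated in \citet{bemporad2002explicit}, whose statement we are invoking here.
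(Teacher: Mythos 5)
Your KKT/active-set elimination is exactly the standard derivation behind this statement: the paper offers no proof of its own and simply imports the result from \citet[Theorem 2]{bemporad2002explicit}, whose argument proceeds precisely as you describe (stationarity, complementary slackness on the $\bsigma$-block, solving for the active multipliers via $[\mG\mH^{-1}\mG^\top]_{\bsigma}$, and back-substitution), with the zero-padding bookkeeping and the LICQ/degeneracy caveat you flag being the same subtleties handled in that reference. Your proposal is correct and matches the intended route, so there is nothing to add.
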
 
Using this result, one may pre-compute an efficient lookup structure mapping $\vx \in P_\sigma$ to $K_\sigma, k_\sigma$. However, since every combination of active constraints may  yield a potentially unique feedback law, the number of pieces to be computed may grow \textit{exponentially} in the problem dimension or time horizon. For instance, even the simple two-dimensional toy system visualized in \Cref{fig:explicit_mpc} has  $261$ pieces. As a result, it might be computationally intractable to even merely enumerate or store all pieces of the explicit MPC in high dimensions or over long time horizons. 

\begin{figure}
\begin{center}
    \includegraphics[width=0.5\linewidth]{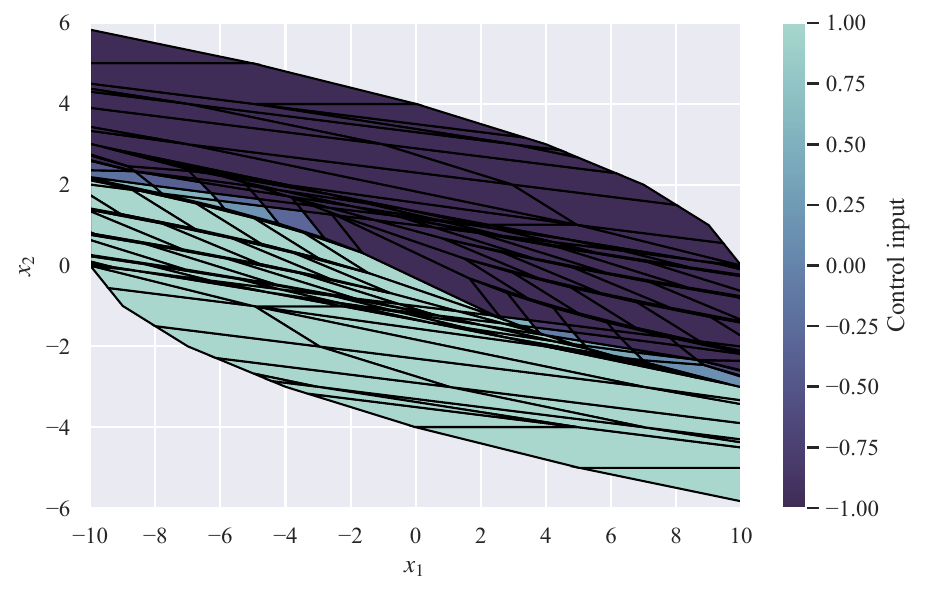}
\end{center}
    \vspace{-15pt}
    \caption{The explicit MPC controller for $A = \begin{bmatrix}1 & 1 \\ 0 & 1\end{bmatrix}, B = \begin{bmatrix}0 \\ 1\end{bmatrix}, Q = I, R = 0.01, \T=10$ with the constraints $\|x\|_\infty \leq 10, |u|\leq 1$. For this simple 2-dimensional system there are $261$ $K_\sigma$. This figure appeared in our previous work \cite{pfrommer2024sample}.}
    \label{fig:explicit_mpc}
\end{figure}

This observation motivates us to consider a learning-based approach. 
{In the spirit of imitation learning discussed in \Cref{sec:introduction}, we} approximate explicit MPC using a polynomial number of sample trajectories, collected offline. We introduce this framework in the next section.

\section{Motivating Smoothness: Imitation Learning Frameworks}\label{sec:learning_guarantees}
In this section, we instantiate  the {imitation learning} framework to motivate our approach {of barrier MPC}.
{We use the Taylor series based imitation learning framework introduced by  \citet{pfrommer2022tasil}, which gives high-probability guarantees on the quality of an approximation.}

\subsection{Taylor Series Imitation Learning}
{We first introduce the setting for imitation learning.}
Suppose we are given an expert controller $\piexpert$, a policy class $\Pi$, a distribution of initial conditions $\mathcal{D}$, and $N$ sample trajectories $\{\vx_{0:K-1}^{(i)}\}_{i=1}^N$ of length $K$, with $\{\vx_0^{(i)}\}_{i=1}^N$ sampled i.i.d from $\mathcal{D}$. As formalized in \Cref{prop:goodness_of_learned_policy}, our goal is to find an approximate policy $\pilearned \in \Pi$ such that given a suitably small accuracy parameter $\epsilon$, the closed-loop states $\widehat{\vx}_t$  and $\vx_t^\star$ induced by $\pilearned$ and $\piexpert$, respectively, satisfy, with high probability over $\vx_0 \sim \mathcal{D}$,
$$\|\widehat{\vx}_t - \vx^\star_t\| \leq \epsilon, \forall t > 0.$$ To understand the sufficient conditions for such a guarantee, we now introduce a few definitions.
We first assume through  \Cref{assumption:closeness_approx_expert} that $\pilearned$ has been chosen by a black-box supervised imitation learning algorithm which, given the input data, produces a $\pilearned \in \Pi$ such that, with high probability over the  distribution induced by $\mathcal{D}$, the policy {and its Jacobian} are close to the expert.

\begin{assumption}\label{assumption:closeness_approx_expert}
For some $\delta \in (0,1), \epsilon_0 > 0, \epsilon_1 > 0$ and given $N$ trajectories $\{\vx_{0:K-1}^{(i)}\}_{i=1}^{(N)}$ of length $K$ sampled i.i.d. from $\mathcal{D}$ and rolled out under $\piexpert$, the approximating policy $\pilearned$ satisfies:
\begin{align*}
    \mathbb{P}_{\vx_0 \sim \mathcal{D}}\bigg[&\sup_{k \geq 0}\|\pilearned(\vx_k) - \piexpert(\vx_k)\| \leq \epsilon_0/N \,\,\, \wedge \,\,\,  \sup_{k \geq 0}\left\|\frac{\partial\pilearned}{\partial \vx}(\vx_k) - \frac{\partial \piexpert}{\partial \vx}(\vx_k)\right\| \leq \epsilon_1/N \bigg] \geq 1 - \delta. 
\end{align*}\label{assum:bounds}
\end{assumption}
As shown in \cite{pfrommer2022tasil}, an example in which  \Cref{assumption:closeness_approx_expert} holds is when $\pilearned$ is chosen as an empirical risk minimizer from a class of twice differentiable parametric functions with $\ell_2$-bounded parameters, e.g., dense neural networks with smooth activation functions and trained with $\ell_2$ weight regularization.  We refer the reader to \cite{pfrommer2022tasil, tu2022sample} for other valid examples of $\Pi$. 
Further, note that the above definition   requires  generalization  on only the state distribution induced by the expert, rather than on the distribution induced by the learned policy, as is the case in \cite{chen2018approximating, ahn2023model}.

Next, we define a weaker variant of the standard \emph{incremental input-to-state stability} ($\delta$ISS) \cite{vosswinkel2020determining} and assume, in \Cref{assum:stable}, that this property holds for the expert policy. 

\begin{definition}[Local Input-to-State Stability with Linear Gain, cf. \cite{pfrommer2022tasil}]\label{def:locIncStab} For all initial conditions $\vx_0 \in \calX$ and bounded sequences of input perturbations $\{\vec{ \Delta}_t\}_{t > 0}$ that satisfy $\|\vec{\Delta}_t\| < \kappa$, let $\overline{\vx}_{t+1} = \fcl{\pi}(\overline{\vx}_t, 0)$, $\overline{\vx}_0 = \vx_0$ be the nominal trajectory, and let $\vx_{t+1} = \fcl{\pi}(\vx_t, \vec{\Delta}_t)$ be the perturbed trajectory. We say that the closed-loop dynamics under $\pi$ is $(\kappa, \gamma)$-locally-incrementally input-to-state stable for linear gain $\gamma$ if $\kappa, \gamma > 0$ and,
\begin{align*}
    \|\vx_t - \overline{\vx}_t\| \leq \gamma \cdot \max_{k < t} \|\vec{\Delta}_k\|, \quad \forall t \geq 0.
\end{align*}
\end{definition}

\begin{assumption}\label{assum:stable}
    The expert policy $\piexpert$ is ($\kappa,\gamma$)-locally incrementally input-to-state stable.
\end{assumption}
As noted in \cite{pfrommer2022tasil}, local incremental input-to-state stability (local $\delta$ISS) is a much weaker criterion than  regular incremental input-to-state stability. We will later show in \Cref{lem:iss_diss} that under mild assumptions even input-to-state stabilizing (ISS)  policies (defined in (\Cref{def:inp_stab})) are  locally $\delta$ISS. There is considerable prior work (see, e.g., \cite{zamani2011lyapunov, pouilly2020stability}) demonstrating that ISS holds under mild conditions for both the explicit MPC and the barrier-based MPC under consideration in this paper. Putting these facts together then implies local $\delta$ISS of barrier MPC.  
Having established some preliminaries for stability, we now move on to the smoothness property we consider.  
\begin{definition}[\cite{pfrommer2022tasil}]\label{def:def_smoothness} We say that an MPC policy $\pi$ is $(L_0, L_1)$-smooth if for all $\vx, \vy \in \X$,
\begin{align*}
    \|\pi(\vx) - \pi(\vy)\| &\leq L_0\|\vx - \vy\|\quad \text{ and } \quad
    \left\|\frac{\partial \pi}{\partial x}(\vx) - \frac{\partial \pi}{\partial x}(\vy)\right\| \leq L_1\|\vx - \vy\|.
\end{align*}
\end{definition}
\begin{assumption}\label{assumption:smoothness_of_exp_and_learned} The expert policy $\piexpert$ and the learned policy $\pilearned$ are both $(L_0, L_1)$-smooth.
\end{assumption}
At a high level,  assuming smoothness of the expert and the learned policy helps implicitly ensure that the learned policy captures the stability of the expert in a neighborhood around the data distribution. If the expert or learned policy were to be only piecewise smooth (as is the case, e.g., with standard MPC-based solution of LQR), a transition from one piece to another in the expert not replicated by the learned policy could  result in unstable closed-loop behavior. 

Having stated all the necessary assumptions, we are now ready to state below  the main export of this section, given by \cite{pfrommer2022tasil}, guaranteeing closeness of the learned and expert policies. 

\begin{fact}[\cite{pfrommer2022tasil}, Corollary A.1]\label{prop:goodness_of_learned_policy} Provided $\piexpert, \pilearned$ are $(L_0,L_1)$-smooth, $(\kappa,\gamma)$-locally incrementally stable, and $\widehat{\pi}$ satisfies \Cref{assum:bounds} with $\frac{\epsilon_0}{N} \leq \min\{\frac{1}{16\gamma^2 L_1}, \frac{1}{16\gamma}, \frac{\kappa}{8\gamma}\}$ and $\frac{\epsilon_1}{N} \leq \frac{1}{4\gamma}$, $\delta > 0$, then with probability $1-\delta$ for $x_0 \sim \mathcal{D}$, we have 
\begin{align*}
    \|\widehat{\vx}_t - \vx^\star_t\| \leq \frac{8 \gamma \epsilon_0}{N} \quad \forall t \geq 0.
\end{align*}
\end{fact}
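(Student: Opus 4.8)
The plan is to view the closed-loop rollout $\widehat{\vx}_t$ of $\pilearned$ not as a trajectory of the learned policy but as a \emph{perturbed} trajectory of the \emph{expert} closed loop, and then to exploit the local incremental input-to-state stability of $\piexpert$ (\Cref{def:locIncStab}, \Cref{assum:stable}). Writing $g := \pilearned - \piexpert$ and $\vec{\Delta}_t := g(\widehat{\vx}_t)$, one has $\widehat{\vx}_{t+1} = \mA\widehat{\vx}_t + \mB\pilearned(\widehat{\vx}_t) = \fcl{\piexpert}(\widehat{\vx}_t, \vec{\Delta}_t)$, i.e.\ $\widehat{\vx}_t$ is exactly the $\piexpert$-rollout driven by the input perturbations $\vec{\Delta}_t$, whose nominal (unperturbed) version is precisely $\vx^\star_t$. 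Hence, once we have verified that $\|\vec{\Delta}_k\| < \kappa$ for all $k$, \Cref{def:locIncStab} immediately gives $\|\widehat{\vx}_t - \vx^\star_t\| \le \gamma \max_{k<t} \|\vec{\Delta}_k\|$ for every $t \ge 0$. Everything then reduces to bounding $\|\vec{\Delta}_t\|$, and the catch — which forces the argument into an induction — is that \Cref{assum:bounds} only controls $g$ and $\nabla g$ on the \emph{expert} states $\vx^\star_k$, whereas $\vec{\Delta}_t$ is evaluated at the \emph{learned} state $\widehat{\vx}_t$.

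Let $e_t := \|\widehat{\vx}_t - \vx^\star_t\|$. To transport the bounds from $\vx^\star_t$ to $\widehat{\vx}_t$ I would Taylor-expand $g$ about $\vx^\star_t$: by \Cref{assumption:smoothness_of_exp_and_learned} both $\piexpert$ and $\pilearned$ have $L_1$-Lipschitz Jacobians, so $\nabla g$ is $2L_1$-Lipschitz, and $\|\nabla g(\vx^\star_t)\| \le \epsilon_1/N$ by \Cref{assum:bounds}; integrating $\nabla g$ along the segment $[\vx^\star_t, \widehat{\vx}_t]$ then gives $\|g(\widehat{\vx}_t) - g(\vx^\star_t)\| \le (\epsilon_1/N)\, e_t + L_1 e_t^2$. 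Combining this with $\|g(\vx^\star_t)\| \le \epsilon_0/N$ (again \Cref{assum:bounds}) and the triangle inequality produces the per-step estimate
\[
\|\vec{\Delta}_t\| \;\le\; \frac{\epsilon_0}{N} \;+\; \frac{\epsilon_1}{N}\, e_t \;+\; L_1 e_t^2 .
\]

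It remains to close the induction with hypothesis $e_k \le 8\gamma\epsilon_0/N$ for all $k \le t-1$ (base case $e_0 = 0$, since $\widehat{\vx}_0 = \vx^\star_0$). Under the hypothesis, the per-step estimate together with the budget conditions $\epsilon_1/N \le 1/(4\gamma)$ and $\epsilon_0/N \le 1/(16\gamma^2 L_1)$ bounds the linear term by $2\epsilon_0/N$ and the quadratic term by $4\epsilon_0/N$, so $\|\vec{\Delta}_k\| \le 7\epsilon_0/N$ for $k<t$; the conditions $\epsilon_0/N \le \kappa/(8\gamma)$ and $\epsilon_0/N \le 1/(16\gamma)$ then guarantee both $\|\vec{\Delta}_k\| < \kappa$ (so that \Cref{def:locIncStab} applies) and $8\gamma\epsilon_0/N \le 1/2$ (so the rollouts stay in the region where the smoothness and stability estimates hold). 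Feeding this back, $e_t \le \gamma \max_{k<t}\|\vec{\Delta}_k\| \le 7\gamma\epsilon_0/N \le 8\gamma\epsilon_0/N$, which closes the induction and is exactly the asserted bound. The main obstacle is precisely this bootstrap: one is bounding $e_t$ by a quantity that already involves $e_{t-1}$ both linearly and quadratically, and one has to check that the region $\{e \le 8\gamma\epsilon_0/N\}$ is forward-invariant for the recursion $e \mapsto \gamma(\epsilon_0/N + (\epsilon_1/N)\, e + L_1 e^2)$ — the three smallness conditions on $\epsilon_0/N$ and $\epsilon_1/N$ in the statement are calibrated precisely so that the linear coefficient is $\le 1/4$ and the quadratic term contributes at most $\tfrac12 e$ over this region, leaving enough slack above the forcing term $\gamma\epsilon_0/N$.
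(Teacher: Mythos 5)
The paper does not prove this statement at all: it is imported as a Fact from \cite{pfrommer2022tasil} (Corollary A.1), so there is no in-paper proof to compare against. Your reconstruction --- viewing $\widehat{\vx}_t$ as a perturbed expert rollout with $\vec{\Delta}_t = \pilearned(\widehat{\vx}_t)-\piexpert(\widehat{\vx}_t)$, invoking local $\delta$ISS, transporting the bounds of \Cref{assum:bounds} from $\vx^\star_t$ to $\widehat{\vx}_t$ via the $2L_1$-Lipschitz Jacobian of the policy difference, and closing the bootstrap induction $e_t \le 8\gamma\epsilon_0/N$ with the stated smallness conditions (per-step perturbation at most $7\epsilon_0/N$) --- is precisely the TaSIL-style argument of the cited source, and your constant bookkeeping is correct.
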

The upshot of this result is that provided the MPC policy  $\piexpert$ is $(L_0, L_1)$-smooth, to match the trajectory of $\piexpert$ with high probability,  {we  need to match the Jacobian and value of $\piexpert$ on \emph{only} $N  K$ pieces.} 
This is in contrast to prior work such as \cite{maddalena2020neural, karg2020efficient, chen2018approximating} on approximating explicit MPC, which require sampling new control inputs during training (in a reinforcement learning-like fashion) or post-training verification of the stability properties of the network.

However, as we noted in \Cref{sec:introduction}, these strong guarantees crucially require a smooth expert controller. We investigate two approaches for smoothing $\pimpc$: randomized smoothing and barrier MPC. Before doing so, we first consider what constitutes an ``optimal'' smoothing approach in terms of the smallest possible Hessian norm for a given level of approximation error.

\subsection{An Overview of Optimal Smoothing}

We begin by considering the properties of a general smoothing algorithm. For simplicity, in this section, we consider smoothing functions of the form $f: \R \to \R$, although we note that  this analysis can easily be extended to $f: \R^n \to \R^m$ by considering arbitrary paths $\R \to \R^n$ and projection $\R^m \to \R$. This motivates the following definition of a smoothing algorithm.

\begin{definition}[$\epsilon$-Smoothing Algorithm] Let $\epsilon > 0$. An $\epsilon$-smoothing algorithm $\mathcal{S}$ for a function class $\mathcal{F}$ is a map $\mathcal{S}: \mathcal{F} \to C^1$ where $C^1$ is the class of functions $\R \to \R$ with continuous derivatives. Furthermore $\mathcal{S}$ satisfies, 
\begin{align*}
    \sup_{x} \left\|\mathcal{S}(f(x)) - f(x)\right\| \leq \epsilon \quad \forall f \in \mathcal{F}.
\end{align*}
\end{definition}
\noindent Analogously, we define a general smoothing algorithm which can smooth functions for arbitrary $\epsilon$.
\begin{definition}[Smoothing Algorithm] A general smoothing algorithm $\mathcal{S}$ for a function class $\mathcal{F}$ is a map $\mathcal{S}: \R \times \mathcal{F} \to C^1$ where $\mathcal{S}(\epsilon, \cdot)$ is an $\epsilon$-smoothing algorithm.
\end{definition}

It is known that for any $\epsilon$-smoothing algorithm $\mathcal{S}$ for the class of $L$-Lipschitz functions (which we denote by $\mathcal{L}_L$), there exists $f \in \mathcal{L}_L$ such that the derivative of $g := \mathcal{S}(f)$ has Lipschitz constant at least $\mathcal{O}(\frac{1}{\epsilon})$. For in-depth treatment of the subject under a more general setting, we direct the reader to \citet{kornowski2021oracle} and \citet{beck2012smoothing}. A simple example of such a function for which this bound holds is the scaled absolute value function $x \to C|x|$.

\begin{lemma}[\cite{kornowski2021oracle}, Lemma 30]\label{lem:hes_bound_abs} Let $\mathcal{S}$ be any $\epsilon$-smoothing algorithm $\mathcal{S}: \mathcal{L}_L \to C^1$ for $L, \epsilon > 0$ and let $f(x) := L|x|$, $g(x) := \mathcal{S}(f)$. Then there exists $x, y \in \R$ such that,
\begin{align*}
    |\nabla g(x) - \nabla g(y)| \geq \frac{L^2}{9\epsilon}|x - y|.
\end{align*}
\end{lemma}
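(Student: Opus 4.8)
The plan is to lower-bound the worst-case Lipschitz constant of $\nabla g$ by a direct averaging/mean-value argument, exploiting only the uniform approximation guarantee $\sup_x |g(x) - L|x|| \le \epsilon$. First I would fix a length scale; the natural choice is to look at the interval $[-t, t]$ with $t = 3\epsilon/L$, chosen so that the ``signal'' $L|x|$ changes by an amount comparable to a few multiples of $\epsilon$ across the interval, which forces $g$ to have a large slope somewhere while the $\epsilon$-error budget cannot cancel it out.

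The key steps, in order: (1) Use the approximation bound at the endpoints and at the origin: $g(t) \ge Lt - \epsilon$, $g(-t) \ge Lt - \epsilon$, and $g(0) \le \epsilon$. (2) Apply the mean value theorem on $[0,t]$: there exists $x \in (0,t)$ with $\nabla g(x) = \frac{g(t) - g(0)}{t} \ge \frac{Lt - 2\epsilon}{t}$. Similarly on $[-t,0]$ there exists $y \in (-t,0)$ with $\nabla g(y) = \frac{g(0) - g(-t)}{t} \le \frac{2\epsilon - Lt}{t} = -\frac{Lt - 2\epsilon}{t}$. (3) Subtract: $|\nabla g(x) - \nabla g(y)| \ge \frac{2(Lt - 2\epsilon)}{t}$. (4) Plug in $t = 3\epsilon/L$: the numerator becomes $2(3\epsilon - 2\epsilon) = 2\epsilon$ and $|x - y| \le 2t = 6\epsilon/L$, so
\begin{align*}
\frac{|\nabla g(x) - \nabla g(y)|}{|x-y|} \ge \frac{2(Lt - 2\epsilon)/t}{2t} = \frac{Lt - 2\epsilon}{t^2} = \frac{\epsilon}{9\epsilon^2/L^2} = \frac{L^2}{9\epsilon},
\end{align*}
which is exactly the claimed bound (one has to be slightly careful that $|x-y|$ could be smaller than $2t$, but since the inequality we want is of the form $|\nabla g(x) - \nabla g(y)| \ge \frac{L^2}{9\epsilon}|x-y|$, a smaller $|x-y|$ only helps, so I would instead bound $|\nabla g(x) - \nabla g(y)| \ge \frac{2(Lt-2\epsilon)}{t}$ and separately $|x - y| \le 2t$, then combine).

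The main obstacle — really the only subtle point — is the bookkeeping of constants and making sure the chosen scale $t$ makes $Lt - 2\epsilon$ strictly positive and large enough; with $t = 3\epsilon/L$ we get $Lt - 2\epsilon = \epsilon > 0$, which works cleanly. A secondary point is that $g \in C^1$ only guarantees a continuous (not Lipschitz) derivative a priori, so the statement should be read as: \emph{if} $\nabla g$ were $M$-Lipschitz then $M \ge L^2/(9\epsilon)$; the mean-value argument above produces explicit points $x,y$ witnessing this, so no extra regularity is needed. I would present the argument exactly as the four steps above, since it is short and self-contained and matches the cited \cite{kornowski2021oracle} Lemma 30.
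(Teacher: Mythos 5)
Your proposal is correct and follows essentially the same argument as the paper: the same three evaluation points $0,\pm 3\epsilon/L$, the same use of the $\epsilon$-approximation guarantee to force $g(\pm 3\epsilon/L)-g(0)\geq \epsilon$, the same mean value theorem step yielding slopes of magnitude at least $L/3$ with opposite signs, and the same final combination with $|x-y|\leq 6\epsilon/L$. Your parenthetical handling of the case $|x-y|<2t$ (bounding the gradient gap and the distance separately rather than dividing) is exactly how the paper closes the argument as well.
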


\begin{proof}Consider the value of $g(x)$ at $x = -\frac{3\epsilon}{L}$, $x= 0$, and $x= \frac{3\epsilon}{L}$. Since $\mathcal{S}$ is an $\epsilon$-smoothing algorithm and $f(-\frac{3\epsilon}{L}) = f(\frac{3\epsilon}{L}) = 3\epsilon$ and $f(0) = 0$, we can conclude that $g(-\frac{3\epsilon}{L}), g(\frac{3\epsilon}{L}) > 2\epsilon$, $g(0) < \epsilon$. This implies that $g(\frac{3\epsilon}{L}) - g(0) \geq \epsilon$ and $g(0) - g(-\frac{3\epsilon}{L}) \leq -\epsilon$. By the mean value theorem, there exist points $y \in [-\frac{3\epsilon}{L}, 0]$ and $x \in [0, \frac{3\epsilon}{L}]$ such that,
\begin{align*}
    \nabla g(y) < - \frac{L}{3}, \nabla g(x) > \frac{L}{3}.
\end{align*}
Note that $|x - y| \leq \frac{6\epsilon}{L}$ or that $\frac{L^2}{9\epsilon}|x - y| \leq \frac{2L}{3}$. Therefore, we may complete the proof by  noting,
\begin{align*}
    |\nabla g(x) - \nabla g(y)| \geq \frac{2L}{3} \geq \frac{L^2}{9\epsilon}|x -y|.
\end{align*}
\end{proof}

The above result suggests an inherent tradeoff between the approximation error $\epsilon$ and the Lipschitzness of the derivative of the smoothed function. Using intuition from the above result, we now state a more general result for arbitrary piecewise twice differentiable functions where the derivatives at the  boundaries of the pieces do not necessarily match.

\begin{lemma}\label{thm:best_smooth_bound}Let $f: \R \to \R$ be a function with piecewise continuous derivatives. Let $c \in \R$ be a point such that $\lim_{x \to c^-} \nabla f(x) = a$ and $\lim_{x \to c^+} \nabla f(x) = b$ where $a \neq b$ (i.e. the derivative is discontinuous). Then for sufficiently small $\epsilon$ and any $\epsilon$-smoothing algorithm $\mathcal{S}$, we have that for $g := \mathcal{S}(f)$ there exist $x, y$ such that,
\begin{align*}
    |\nabla g(x) - \nabla g(y)| \geq \frac{|a - b|^2}{144\epsilon}|x - y|.
\end{align*}
\end{lemma}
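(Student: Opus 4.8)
The plan is to re-run the mean-value-theorem argument of \Cref{lem:hes_bound_abs}, but localized at the kink $c$ and measured against the secant line through $(c,f(c))$ of slope $m:=\tfrac{a+b}{2}$ rather than against the horizontal axis. First I would record the local behaviour of $f$ at $c$: since $f$ is continuous and $C^1$ on one-sided punctured neighbourhoods of $c$ with $\nabla f\to b$ from the right and $\nabla f\to a$ from the left, the mean value theorem gives $\tfrac{f(c+h)-f(c)}{h}\to b$ and $\tfrac{f(c)-f(c-h)}{h}\to a$ as $h\to0^+$, so that, writing $\ell(x):=f(c)+m(x-c)$ for the secant line, $f(c+h)-\ell(c+h)=\tfrac{b-a}{2}h+o(h)$ and $f(c-h)-\ell(c-h)=\tfrac{b-a}{2}h+o(h)$ as $h\to0^+$. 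In words, near $c$ the graph of $f$ departs from the line $\ell$ by roughly $\tfrac{|a-b|}{2}\,|x-c|$ on \emph{both} sides, and always to the same side of $\ell$ — exactly the local picture of a scaled absolute value, which is what lets us reuse the proof of \Cref{lem:hes_bound_abs}.

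Next I would set $L:=\tfrac{|a-b|}{2}$ and probe radius $t:=\tfrac{3\epsilon}{L}$; here ``$\epsilon$ sufficiently small'' means small enough (depending on $f$) that the $o(t)$ remainder above has magnitude at most $\tfrac{\epsilon}{2}$, which is possible because $t\to0$ as $\epsilon\to0$. Assume first $b>a$ (the case $b<a$ is the mirror image, with the one-sided inequalities below reversed). Using $\sup_x|g(x)-f(x)|\le\epsilon$ for $g:=\mathcal{S}(f)$ together with the expansion, one gets $g(c\pm t)-\ell(c\pm t)\ge Lt-\tfrac{3\epsilon}{2}=\tfrac{3\epsilon}{2}$ while $g(c)-\ell(c)\le\epsilon$. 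Subtracting and using $\ell(c\pm t)-\ell(c)=\pm mt$, the forward and backward difference quotients of $g$ across $c$ satisfy $\tfrac{g(c+t)-g(c)}{t}\ge m+\tfrac{\epsilon/2}{t}=m+\tfrac{L}{6}$ and $\tfrac{g(c)-g(c-t)}{t}\le m-\tfrac{L}{6}$. Since $g\in C^1$, the mean value theorem yields $x\in(c,c+t)$ and $y\in(c-t,c)$ with $\nabla g(x)\ge m+\tfrac{L}{6}$ and $\nabla g(y)\le m-\tfrac{L}{6}$, hence $|\nabla g(x)-\nabla g(y)|\ge\tfrac{L}{3}=\tfrac{|a-b|}{6}$. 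Finally $|x-y|<2t=\tfrac{12\epsilon}{|a-b|}$, so $\tfrac{|a-b|^2}{144\epsilon}|x-y|<\tfrac{|a-b|}{12}\le\tfrac{|a-b|}{6}\le|\nabla g(x)-\nabla g(y)|$, which is the claimed inequality — in fact with room to spare, since the same computation gives the constant $\tfrac{|a-b|^2}{72\epsilon}$.

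The only genuinely delicate point is the $o(h)$ term and the resulting ``sufficiently small $\epsilon$'' caveat: unlike the exact function $L|x|$ in \Cref{lem:hes_bound_abs}, here $f$ agrees with a scaled absolute value only to first order near $c$, so one must tie the probe radius $t$ to $\epsilon$ and argue the higher-order part is negligible once $\epsilon$ (hence $t$) is small — this is exactly where the threshold's dependence on $f$ enters, and why no uniform constant can work. The sign case distinction ($b>a$ versus $b<a$) is the other bookkeeping detail, but the two cases are symmetric; everything else is the same elementary three-point/mean-value argument already used for \Cref{lem:hes_bound_abs}.
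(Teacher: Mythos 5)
Your proof is correct and follows essentially the same route as the paper's: recenter by the average slope $\tfrac{a+b}{2}$, reduce to the local kink behaviour at $c$, and run the three-point/mean-value argument of \Cref{lem:hes_bound_abs}. The only difference is bookkeeping — the paper bounds $|\nabla f|\ge \tfrac{|a-b|}{4}$ on a fixed one-sided $\delta$-neighbourhood (hence the condition $\epsilon\le\tfrac{|a-b|}{12}\delta$), whereas you absorb the $o(h)$ remainder of the first-order expansion by taking $\epsilon$ small, which even yields a slightly better constant.
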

\begin{proof}Without loss of generality, we can shift $f$ so that $c = 0, f(0) = 0$. Similarly, we can also subtract off $\frac{(a + b)}{2}x$ from both $f$ and $g$ as well as transform $f(x) \to f(-x), g(x) \to g(-x)$ such that $\lim_{x \to 0^-} \nabla f(x) = -\frac{|a - b|}{2}, \lim_{x \to 0^+} \nabla f(x) = \frac{|a - b|}{2}$. Let $d := \frac{|a - b|}{2}$.

Since $f$ has piecewise continuous derivative, by definition there exists some radius $\delta > 0$ around $0$ such that $f$ is differentiable on $(-\delta, 0)$ and $(0, \delta)$ and that $\nabla f(x) \leq -\frac{d}{2}$ for $x \in (-\delta,0)$ and $\nabla f(x) \geq \frac{d}{2}$ for $x \in (0, \delta)$. We can therefore lower bound,
\begin{align*}
    f(x) \geq -\frac{d}{2}x \quad \forall x \in (-\delta, 0), \quad \quad f(x) \geq \frac{d}{2}x \quad \forall x \in (0,\delta).
\end{align*}
Similar to \Cref{lem:hes_bound_abs}, we note that for $\epsilon \leq \frac{d}{6}\delta$, $f(\frac{6\epsilon}{d}), f(\frac{6\epsilon}{d}) > 3\epsilon$. Since $f(0) = 0$, $g(0) \leq \epsilon$ and therefore $g(0) - g(\frac{6\epsilon}{d}) < -\epsilon, g(\frac{6\epsilon}{d}) - g(0) \geq \epsilon$. Therefore for some $x, y \in \R$ such that $|x - y| \leq \frac{12\epsilon}{d}$, we have that,
\begin{align*}
    |\nabla g(x) - \nabla g(y)| \geq \frac{d}{3} \geq \frac{d^2}{36\epsilon}|x - y| = \frac{|a - b|^2}{144\epsilon}|x - y|.
\end{align*}
\end{proof}

The above result suggests that the derivative of an $\epsilon$-smoothed function has a Lipschitz constant lower bounded by the square of the ``discontinuity'' in the derivatives times the inverse of the largest approximation error. If the smoothed function is twice differentiable, this is equivalent to a lower bound on the Hessian.

Guided by the above results, we now state our definition for an ``optimal smoothing'' algorithm, which is a smoothing algorithm such that the above bound is tight, up to a constant. For simplicity, we define optimal smoothing only for $L$-Lipschitz functions.
\begin{definition}\label{def:optimal_smooth}A smoothing algorithm $\mathcal{S}: \mathcal{R} \times \mathcal{F} \to C^1$ for a function class $\mathcal{F}$ is worst-case optimal up to a constant if there exists $C > 0$ such that, for any sufficiently small $\epsilon > 0$, $L > 0$, and $L$-Lipschitz function $f \in \mathcal{L}_L \subset \mathcal{F}$, the following inequality holds with $g := \mathcal{S}(\epsilon, f)$,
\begin{align*}
    \|\nabla g(x) - \nabla g(y)\| \leq C\frac{L^2}{\epsilon} \|x - y\|.
\end{align*}
\end{definition}
Note that, by the above lemmas, an algorithm satisfying \Cref{def:optimal_smooth} yields smoothed functions where the bound on the Hessian  is at most a constant factor worse than the best possible bound for Lipschitz functions. Since the explicit MPC is always Lipschitz, for our purposes we will simply refer to smoothing algorithms satisfying \Cref{def:optimal_smooth} as ``optimal smoothers''.

In the next two sections, we answer the question of whether an optimal smoothing algorithm can preserve the stability of an explicit MPC controller. We will show that while randomized smoothing is an optimal smoother, there exist systems for which randomized smoothing does not preserve the stability of the system. We will then introduce barrier MPC and prove that barrier MPC is an optimal smoother along a certain direction.

\begin{figure*}
    \centering
    \includegraphics[width=1\linewidth]{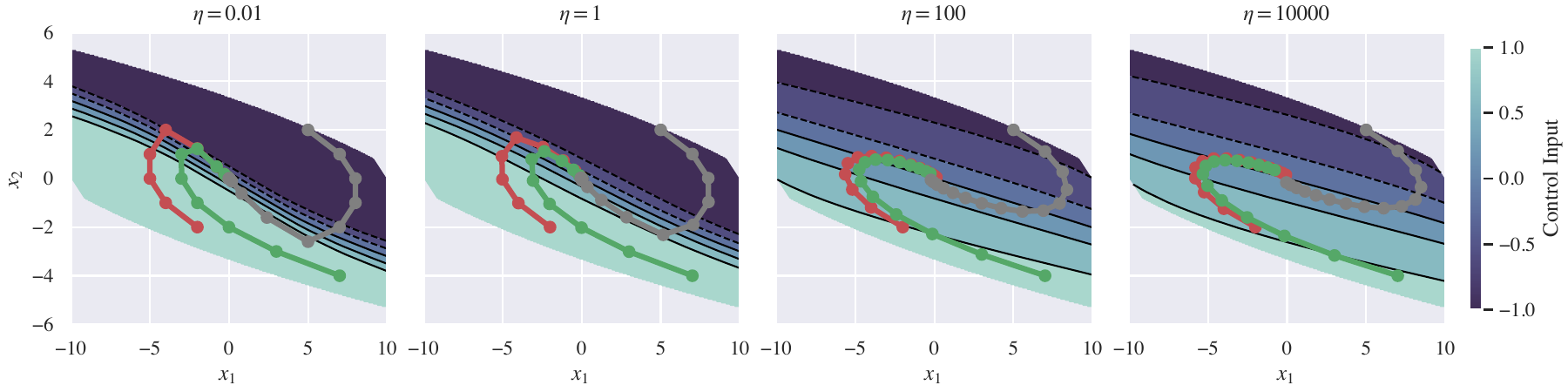}
    \caption{Visualizations of the log-barrier MPC control policy and several trajectories for the same system as \Cref{fig:explicit_mpc} and different choices of $\eta$. This figure appeared in our previous work \cite{pfrommer2024sample}.}
    \label{fig:smoothing_contours}
\end{figure*}

\subsection{First Approach: Randomized Smoothing}\label{sec:randomized_smoothing}
We first consider randomized smoothing (see, {e.g.}, \cite{duchi2012randomized}) as a baseline approach for smoothing the expert policy $\piexpert$.  Here, the imitator is learned with a loss function that randomly samples with 
noise drawn from a chosen probability distribution in order to smooth the policy, effectively convolving the controller with a smoothing kernel. This approach corresponds to the following controller.
\begin{definition}[Randomized Smoothed MPC]\label{def:rand_smoo} Given a control policy $\pimpc$ of the form \cref{eq:pi_mpc}, a desired zero-mean noise distribution $\mathcal{P}$, and a smoothing parameter $\sigma > 0$, the randomized-smoothing based MPC, $\pirs$, is defined as:
\begin{align*}
    \pirs(\vx) &:= \E_{\vw \sim \mathcal{P}}[\pimpc(\vx+ \sigma\vw)].
\end{align*}
\end{definition}

The distribution $\mathcal{P}$ in \Cref{def:rand_smoo} is usually chosen such that the following guarantees on error and smoothness hold. 

\begin{fact}[{\citet[Appendix E]{duchi2012randomized}}] \label{thm:randomized_bounds}
For control policy $\pimpc: \calX \to \calU$, $\calX \subset \R^{d_x}, \calU \subset \R^{d_u}$ and  $\mathcal{P} \in  \{\mathrm{Unif}(B_{\ell_2}(1)), \mathrm{Unif}(B_{\ell_\infty}(1)), \,\mathcal{N}(0,I)\}$,  there exist constants $C_0, C_1>0$ that depend on $d_x$. Let $L$ be the Lipschitz constant of the given control policy $\pimpc$. Then, for any smoothing parameter $\sigma > 0$, the associated $\pirs$ satisfies,
\begin{align*}
    \|\pirs(\vx) - \pimpc(\vx)\| &\leq C_0 \sigma &\forall \vx \in \X, \\
    \|\nabla \pirs(\vx) - \nabla \pirs(\vy)\| &\leq \frac{C_1 L^2}{\sigma}\|\vx - \vy\|  &\forall \vx,\vy \in \X.
\end{align*} 
This implies that randomized smoothing is an optimal smoother for the given choices of $\mathcal{P}$.
\end{fact}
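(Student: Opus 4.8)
The plan is to prove the two displayed bounds separately, treating all three choices of $\mathcal{P}$ through a single ``move the derivative onto the kernel'' idea but with the tool appropriate to each kernel. For the \emph{error bound}, I would write $\pirs(\vx) - \pimpc(\vx) = \E_{\vw \sim \mathcal{P}}[\pimpc(\vx + \sigma\vw) - \pimpc(\vx)]$, apply Jensen's inequality to pull the norm inside the expectation, and then use $L$-Lipschitzness of $\pimpc$ to get $\|\pirs(\vx) - \pimpc(\vx)\| \le \sigma L\,\E_{\vw \sim \mathcal{P}}\|\vw\|$. It then remains to bound $\E_{\vw \sim \mathcal{P}}\|\vw\|$: this is $\le 1$ for $\mathrm{Unif}(B_{\ell_2}(1))$ and $O(\sqrt{d_x})$ for both $\mathrm{Unif}(B_{\ell_\infty}(1))$ and $\mathcal{N}(0,I)$, so absorbing $L$ and this dimensional factor into $C_0$ gives $\|\pirs(\vx)-\pimpc(\vx)\|\le C_0\sigma$. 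One subtlety here is that $\pimpc$ is only defined on $\calX$, so I would first extend it to a globally $L$-Lipschitz map on $\R^{d_x}$ via Kirszbraun's theorem (or equivalently restrict attention to states sufficiently far inside $\calX$), which is implicit in the cited statement.

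The \emph{smoothness bound} is the crux. For $\mathcal{P} = \mathrm{Unif}(B_{\ell_2}(\sigma))$ I would write $\pirs(\vx) = \frac{1}{\mathrm{vol}(\sigma B)}\int_{B(\vx,\sigma)}\pimpc(\vz)\,d\vz$ and apply the divergence theorem (valid since $\pimpc$ is Lipschitz, hence in $W^{1,\infty}$) to obtain $\nabla\pirs(\vx) = \frac{1}{\mathrm{vol}(\sigma B)}\int_{\partial B(\vx,\sigma)}\pimpc(\vz)\,\nu(\vz)\,dS(\vz)$, where $\nu$ is the outward unit normal. Subtracting the same identity at $\vy$, pulling the norm inside the surface integral, and using $L$-Lipschitzness of $\pimpc$ gives $\|\nabla\pirs(\vx) - \nabla\pirs(\vy)\| \le \frac{\mathrm{area}(\partial(\sigma B))}{\mathrm{vol}(\sigma B)}\,L\,\|\vx - \vy\| = \frac{d_x}{\sigma}L\,\|\vx - \vy\|$; the $\ell_\infty$-ball case is identical using the Gauss--Green identity on a cube, whose surface-to-volume ratio is again $d_x/\sigma$. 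For $\mathcal{P} = \mathcal{N}(0,I)$ I would instead invoke Stein's identity to write $\nabla\pirs(\vx) = \frac{1}{\sigma}\E_{\vw}[\pimpc(\vx+\sigma\vw)\,\vw]$, so that $\nabla\pirs(\vx) - \nabla\pirs(\vy) = \frac{1}{\sigma}\E_\vw[(\pimpc(\vx+\sigma\vw) - \pimpc(\vy+\sigma\vw))\,\vw]$, and Cauchy--Schwarz plus Lipschitzness yields $\le \frac{L}{\sigma}\|\vx-\vy\|\,\E\|\vw\| = O(\tfrac{L\sqrt{d_x}}{\sigma})\|\vx-\vy\|$. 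Collecting constants into $C_1$ gives the stated inequality; the $L^2$ that appears in the Fact (versus the single $L$ I get above) comes from re-expressing the error tolerance as $\epsilon = C_0\sigma$, which converts one factor of $L/\sigma$ into $L^2/\epsilon$.

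To conclude that randomized smoothing is an \emph{optimal smoother} in the sense of \Cref{def:optimal_smooth}, I would simply combine the two displays: the error bound shows $\pirs$ is an $O(\sigma)$-smoothing algorithm, and the smoothness bound shows its gradient is $O(L^2/\sigma)$-Lipschitz after substituting $\sigma = \Theta(\epsilon)$, which matches the general lower bound of \Cref{thm:best_smooth_bound} (attained, up to constants, by $f = L|x|$ with derivative gap $\Theta(L)$) up to absolute constants.

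The step I expect to be the main obstacle is justifying the smoothness bound for a merely Lipschitz (hence possibly nondifferentiable) $\pimpc$: one must show that $\pirs$ is genuinely $C^1$ and that the ``differentiate under the integral'' step is legitimate. For the uniform-on-a-ball and uniform-on-a-cube kernels the surface-integral representation sidesteps this cleanly, since the integrand there need only be continuous (or $L^1$) on the boundary, and continuity of $\nabla\pirs$ then follows from dominated convergence; for the Gaussian kernel one instead needs a mollification argument (or Rademacher's theorem together with dominated convergence using the envelope $\|\vw\|e^{-\|\vw\|^2/2}$). Tracking the dimension dependence of $C_0$ and $C_1$ cleanly through all of these manipulations — so that the final constants depend only on $d_x$ and not on $L$, $\sigma$, or the geometry of $\calX$ — is the part that requires the most care.
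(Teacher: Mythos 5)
The paper offers no proof of this statement: it is imported as a Fact from \citet[Appendix E]{duchi2012randomized}, so there is no internal argument to compare against. Your reconstruction is correct and is essentially the standard argument behind that reference: Jensen plus Lipschitzness for the error bound, the surface-integral (divergence/Gauss--Green) representation of $\nabla\pirs$ for the two uniform kernels, Stein's identity for the Gaussian, and a Kirszbraun extension to make sense of $\pimpc$ off $\calX$; your handling of the differentiate-under-the-integral subtlety for a merely Lipschitz $\pimpc$ is also the right one. Two bookkeeping remarks. First, your error bound reads $\|\pirs(\vx)-\pimpc(\vx)\|\le L\,\E\|\vw\|\,\sigma$, so $C_0$ unavoidably absorbs $L$ as well as the dimension factor; the Fact's phrasing that the constants ``depend on $d_x$'' is loose on this point, and your reading is the sensible one. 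Second, your gradient bound is of order $d_x L/\sigma$, not $L^2/\sigma$; as you note, the two coincide only after reparametrizing by the achieved error $\epsilon \asymp L\sigma$, which is exactly the substitution needed to verify \Cref{def:optimal_smooth}, so your optimality conclusion is sound --- though it is cleaner to state it in the $\epsilon$-parametrization, since a bound $C_1L^2/\sigma$ with $C_1$ independent of $L$ is not literally implied by an $O(L/\sigma)$ bound when $L<1$.
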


However, using randomized smoothing to obtain a smoothed policy has three key disadvantages. First, $ \E_{\vw \sim \mathcal{P}}[\pimpc(\vx+ \epsilon \vw)]$ is evaluated via sampling, which means the expert policy must be continuously re-evaluated during training in order to guarantee convergence to the smoothed policy. Secondly, smoothing in this manner may cause $\pirs$ to violate state constraints. Finally, simply smoothing the policy may not preserve the stability of $\pimpc$. 
The first two stated problems arise due to randomized smoothing \emph{oversmoothing} the underlying controller. Consider the following example.
\begin{example}Consider the system $f(x_t,u_t) = 2x_t + u_t$ and controller $\pi^\star(x) = \min(\max(-2x, -1), 1)$. We can see that as $\sigma \to \infty$ (where $\sigma$ is the smoothing parameter from \Cref{def:rand_smoo}), we have $\pi^{\mathrm{rs}}(x) \to 0$ for all $x$.
\end{example}
The above example shows that $\pi^{\mathrm{rs}}$ is not stable for high $\sigma$. Ideally, we would like to  aggressively smooth only the discontinuities that do not affect stability or constraint guarantees. This requires a smoothing technique that is aware of when more aggressive control inputs are being taken in order to more quickly stabilize versus preserve some constraint guarantees.
As we shall show,  barrier MPC is precisely one such method that also preserves state guarantees. We now define barrier MPC and bound  the approximation error and  smoothness of the resulting controller.

\section{Our Approach to Smoothing: Barrier MPC}\label{sec:barrier_mpc_all}
Having described the  guarantees obtained via randomized smoothing, we now consider smoothing via self-concordant {barrier} functions, a notion introduced by \citet{nesterov1994interior}. 

\begin{definition}[\citet{nesterov1994interior}]\label{def:sc_and_scb}
A convex, thrice differentiable function $\phi:\calQ\mapsto\R$ is a $\nu$-self-concordant barrier on an open convex set $\calQ\subseteq \R^n$ if the following conditions hold. 
\begin{enumerate}[label=(\roman*)]
\compresslist{
\item For all sequences $\vx_i\in \calQ$ converging to the boundary of $\calQ$, we have  $\lim_{i\rightarrow\infty}\phi(\vx_i)\rightarrow\infty$.
\item For all  $\vx\in \calQ$ and  $\vh\in \R^n$, we have the bound $\lvert \mathcal{D}^3 f(\vx)[\vh, \vh, \vh]\rvert \leq 2(\mathcal{D}^2 f(\vx)[\vh, \vh])^{3/2}, $ where $\mathcal{D}^k f(\vx)[\vh_1, \dotsc, \vh_k]$ is the $k$-th derivative of $f$ at $\vx$ along  directions $\vh_1,\dotsc,\vh_k$,
\item For all $\vx\in\calQ$, we have $\nabla f(\vx)^\top (\nabla^2 f(\vx))^{-1} \nabla f(\vx) \leq \nu$. The parameter $\nu$ satisfies $\nu\geq 1.$
}
\end{enumerate}
\end{definition}
The self-concordance property essentially says that locally, the Hessian does not change too fast --- it has therefore proven extremely useful in interior-point methods to design fast algorithms for (constrained) convex programming~\cite{zbMATH03301975, karmarkar1984new} and has also found use in model-predictive control \cite{wills2004barrier, feller2013barrier, feller2014barrier, feller2015weight, feller2016relaxed} in order to ensure strict feasibility of the control inputs. 

In this paper, we consider {barrier MPC} as a naturally smooth alternative to randomized smoothing of \Cref{eq:pi_mpc}. In barrier  MPC, the inequality constraints occurring in the  optimal control
problem are eliminated by incorporating them into the cost
function via suitably scaled barrier terms. We work only with the log-barrier, which turns a  constraint $f(\vx)\geq 0$ into the term $-\eta \log(f(\vx))$ in the minimization objective and is the standard choice of barrier on polytopes~\cite{nesterov1994interior}. 
Concretely, starting from  \Cref{eq:reformulated}, the following is our barrier MPC. 

\begin{problem}[Barrier MPC]\label[prob]{def:barr_mpc_formal} Given an MPC as in \Cref{eq:reformulated} and  weight $\eta > 0$, the barrier MPC is defined by minimizing, over the input sequence $\vu\in\R^{\T\cdot d_u}$, the cost
\[ 
\begin{array}{ll}
\vspace{0.5em}
\calV^\eta(\vx_0, \vu)&:= \tfrac{1}{2}\vu^\top \mH \vu  - \vx_0^\top \mF \vu - \eta \left[\mathbf{1}^\top\log(\phi(\vx_0, \vu)) - \vd^\top \vu \right],
\end{array}\numberthis\label{eq:barrierMPC}\] where $\phi(\vx_0, \vu)= \mP \vx_0 + \vw - \mG \vu\in \R^{\nconstr}$
is the (vector) residual of constraints for $\vx_0$ and $\vu$, and the vector $\vd$ is set to  $\vd :=\nabla_{\vu} \sum_{i=1}^m\log(\phi_i(0, \vu)) \vert_{\vu=0}$. We denote by  $\vueta(\vx_0)$  the minimizer of \Cref{eq:barrierMPC} for a given $\vx_0$ and by   $\pibmpc(\vx) := \argmin_{\vu_0} \min_{\vu_{1:\T-1}} \calV^\eta(\vx, \vu)$ the associated control policy.
\end{problem}
Some remarks are in order. First, the choice of $\vd$ in \Cref{def:barr_mpc_formal} is made so as to ensure that $\argmin_{\vueta} \calV^\eta(0, \vueta) = 0$, i.e. that $\pibmpc$ satisfies $\pibmpc(0) = \pimpc(0) = 0$, which is a necessary condition for the controller to be stabilizing at the origin. Further, note that $\|\vd\|^2$ is a constant by construction, a fact that turns out to be useful in \Cref{thm:hess_ueta_bounded}. 
Secondly, the technical assumptions about the constraint polytope in \Cref{eq:reformulated} containing a full-dimensional ball of radius $r$ and being contained inside an origin-centered ball of radius $R$ are both inherited by \Cref{def:barr_mpc_formal}. 

\subsection{Error Bound for Barrier MPC}\label{sec:error_bound_barrier_mpc}
To kick off our analysis of the barrier MPC, we first give the following upper bound on the distance between  the optimal solution of \Cref{def:barr_mpc_formal} and that of explicit MPC in \Cref{eq:reformulated}. Our result is based on standard techniques to analyze the sub-optimality gap in interior-point methods and crucially uses the strong convexity of our quadratic cost  in \Cref{eq:barrierMPC}.   
\begin{theorem}\label{thm:error_bound_barrier_mpc}
 Suppose that $\vueta$ and $\vu^\star$ are, respectively, the optimizers of \Cref{def:barr_mpc_formal} and \Cref{eq:reformulated}. Then we have the following bound in terms of the barrier parameter $\eta$ in  \Cref{eq:barrierMPC}:
\[  \|\vueta - \vu^\star\|\leq O(\sqrt{\eta}). \]
\end{theorem}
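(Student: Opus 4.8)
The plan is to reduce the claim to a bound of order $\eta$ on the optimality gap $\calV(\vx_0,\vueta)-\calV(\vx_0,\vu^\star)$ in the \emph{original} (non-barrier) objective, and then turn that value gap into a distance bound using strong convexity of the quadratic cost. For the second step, note $\nabla_\vu^2\calV(\vx_0,\cdot)=\mH\succ0$ (as $\mQ_t,\mR_t$ are positive definite), so $\calV(\vx_0,\cdot)$ is $\mu$-strongly convex with $\mu:=\lambda_{\min}(\mH)>0$. The minimizer $\vueta$ of $\calV^\eta(\vx_0,\cdot)$ is strictly feasible for \Cref{eq:reformulated} (the log-barrier forces $\phi(\vx_0,\vueta)>0$), so it lies in the feasible polytope over which $\vu^\star$ minimizes $\calV(\vx_0,\cdot)$; first-order optimality of $\vu^\star$ gives $\inner{\nabla_\vu\calV(\vx_0,\vu^\star),\,\vueta-\vu^\star}\ge0$, and combining this with strong convexity yields
\[
\tfrac{\mu}{2}\,\norm{\vueta-\vu^\star}^2 \;\le\; \calV(\vx_0,\vueta)-\calV(\vx_0,\vu^\star).
\]

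It remains to bound the right-hand side by $O(\eta)$. Write the barrier contribution as $\psi(\vu):=-\mathbf{1}^\top\log\phi(\vx_0,\vu)+\vd^\top\vu$, so that $\calV^\eta(\vx_0,\vu)=\calV(\vx_0,\vu)+\eta\,\psi(\vu)$ with $\psi$ convex on $\{\phi(\vx_0,\cdot)>0\}$. Stationarity of $\vueta$ reads $\nabla_\vu\calV(\vx_0,\vueta)=-\eta\,\nabla\psi(\vueta)$, and the subgradient inequality for the convex $\calV(\vx_0,\cdot)$ gives
\[
\calV(\vx_0,\vueta)-\calV(\vx_0,\vu^\star)\;\le\;\inner{\nabla_\vu\calV(\vx_0,\vueta),\,\vueta-\vu^\star}\;=\;\eta\,\inner{\nabla\psi(\vueta),\,\vu^\star-\vueta}.
\]
Writing $\mG_i$ for the $i$th row of $\mG$, we have $\nabla\psi(\vueta)=\sum_{i=1}^{\nconstr}\mG_i^\top/\phi_i(\vx_0,\vueta)+\vd$; since $\phi_i(\vx_0,\vu)=(\mP\vx_0+\vw)_i-\mG_i\vu$ is affine, $\mG_i(\vu^\star-\vueta)=\phi_i(\vx_0,\vueta)-\phi_i(\vx_0,\vu^\star)\le\phi_i(\vx_0,\vueta)$ by feasibility of $\vu^\star$, so each term $\mG_i(\vu^\star-\vueta)/\phi_i(\vx_0,\vueta)\le1$ and the log-barrier part contributes at most $\nconstr$. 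The remaining linear term is controlled by compactness: $\inner{\vd,\vu^\star-\vueta}\le2R\,\norm{\vd}$ via \Cref{assumption:ball_inside_outside}, with $\norm{\vd}$ a fixed constant by construction. Hence $\calV(\vx_0,\vueta)-\calV(\vx_0,\vu^\star)\le\eta(\nconstr+2R\,\norm{\vd})$, and substituting into the first display gives $\norm{\vueta-\vu^\star}\le O(\sqrt{\eta})$, with the suppressed constant depending on $\nconstr$, $R$, $\norm{\vd}$, and $1/\lambda_{\min}(\mH)$.

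The inequality $\sum_i\mG_i(\vu^\star-\vueta)/\phi_i(\vx_0,\vueta)\le\nconstr$ is exactly the elementary log-barrier instance of the barrier-parameter (semi-boundedness) property that drives the $\nu/t$ central-path suboptimality bound in interior-point theory, so I do not anticipate a genuine obstacle. The two points needing care are: (a) the well-posedness invoked above — existence, uniqueness, and strict feasibility of $\vueta$ — which follows from strong convexity of $\calV^\eta(\vx_0,\cdot)$, the barrier diverging on the boundary, and the feasible polytope having nonempty interior (the radius-$r$ ball of \Cref{assumption:ball_inside_outside}); and (b) avoiding the naive route of upper-bounding $\calV^\eta(\vx_0,\vueta)$ by evaluating it at a strictly feasible point obtained by pushing $\vu^\star$ slightly inward — that incurs a $-\log(\text{shift})$ term and yields only the weaker $O(\sqrt{\eta\log(1/\eta)})$, whereas pairing the stationarity of $\vueta$ with the subgradient inequality for $\calV$ removes the logarithm and recovers the claimed $O(\sqrt{\eta})$.
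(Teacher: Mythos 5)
Your proof is correct, and it shares the paper's overall skeleton: strong convexity of the quadratic cost plus stationarity of $\vueta$ reduces everything to an $O(\eta)$ bound on the inner product of the barrier gradient at $\vueta$ with $\vu^\star - \vueta$. Where you genuinely diverge is in how that inner product is bounded. The paper first invokes \Cref{lem:linear_plus_barrier_sc} to certify that the recentered log-barrier (log terms plus the $\vd^\top\vu$ correction) is a self-concordant barrier with parameter $\nu \leq 20(\nconstr + R^2\norm{\vd}^2)$, and then applies the abstract inequality $\nabla\Phi(\vx)^\top(\vy-\vx)\leq\nu$ (\Cref{thm:inner_prod_ub_nu}). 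You instead compute the bound by hand for this specific barrier: the log part contributes at most $\nconstr$ via the elementary observation that $\mG_i(\vu^\star-\vueta)\leq\phi_i(\vx_0,\vueta)$ by feasibility of $\vu^\star$, and the linear recentering term contributes at most $2R\norm{\vd}$ by Cauchy--Schwarz and \Cref{assumption:ball_inside_outside}. Your route is more self-contained (no self-concordance machinery is needed for this theorem, though the paper uses it elsewhere anyway) and yields the sharper explicit constant $\nconstr + 2R\norm{\vd}$ in place of $20(\nconstr+R^2\norm{\vd}^2)$; the paper's route buys generality, since the identical argument applies verbatim to any $\nu$-self-concordant barrier rather than only the recentered log-barrier. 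Your handling of well-posedness and your remark on why evaluating the barrier objective at an inwardly shifted copy of $\vu^\star$ would lose a logarithmic factor are both accurate, and the first-order optimality of $\vu^\star$ that you use to convert the value gap into a squared distance plays the same role as the paper's observation that $q(\vueta)-q(\vu^\star)\geq 0$.
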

\begin{proof} In this proof, we use $\calK$ for the constraint polytope of \Cref{eq:reformulated}. First,  \Cref{lem:linear_plus_barrier_sc}  shows that the recentered log-barrier $\phiK$ in \Cref{def:barr_mpc_formal} is also a self-concordant barrier with some self-concordance parameter $\nu$. 
Since $\vueta= \arg\min_\vu q(\vu) + \eta \phiK(\vu)$, where $q$ is the quadratic cost function of \Cref{def:barr_mpc_formal} and $\phiK$ the recentered log-barrier on $\calK$, we have by first-order optimality:  \[ \nabla q(\vueta) = - \eta \nabla\phiK(\vueta). \numberthis\label{eq:opt_ueta_err_bound}\] Denote by $\alpha$ the strong convexity parameter of the cost function in  \Cref{def:barr_mpc_formal} and by $\nu$ the self-concordance parameter of the barrier $\phiK$. 
Then,  
\begin{align*}
    \left\{ q(\vueta) - q(\vu^\star) \right\} + \frac{1}{2}\alpha \|\vueta-\vu^\star\|^2 \leq \nabla q(\vueta)^\top (\vueta - \vu^\star)
    = \eta \cdot\nabla \phiK(\vueta)^\top (\vu^\star - \vueta) 
    \leq \eta \nu, 
\end{align*} where the first step is by $\alpha$-strong convexity of $q$,  the second step uses \Cref{eq:opt_ueta_err_bound}, and the final step applies \Cref{thm:inner_prod_ub_nu} at the points $\vueta$ and $\vu^\star$. 
Since both $q(\vueta) - q(\vu^\star)$ and $\frac{1}{2}\alpha\|\vueta-\vu^\star\|^2$ are positive, we can bound the latter   by $\eta \nu$. Finally, note that $\nu\geq1$ to finish the proof. 

\end{proof}

We note that the above bound of $\mathcal{O}(\sqrt{\eta})$ is tight for arbitrary directions. However, provided that $\vu^\star \neq \mK_0 \vx_0$ (where $\mK_0$ is the gain associated with the origin piece of the explicit MPC, i.e. the solution is not in the interior of the constraint set), we can show that there exists a direction (independent of $\eta$ of the choice of barrier) along which the error scales with $O(\eta/\|\vu^\star - \mK_0\vx_0\|)$ 
for small $\eta$.

\begin{lemma}\label{thm:error_bound_barrier_directional}Suppose that $\vueta$ and $\vu^\star$ are, respectively, the optimizers of \Cref{def:barr_mpc_formal} and \Cref{eq:reformulated}. Consider the case where $\vu^\star \neq \mK_0 \vx_0$, for $\mK_0 = \mH^{-1}\mF^\top$, i.e. $\mK_0\vx_0$ is the solution to the unconstrained problem. 
 Let $\va = \mH(\vu^\star - \mK_0 \vx_0)/\|\mH(\vu^\star - \mK_0 \vx_0)\|$. Then we have the following upper and lower bounds in terms of the barrier parameter $\eta$ in  \Cref{def:barr_mpc_formal}:
\[\va^\top(\vu^\eta - \vu^\star) \leq \frac{1}{2\sqrt{\alpha_1}}\left(\sqrt{{4m \eta} + \|\vu^\star - \mK_0\vx_0\|_{\mH}^2} - \|\vu^\star - \mK_0\vx_0\|_{\mH}\right),\]
\[\sqrt{\frac{\alpha_1}{\alpha_2}}\cdot\frac{r}{R} \min\left\{\frac{1}{\sqrt{m \alpha_2}}\left(\sqrt{{\eta} + \|\vu^\star - \mK_0\vx_0\|_{\mH}^2} - \|\vu^\star -\mK_0\vx_0\|_{\mH}\right), \sqrt{\frac{\alpha_1}{\alpha_2}}\cdot\frac{r}{2m + 4 \sqrt{m}}\right\} \leq \va^\top(\vu^\eta - \vu^\star),\]
where $m$ is the number of constraints and $\alpha_1\mI \preceq \mH \preceq \alpha_2\mI$.
\end{lemma}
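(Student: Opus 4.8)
The plan is to establish the two bounds by separate arguments: for the upper bound I would combine the first-order optimality of the barrier solution with a self-concordance estimate, while for the lower bound I would reduce geometrically to the analytic-center distance bound of \Cref{thm:quad_opt_result}.

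\emph{Upper bound.} Writing $q(\vu):=\tfrac12\vu^\top\mH\vu-\vx_0^\top\mF\vu$ and letting $\phiK$ be the recentered log-barrier of \Cref{def:barr_mpc_formal}, first-order optimality of $\vueta$ (which is interior, since $\phiK\to\infty$ at $\partial\calK$) reads $\nabla q(\vueta)=-\eta\,\nabla\phiK(\vueta)$; since $\nabla q(\vu)=\mH(\vu-\mK_0\vx_0)$ (using $\mK_0=\mH^{-1}\mF^\top$), the direction $\va$ in the statement is exactly $\nabla q(\vu^\star)/\|\nabla q(\vu^\star)\|$, which is nonzero precisely because $\vu^\star\neq\mK_0\vx_0$. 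I would then subtract the stationarity identity at $\vu^\star$ from that at $\vueta$ and take the inner product with $\Delta:=\vueta-\vu^\star$, obtaining
\[\|\Delta\|_\mH^2+(\vu^\star-\mK_0\vx_0)^\top\mH\Delta=\eta\,\nabla\phiK(\vueta)^\top(\vu^\star-\vueta)\leq\eta\nu\leq m\eta,\]
where the middle step is the self-concordance estimate \Cref{thm:inner_prod_ub_nu} applied at the interior point $\vueta$ and at $\vu^\star\in\calK$, with $\phiK$ a $\nu$-self-concordant barrier, $\nu\leq m$, by \Cref{lem:linear_plus_barrier_sc}. Setting $t:=\va^\top\Delta$ (so $(\vu^\star-\mK_0\vx_0)^\top\mH\Delta=\|\mH(\vu^\star-\mK_0\vx_0)\|\,t$) and using $\|\Delta\|_\mH\geq|t|\,\|\mH(\vu^\star-\mK_0\vx_0)\|/\|\vu^\star-\mK_0\vx_0\|_\mH$ from Cauchy--Schwarz in the $\mH$-inner product turns the display into a scalar quadratic inequality in $t$; solving it and invoking $\|\mH(\vu^\star-\mK_0\vx_0)\|\geq\sqrt{\alpha_1}\,\|\vu^\star-\mK_0\vx_0\|_\mH$ (as $\mH^{1/2}\succeq\sqrt{\alpha_1}\,\mI$) yields exactly $t\leq\tfrac{1}{2\sqrt{\alpha_1}}\big(\sqrt{4m\eta+\|\vu^\star-\mK_0\vx_0\|_\mH^2}-\|\vu^\star-\mK_0\vx_0\|_\mH\big)$.

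\emph{Lower bound.} Here the key observation is that $\vu^\star$ minimizes the \emph{linear} functional $\vu\mapsto\langle\va,\vu\rangle$ over $\calK$: first-order optimality of $\vu^\star$ for $q$ on the convex set $\calK$ gives $\langle\nabla q(\vu^\star),\vu-\vu^\star\rangle\geq0$ for all $\vu\in\calK$, i.e.\ $\langle\va,\vu\rangle\geq\langle\va,\vu^\star\rangle$. I would then run the ray $\{\vueta-s\va:s\geq0\}$: since $\vueta$ is interior and $\calK$ is bounded, it exits $\calK$ at some $s^\star>0$, the exit point stays in $\calK$ and hence has $\va$-value at least $\langle\va,\vu^\star\rangle$, so $\va^\top(\vueta-\vu^\star)\geq s^\star$; and $s^\star\geq\mathrm{dist}(\vueta,\partial\calK)$ because that exit point lies on $\partial\calK$ at Euclidean distance $s^\star$ from $\vueta$. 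It thus remains to lower bound $\mathrm{dist}(\vueta,\partial\calK)$, which is precisely what \Cref{thm:quad_opt_result} provides when applied to the quadratic cost $q$ on $\calK$ (which contains a ball of radius $r$ and lies in a ball of radius $R$). Using that, for a quadratic, $\|\nabla q(\vu^\star)\|_{\mH^{-1}}=\|\vu^\star-\mK_0\vx_0\|_\mH$, this gives the stated lower bound, whose $\sqrt{\alpha_1/\alpha_2}$ and $r/R$ prefactors and $m$-dependence, and the $\min$ of the small-$\eta$ term $\tfrac{1}{\sqrt{m\alpha_2}}\big(\sqrt{\eta+\|\vu^\star-\mK_0\vx_0\|_\mH^2}-\|\vu^\star-\mK_0\vx_0\|_\mH\big)$ with the $\eta$-free term $\sqrt{\alpha_1/\alpha_2}\cdot r/(2m+4\sqrt{m})$, are inherited directly from \Cref{thm:quad_opt_result} (the latter term being the large-$\eta$ regime, in which $\vueta$ sits near the analytic center).

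\emph{Main obstacle.} The upper bound and the geometric reduction are routine; the real work is \Cref{thm:quad_opt_result} itself, the explicit lower bound on how far the barrier / analytic-center--type solution must stay from $\partial\calK$, which the paper proves by reducing the convex (here quadratic) objective to a linear one and then invoking \citet{zong2023short}. Granting that, the only delicate points left in this lemma are the norm bookkeeping --- moving between the Euclidean norm used for ``distance to $\partial\calK$'' and the $\mH$- and $\mH^{-1}$-norms in which $\va$ and $\nabla q(\vu^\star)$ live, which is where the $\sqrt{\alpha_1/\alpha_2}$ factors come from --- and checking that the recentering term $\eta\vd^\top\vu$ in $\calV^\eta$ can be absorbed into the self-concordant barrier so that \Cref{thm:quad_opt_result} applies with $\vu^\star=\argmin_{\calK}q$ unchanged and $\|\vd\|$ a fixed constant.
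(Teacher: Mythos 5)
Your proposal is correct and follows essentially the paper's route: the paper proves this lemma by directly invoking \Cref{thm:quad_opt_result} with $\vv=\mK_0\vx_0$ and $\nu=m$, and your upper-bound computation and your ray/halfspace reduction of the directional gap to $\mathrm{dist}(\vueta,\partial\calK)$ simply inline the arguments already contained in \Cref{lem:iso_quad_upper_bound} and in the concluding step of the proof of \Cref{thm:quad_opt_result}. The only blemish is attributing $\nu\leq m$ to \Cref{lem:linear_plus_barrier_sc} (that lemma would give the larger parameter $20(m+R^{2}\|\vd\|^{2})$ for the recentered barrier); since the paper's own one-line proof likewise takes $\nu=m$ for its log barrier and you explicitly flag the recentering term as the remaining check, this is a shared imprecision rather than a gap in your argument.
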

\begin{proof}This is a direct application of \Cref{thm:quad_opt_result}, which holds for general $\nu$-self-concordant barriers, using $\nu = m$, the self-concordant barrier parameter of our log barrier.
\end{proof}
Since by \Cref{thm:best_smooth_bound}, the spectral norm of the Hessian of $\vueta$ is lower-bounded by $\mathcal{O}(1/\epsilon)$ for $\mathcal{O}(\epsilon)$ error, the upper and lower error bounds of \Cref{thm:error_bound_barrier_directional} suggest the tightest-possible upper bound on the Hessian that could be shown (and which would demonstrate barrier MPC is an optimal smoother along the $\va$ direction) is $\mathcal{O}\left(1/\left[\sqrt{\eta + \|\vu^\star - \mK_0 \vx_0\|^2} - \|\vu^\star - \mK_0 \vx_0\|\right]\right).$ We indeed establish this bound  later in \Cref{thm:hess_ueta_bounded}.

The above bound highlights that barrier MPC smooths in a manner which is shaped by problem constraints, unlike randomized smoothing, which generally smooths isotropically. Namely, note that the direction $\va$ is the direction of the gradient of the objective at $\vu^\star$, which is a combination active constraint directions. This is indicative of barrier MPC smoothing less aggressively along directions associated with active constraints.

\subsection{First-Derivative Bound for the Barrier MPC}\label{sec:analysis_barrier_MPC}
To prove our main result (\Cref{thm:hess_ueta_bounded}) on the spectral norm of the Hessian, we first  establish the following technical result bounding the first derivative of $\vueta$ with respect to $\vx_0$. This result may be of independent interest, since it formulates the Jacobian of the log-barrier smoothed solution as a convex combination of derivatives associated with sets of active constraints from the original MPC problem.
Our proof starts with the first-order optimality condition for $\vueta$ and obtains the desired simplification by applying the Sherman-Morrison-Woodbury identity (\Cref{fact:shermanMorrisonWoodbury}). 

\begin{lemma}\label{lem:dueta_dx}
Consider \Cref{def:barr_mpc_formal} with associated cost  matrices $\mH$ and $\mF$ defined therein. Let $\Phi := \Diag(\phi(\vx_0, \vueta(x_0)))$ be the diagonal matrix constructed using the (vector) residual $\phi(x_0, \vueta(x_0)) = \mP \vx_0 + \vw - \mG \vu\in \R^{\nconstr}$. Then, the solution $\vueta$ to the barrier  MPC  in \Cref{def:barr_mpc_formal} satisfies: 
\begin{align*}\label{eq:DuetaDx}
    \DuetaDx &= \mH^{-1}[\mF^\top - \mG^\top(\mG \mH^{-1}\mG^\top + \eta^{-1} \Phi^{2})^{-1}(\mG \mH^{-1} \mF^\top - \mP)].
\end{align*} 
\end{lemma}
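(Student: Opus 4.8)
The plan is to obtain $\DuetaDx$ by implicitly differentiating the first-order stationarity condition for $\vueta$ and then compressing the answer with the Sherman--Morrison--Woodbury identity. First I would record that, since $\phi(\vx_0,\vu) = \mP\vx_0 + \vw - \mG\vu$ has $\partial_\vu\phi = -\mG$, the minimizer $\vueta = \vueta(\vx_0)$ of \Cref{eq:barrierMPC} satisfies $\nabla_\vu\calV^\eta(\vx_0,\vueta) = 0$, i.e.
\[
\mH\vueta - \mF^\top\vx_0 + \eta\,\mG^\top\Phi^{-1}\mathbf{1} + \eta\vd = 0,
\]
with $\Phi := \Diag(\phi(\vx_0,\vueta))$. (That $\vueta(\vx_0)$ is well defined and differentiable follows from strong convexity of $\calV^\eta$ --- its Hessian is $\mH + \eta\,\mG^\top\Phi^{-2}\mG \succ 0$ since $\mH\succ0$ --- together with the implicit function theorem.)

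Next I would differentiate this identity in $\vx_0$. The term $\eta\vd$ is constant and drops out; for the barrier term I use the chain rule $\partial\phi/\partial\vx_0 = \mP - \mG\,\DuetaDx$ (which accounts for both the explicit $\vx_0$-dependence and the dependence through $\vueta$) together with $\partial_\vx(\Phi^{-1}\mathbf{1}) = -\Phi^{-2}\,\partial_\vx\phi$. Collecting terms yields the linear system
\[
\big(\mH + \eta\,\mG^\top\Phi^{-2}\mG\big)\,\DuetaDx \;=\; \mF^\top + \eta\,\mG^\top\Phi^{-2}\mP ,
\]
so $\DuetaDx = (\mH + \eta\,\mG^\top\Phi^{-2}\mG)^{-1}(\mF^\top + \eta\,\mG^\top\Phi^{-2}\mP)$.

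Finally I would apply \Cref{fact:shermanMorrisonWoodbury} to the inverse, with base matrix $\mH$ and low-rank update $\eta\,\mG^\top\Phi^{-2}\mG$, to get
\[
(\mH + \eta\,\mG^\top\Phi^{-2}\mG)^{-1} = \mH^{-1} - \mH^{-1}\mG^\top M\mG\mH^{-1}, \qquad M := \big(\mG\mH^{-1}\mG^\top + \eta^{-1}\Phi^2\big)^{-1}.
\]
Substituting and expanding, the pieces containing $\mF^\top$ collect into $\mH^{-1}[\mF^\top - \mG^\top M\mG\mH^{-1}\mF^\top]$, while the two pieces containing $\eta\,\Phi^{-2}\mP$ combine, using the identity $\mI - M\mG\mH^{-1}\mG^\top = \eta^{-1}M\Phi^2$ (immediate from $M^{-1} = \mG\mH^{-1}\mG^\top + \eta^{-1}\Phi^2$), into $\mH^{-1}\mG^\top M\mP$. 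Adding these gives exactly $\DuetaDx = \mH^{-1}[\mF^\top - \mG^\top(\mG\mH^{-1}\mG^\top + \eta^{-1}\Phi^2)^{-1}(\mG\mH^{-1}\mF^\top - \mP)]$.

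I expect the main obstacle to be purely organizational: tracking the implicit $\vx_0$-dependence of $\Phi$ through $\vueta$ when differentiating, and then spotting the cancellation $\mI - M\mG\mH^{-1}\mG^\top = \eta^{-1}M\Phi^2$ that lets the $\eta\,\Phi^{-2}\mP$ terms reassemble into the compact form; the remaining manipulations are routine matrix algebra.
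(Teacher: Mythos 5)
Your proposal is correct and follows essentially the same route as the paper: differentiate the first-order optimality condition $\mH\vueta - \mF^\top\vx_0 + \eta\,\mG^\top\Phi^{-1}\mathbf{1} + \eta\vd = 0$ in $\vx_0$, solve the resulting linear system, and compress via Sherman--Morrison--Woodbury, with the same cancellation (the paper simplifies its ``Term 1'' using exactly the identity $\mS - \mG\mH^{-1}\mG^\top = \eta^{-1}\Phi^2$ that you phrase as $\mI - M\mG\mH^{-1}\mG^\top = \eta^{-1}M\Phi^2$). No gaps; the argument is sound as written.
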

\begin{proof}
 We first state the following first-order optimality condition associated with minimizing  \Cref{eq:barrierMPC}:
\[ \mH \vueta(\vx_0)- \mF^\top \vx_0 + \eta \sum_{i = 1}^m \left(\frac{\vg_i}{\phi_i(\vx_0, \vueta(\vx_0))} + \vd_i\right) = 0.\]
Differentiating with respect to $\vx_0$ and rearranging yields \[\DuetaDx = (\mH + \eta \mG^\top \Phi^{-2} \mG)^{-1} (\mF^\top + \eta \mG^\top \Phi^{-2} \mP).\numberthis\label{eq:duetadx_first_eq}\]
For the rest of the proof, we introduce the notation $\mS=\mG\mH^{-1}\mG^{\top}+\eta^{-1}\Phi^{2}$.
Then, we have by applying the Sherman-Morrison-Woodbury identity  (\Cref{fact:shermanMorrisonWoodbury}) to the inverse in \Cref{eq:duetadx_first_eq} that 
\[
(\mH+\eta\mG^{\top}\Phi^{-2}\mG)^{-1}=\mH^{-1}-\mH^{-1}\mG^{\top}\mS^{-1}\mG\mH^{-1},
\]
which simplifies our expression in \Cref{eq:duetadx_first_eq} to 
\begin{align*}
\DuetaDx & =(\mH^{-1}-\mH^{-1}\mG^{\top}\mS^{-1}\mG\mH^{-1})\cdot(\mF^{\top}+\eta\mG^{\top}\Phi^{-2}\mP)\\
 & =\mH^{-1}\mF^{\top}-\mH^{-1}\mG^{\top}\mS^{-1}\mG\mH^{-1}\mF^{\top}+\underbrace{(\mH^{-1}-\mH^{-1}\mG^{\top}\mS^{-1}\mG\mH^{-1})\cdot\eta\mG^{\top}\Phi^{-2}\mP}_{\text{Term 1}}. \numberthis\label{eq:duetadx_second_eq}
\end{align*}
We now show that ``Term 1'' may be simplified as follows. 
\begin{equation}
(\mH^{-1}-\mH^{-1}\mG^{\top}\mS^{-1}\mG\mH^{-1})\cdot\eta\mG^{\top}\Phi^{-2}\mP=\mH^{-1}\mG^{\top}\mS^{-1}\mP.\label{eq:duetadx_third_eq}
\end{equation}
Once this is done, the claim is finished, since  plugging the right-hand side  from \Cref{eq:duetadx_third_eq} into  ``Term 1'' from \Cref{eq:duetadx_second_eq} gives exactly the claimed expression in the statement of the lemma.  
Therefore, we now prove \Cref{eq:duetadx_third_eq}. To this end, we observe that by  factoring out
$\mH^{-1}\mG^{\top}$ from the left and $\eta \Phi^{-2}\mP$ from the right, we
may re-write the left-hand side in \Cref{eq:duetadx_third_eq} as 
\begin{align*}
(\mH^{-1}-\mH^{-1}\mG^{\top}\mS^{-1}\mG\mH^{-1})\cdot\eta\mG^{\top}\Phi^{-2}\mP  
 &\,=\mH^{-1}\mG^{\top}({I}-\mS^{-1}\mG\mH^{-1}\mG^{\top})\cdot\eta\Phi^{-2}\mP\\
 &\,=\mH^{-1}\mG^{\top}\mS^{-1}\cdot(\mS-\mG\mH^{-1}\mG^{\top})\cdot\eta\Phi^{-2}\cdot\mP\\
 &\,=\mH^{-1}\mG^{\top}\mS^{-1}\mP,
\end{align*}
where the last step is by using our definition of $\mS$  and cancelling $\eta^{-1}\Phi^2$ with $\eta\Phi^{-2}$.
\end{proof} 

Equipped with \Cref{lem:dueta_dx}, we are now ready to state \Cref{thm:convex_combination}, where we connect the rate of change (with respect to the initial state $x_0$) of the solution \Cref{eq:K_sigma} of the constrained MPC and the  barrier MPC solution (from \Cref{lem:dueta_dx}). Put simply, \Cref{thm:convex_combination} tells us that {the  barrier MPC solution implicitly interpolates between a potentially exponential number of affine pieces from the original explicit MPC problem}. This important connection helps us get a handle on the smoothness of barrier MPC in \cref{thm:hess_ueta_bounded}. The starting point for our proof for this result is the expression for $\DuetaDx$ from \Cref{lem:dueta_dx}. To simplify this expression so it is $\eta$-independent, we  crucially use our linear algebraic result (\Cref{lem:svd_adj}) on products of the form $\mL \cdot \adj(\mL\mL^\top)_{\bsigma}$ for which $\det(\mL\mL^\top)_{\bsigma}=0$.  

\begin{lemma}\label{thm:convex_combination} 
Consider the setup in \Cref{def:barr_mpc_formal} with associated cost matrices $\mH$ and $\mF$, constraint matrices $\mP$ and $\mG$, and barrier parameter $\eta$, all defined therein. We define the following quantities. 
\begin{enumerate}[label=(\roman*)]
\compresslist{
\item For any $\bsigma\in \left\{0, 1\right\}^m$, define the matrix $ K_{\bsigma} =  \mH^{-1}[\mF^\top - \mG^\top(\mG \mH^{-1} \mG^\top)_{\bsigma}^{-1}(\mG \mH^{-1} \mF^\top - \mP)],$ which, recall, in \Cref{thm:dudx} describes the solution $\vu$ to the constrained MPC. 
\item Recall from \Cref{def:barr_mpc_formal} the residual $\phi:= \phi(\vx_0, \vu)= \mP\vx_0 + \vw - \mG\vu\in \R^{\nconstr}$; here, we denote it by $\phi$. For any   $\phi$ and  $\bsigma=\left\{0, 1\right\}^m$, define the scaling factor $h_{\bsigma} = \det([\mG \mH^{-1}\mG^\top]_{\bsigma})\prod_{i=1}^m (\eta^{-1}\phi^2_i)^{1 - \sigma_i}.$
\item We split the set $\bsigma\in \left\{0, 1\right\}^m$ into the following two sets: \[ S := \left\{\bsigma \in \left\{0, 1\right\}^{\nconstr} \, \mid \, \det([\mG\mH^{-1}\mG^\top]_{\bsigma}) > 0\right\} \text{ and } S^{\complement} = \left\{\bsigma\in \left\{0, 1\right\}^m \, \mid \, \det([\mG\mH^{-1}\mG^\top]_{\bsigma})=0 \right\}.\] 
}
\end{enumerate}
Then the rate of evolution, with respect to $x_0$, of the solution $\vueta$ to the  barrier MPC (in \Cref{lem:dueta_dx}) 
is connected to the solution of the constrained MPC (in \Cref{eq:K_sigma}) as follows:
\begin{align*}
\DuetaDx = \frac{1}{\sum_{\bsigma \in S}h_{\bsigma}}\sum_{\bsigma \in S} h_{\bsigma} \mK_{\bsigma}.
\end{align*}
\end{lemma}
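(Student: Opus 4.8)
The plan is to start from the expression for $\DuetaDx$ given in \Cref{lem:dueta_dx}, namely $\DuetaDx = \mH^{-1}[\mF^\top - \mG^\top \mS^{-1}(\mG\mH^{-1}\mF^\top - \mP)]$ with $\mS = \mG\mH^{-1}\mG^\top + \eta^{-1}\Phi^2$, and to rewrite the $\eta$-dependent inverse $\mS^{-1}$ using the adjugate formula $\mS^{-1} = \adj(\mS)/\det(\mS)$. First I would expand $\det(\mS)$ and $\adj(\mS)$ via the multilinearity of the determinant in the columns (or via the Cauchy--Binet / cofactor expansion across the two summands $\mG\mH^{-1}\mG^\top$ and $\eta^{-1}\Phi^2$). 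Since $\eta^{-1}\Phi^2$ is diagonal, every minor of $\mS$ decomposes into a sum over subsets $\bsigma$ of terms of the form $\det([\mG\mH^{-1}\mG^\top]_{\bsigma}) \cdot \prod_{i : \sigma_i = 0}(\eta^{-1}\phi_i^2)$, i.e.\ exactly the quantities $h_{\bsigma}$ defined in the statement. So $\det(\mS) = \sum_{\bsigma} h_{\bsigma}$, and a parallel (but more delicate) expansion holds for $\mG^\top\adj(\mS)(\mG\mH^{-1}\mF^\top - \mP)$.

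The key algebraic identity I would aim to establish is that, for each $\bsigma$, the ``$\bsigma$-part'' of $\mH^{-1}[\mF^\top - \mG^\top\adj(\mS)(\cdots)/\det(\mS)]$ equals $h_{\bsigma} \mK_{\bsigma}/\det(\mS)$ — recalling that $\mK_{\bsigma}$ involves $(\mG\mH^{-1}\mG^\top)_{\bsigma}^{-1} = \adj([\mG\mH^{-1}\mG^\top]_{\bsigma})/\det([\mG\mH^{-1}\mG^\top]_{\bsigma})$. Matching the cofactor-expansion terms of $\adj(\mS)$ against $\adj([\mG\mH^{-1}\mG^\top]_{\bsigma})$ (padded with zeros) produces the factor $\det([\mG\mH^{-1}\mG^\top]_{\bsigma})\prod_{i:\sigma_i=0}(\eta^{-1}\phi_i^2) = h_{\bsigma}$, which cancels the denominator $\det([\mG\mH^{-1}\mG^\top]_{\bsigma})$ sitting inside $\mK_{\bsigma}$. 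The one subtlety is the terms with $\bsigma \in S^{\complement}$, where $\det([\mG\mH^{-1}\mG^\top]_{\bsigma}) = 0$ so $h_{\bsigma} = 0$ and $\mK_{\bsigma}$ is ill-defined; here I would invoke \Cref{lem:svd_adj} (the stated linear-algebraic fact about $\mL\cdot\adj(\mL\mL^\top)_{\bsigma}$ when $\det(\mL\mL^\top)_{\bsigma} = 0$) to show the corresponding contribution to the numerator vanishes identically, so these indices drop out of both the numerator and the denominator. After dividing numerator and denominator by $\sum_{\bsigma\in S} h_{\bsigma}$ (noting $h_{\bsigma} = 0$ on $S^{\complement}$), this yields exactly $\DuetaDx = \frac{1}{\sum_{\bsigma\in S} h_{\bsigma}}\sum_{\bsigma\in S} h_{\bsigma}\mK_{\bsigma}$.

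I expect the main obstacle to be the bookkeeping in the multilinear expansion of $\adj(\mS)$: unlike $\det(\mS)$, which expands cleanly into the scalar sum $\sum_{\bsigma} h_{\bsigma}$, the adjugate is a matrix and one must track which cofactors of $\mS$ correspond to which padded cofactors of the submatrices $[\mG\mH^{-1}\mG^\top]_{\bsigma}$, and verify that, after left-multiplication by $\mG^\top$ and the cancellations through $\mH^{-1}$, the grouping reorganizes precisely into $\sum_{\bsigma} h_{\bsigma}\mK_{\bsigma}$ rather than some other combination. Getting the $S^{\complement}$ terms to provably cancel (rather than merely being formally $0/0$) via \Cref{lem:svd_adj} is the second place where care is needed — one must confirm the hypothesis of that lemma is met for $\mL$ the appropriate row-selection of a factor of $\mG\mH^{-1}\mG^\top$. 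Everything else — strong convexity, invertibility of $\mS$ for $\eta > 0$, and the final normalization — is routine.
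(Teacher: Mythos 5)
Your proposal is correct and takes essentially the same route as the paper: the multilinear determinant/adjugate expansion of $(\mG\mH^{-1}\mG^\top + \eta^{-1}\Phi^2)^{-1}$ over subsets $\bsigma$ that you sketch is exactly the paper's \Cref{lem:split_into_adj} (the "bookkeeping" you anticipate is carried out there by induction on the number of nonzero diagonal entries), and the cancellation of the $\bsigma \in S^{\complement}$ terms is the paper's \Cref{lem:Gtop_adjGHinvGtopsigma_zero}, obtained from \Cref{lem:svd_adj} with $\mL = \mG_2\mH^{-1/2}$ for the selected rows $\mG_2$, just as you propose.
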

\begin{proof}Let $\Phi := \Diag(\phi)$. Then from \Cref{lem:dueta_dx}, we have the following expression for $\DuetaDx$: 
\begin{align*}
    \DuetaDx &= \mH^{-1}[\mF^\top - \mG^\top(\mG \mH^{-1}\mG^\top + \eta^{-1} \Phi^{2})^{-1}(\mG \mH^{-1} \mF^\top - \mP)].
\end{align*} We now split $\mG^\top(\mG\mH^{-1}\mG^\top + \eta^{-1}\Phi^2)$ above into the following two components via \Cref{lem:split_into_adj}. 
\[ \mG^\top (\mG\mH^{-1}\mG^\top + \eta^{-1} \Phi^2)^{-1} = \frac{1}{\sum_{\bsigma\in S} h_{\bsigma}} \left(\sum_{\bsigma\in S}{h_{\bsigma}}\cdot\mG^\top (\mG\mH^{-1}\mG^\top)^{-1}_{\bsigma} + \sum_{\bsigma\in S^{\complement} }c_{\bsigma}\cdot\mG^\top \adj(\mG\mH^{-1}\mG^\top)_{\bsigma}\right), \] where
 $c_{\bsigma} := \prod_{i=1}^m (\eta^{-1}\phi_i^2)^{1-\sigma_i}$.  By definition of $S^{\complement}$, the second sum in the preceding equation comprises those terms for which $\det(\mG\mH^{-1}\mG^\top)_{\bsigma}=0$. We now invoke \Cref{lem:Gtop_adjGHinvGtopsigma_zero}, which states that $\mG^\top \adj(\mG\mH^{-1}\mG^\top)_{\bsigma}=\zero$ for all $\bsigma\in S^{\complement}$. Consequently, we may express $\DuetaDx$ in terms of only the first set of terms in the preceding equation (zeroing out the second set of terms): 
\begin{align*}
    \DuetaDx
    &= \frac{1}{\sum_{\bsigma} h_{\bsigma}}\sum_{\bsigma \in S} h_{\bsigma} \mH^{-1}[\mF^\top - \mG^\top(\mG \mH^{-1}\mG^\top)_{\bsigma}^{-1}(\mG \mH^{-1} \mF^\top - \mP)] 
     = \frac{1}{\sum_{\bsigma} h_{\bsigma}} \sum_{\bsigma \in S} {h_{\bsigma}} \mK_{\bsigma}.
\end{align*}  where we plugged in \Cref{eq:K_sigma} in the final step, thus concluding the proof.  
\end{proof} The above theorem immediately implies that  $\left\|\DuetaDx\right\|$ is bounded from above as stated next in \cref{cor:first_der_bound_ueta}. We draw attention to the fact that \cref{cor:first_der_bound_ueta} shows that the Lipschitz constant of $\vueta$ is independent of $\eta$, which  demonstrates that the log-barrier does not worsen the Lipschitz constant of the controller, rather it  changes only the interpolation between the different pieces.
\begin{corollary}\label{cor:first_der_bound_ueta}
In the setting of \Cref{thm:convex_combination}, we have,
\begin{align*}
    \left\|\DuetaDx\right\| \leq L := \max_{\sigma \in S} \|K_\sigma\|. 
\end{align*}
\end{corollary}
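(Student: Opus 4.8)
The plan is to read the bound off directly from the convex-combination identity established in \Cref{thm:convex_combination}: that result writes $\DuetaDx = \big(\sum_{\bsigma\in S}h_{\bsigma}\big)^{-1}\sum_{\bsigma\in S}h_{\bsigma}\mK_{\bsigma}$, so once I check that the scalars $h_{\bsigma}$ are nonnegative and not all zero, $\DuetaDx$ is literally a convex combination of the matrices $\{\mK_{\bsigma}\}_{\bsigma\in S}$, and the spectral-norm bound is then just the triangle inequality.

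First I would verify nonnegativity of the weights $h_{\bsigma}=\det([\mG\mH^{-1}\mG^\top]_{\bsigma})\prod_{i=1}^m(\eta^{-1}\phi_i^2)^{1-\sigma_i}$. Since $\mH\succ0$, the matrix $\mG\mH^{-1}\mG^\top$ is positive semidefinite, hence every principal submatrix is as well and has nonnegative determinant; by definition of $S$ this determinant is strictly positive when $\bsigma\in S$. For the remaining factor, $\eta>0$ and the log-barrier term in \Cref{eq:barrierMPC} forces the minimizer $\vueta$ to be strictly feasible, i.e.\ $\phi_i(\vx_0,\vueta(\vx_0))>0$ for every $i$ (otherwise the objective is $+\infty$), so $\prod_{i=1}^m(\eta^{-1}\phi_i^2)^{1-\sigma_i}>0$. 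Thus $h_{\bsigma}>0$ for all $\bsigma\in S$, and since $\bsigma=\mathbf 0\in S$ (the empty principal minor equals $1$), the normalization $\sum_{\bsigma\in S}h_{\bsigma}$ is strictly positive; hence $\lambda_{\bsigma}:=h_{\bsigma}/\sum_{\bsigma'\in S}h_{\bsigma'}$ defines a probability distribution over $S$. (Alternatively, one can simply note that $\sum_{\bsigma\in S}h_{\bsigma}\neq0$ is already implicit in the statement of \Cref{thm:convex_combination}.)

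With the weights in hand, the conclusion follows by the triangle inequality for the spectral norm applied to the convex combination:
\begin{align*}
\left\|\DuetaDx\right\| = \Big\|\sum_{\bsigma\in S}\lambda_{\bsigma}\mK_{\bsigma}\Big\| \leq \sum_{\bsigma\in S}\lambda_{\bsigma}\,\|\mK_{\bsigma}\| \leq \Big(\max_{\bsigma\in S}\|\mK_{\bsigma}\|\Big)\sum_{\bsigma\in S}\lambda_{\bsigma} = \max_{\bsigma\in S}\|\mK_{\bsigma}\| = L,
\end{align*}
and this bound is uniform in $\eta$. There is no genuine obstacle here, since the substantive work was done in \Cref{thm:convex_combination}; the only points requiring care are the positive semidefiniteness of $\mG\mH^{-1}\mG^\top$ (to get $h_{\bsigma}\ge0$) and the strict feasibility of $\vueta$ (to keep the residual factors positive and the normalization well-defined).
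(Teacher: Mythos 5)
Your proposal is correct and follows essentially the same route as the paper: the paper's proof simply observes that \Cref{thm:convex_combination} places $\DuetaDx$ in the convex hull of $\{\mK_{\bsigma}\}_{\bsigma\in S}$ and concludes by the norm bound over a finite set. Your additional verification that the weights $h_{\bsigma}$ are strictly positive (via $\mG\mH^{-1}\mG^\top\succeq0$ and strict feasibility of $\vueta$) just makes explicit what the paper leaves implicit in calling it a convex combination.
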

\begin{proof}
    From \Cref{thm:convex_combination}, we  infer that $\frac{\partial \vueta}{\partial x_0}$  lies in the convex hull of $\{K_\sigma\}_{\sigma \in S}$, and note that $|S| < \infty$.
\end{proof}

\subsection{Main Result: Smoothness Bound for the Barrier MPC} \label{sec:smoothness_bound_barrier_mpc}
We are now ready to state our main result, which effectively shows that $\vueta$ (and hence $\pibmpc$) satisfies the conditions of \Cref{assumption:smoothness_of_exp_and_learned}. Our proof of \Cref{thm:hess_ueta_bounded} starts with \Cref{lem:dueta_dx} and computes another derivative. To get an upper bound on the operator norm of the Hessian so obtained, our proof then crucially hinges on  \Cref{lem:contains_linear_ball} and \Cref{thm:quad_opt_result}, which provide explicit lower bounds on residuals when minimizing a quadratic cost plus a self-concordant barrier over a polytope, a result we believe to be of independent interest to the optimization community. 

\begin{theorem}\label{thm:hess_ueta_bounded} 
Consider the setting of \Cref{def:barr_mpc_formal} with associated cost matrices $\mH$ and $\mF$, constraint matrices $\mP$ and $\mG$, barrier parameter $\eta$, number of constraints $\nconstr$, the recentering vector $\vd$, and the solution $\vueta$, all defined therein. We define the following quantities. 
\begin{enumerate}[label=(\roman*)]
\compresslist{
\item Denote by $L$ the Lipschitz constant of $\vueta$ from 
\Cref{cor:first_der_bound_ueta}. 
\item We split the set $\bsigma\in \left\{0, 1\right\}^m$ into the following two sets: \[ S := \left\{\bsigma \in \left\{0, 1\right\}^{\nconstr} \, \mid \, \det([\mG\mH^{-1}\mG^\top]_{\bsigma}) > 0\right\} \text{ and } S^{\complement} = \left\{\bsigma\in \left\{0, 1\right\}^m \, \mid \, \det([\mG\mH^{-1}\mG^\top]_{\bsigma})=0 \right\}.\] 
\item For $\bsigma\in S$, define the parameter   $C := \max_{\bsigma \in S} \|2\mH^{-1}\mG^\top (\mG \mH^{-1} \mG^\top)_{\bsigma}^\dagger\|$. 
\item Denote by $r$ and $R$ the inner and outer radius, respectively, associated with  \Cref{def:barr_mpc_formal}.   
\item Define the residual lower bound, 
\[\philb =\frac{\lambda_{\min}(H)}{\lambda_{\max}(H)}\cdot\frac{r}{R}\cdot \min\left\{\frac{1}{\sqrt{\nu {\lambda_{\min}(\mH)}}}\left(\sqrt{{\eta} + \|\vu^\star - \mK_0\vx_0\|_{\mH}^2} - \|\vu^\star -\mK_0\vx_0\|_{\mH}\right), \frac{r}{2\nu + 4 \sqrt{\nu}}\right\},\numberthis\label{eq:philb}\] 
with  $\nu=20(\nconstr+R^2\|\vd\|^2)$ and $\mK_0 \vx_0 := \mH^{-1}\mF^\top \vx_0$, the solution to the unconstrained minimization of the quadratic objective. We denote $\|\vu\|_{\mH} := \sqrt{\vu^\top\mH\vu}$.
}
\end{enumerate}
Then, the Hessian of $\vueta$ with respect to $\vx_0$ is bounded by:
\begin{align*}
    \left\|\frac{\partial^2 \vueta}{\partial \vx_0^2}\right\| \leq \frac{C}{\philb}(\|\mP\| + \|\mG\|L)^2.
\end{align*} 
where $\|\cdot\|$ denotes the spectral norm of the  third-order tensor $\frac{\partial^2 \vueta}{\partial \vx_0^2}$.
\end{theorem}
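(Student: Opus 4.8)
The plan is to start from the closed-form Jacobian $\DuetaDx = \mH^{-1}[\mF^\top - \mG^\top \mS^{-1}(\mG \mH^{-1} \mF^\top - \mP)]$ with $\mS := \mG\mH^{-1}\mG^\top + \eta^{-1}\Phi^2$ established in \Cref{lem:dueta_dx}, and differentiate once more with respect to $\vx_0$. Only the $\mS^{-1}$ factor depends on $\vx_0$ (through $\Phi = \Diag(\phi(\vx_0, \vueta(\vx_0)))$), so using $\partial \mS^{-1} = -\mS^{-1}(\partial \mS)\mS^{-1}$ and $\partial (\eta^{-1}\Phi^2) = 2\eta^{-1}\Phi \Diag(\partial \phi)$, the second derivative is a sum/product of: the fixed matrix $\mH^{-1}\mG^\top$, two copies of $\mS^{-1}$, the factor $(\mG\mH^{-1}\mF^\top - \mP)$, and the derivative of $\phi$ along $\vx_0$, namely $\partial \phi / \partial \vx_0 = \mP - \mG \DuetaDx$, whose norm is at most $\|\mP\| + \|\mG\|L$ by \Cref{cor:first_der_bound_ueta}. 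This already produces the factor $(\|\mP\| + \|\mG\|L)^2$ and the constant $C$ should absorb $\|2\mH^{-1}\mG^\top\|$ times the remaining fixed-norm pieces; the matching with $C := \max_{\bsigma \in S}\|2\mH^{-1}\mG^\top(\mG\mH^{-1}\mG^\top)_{\bsigma}^\dagger\|$ presumably comes from bounding $\|\mS^{-1}\|$-type quantities by a worst-case over active sets via the same adjugate/limit argument as in \Cref{thm:convex_combination}.

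The crux, and where all the work concentrates, is converting the $\eta^{-1}\Phi$ factors that appear in the numerator into something that does \emph{not} blow up as $\eta \to 0$. Each extra derivative of $\mS^{-1}$ brings down a $2\eta^{-1}\Phi$; naively $\|\eta^{-1}\Phi\| \to \infty$, so the bound must exploit that $\mS^{-1}$ contains the compensating $\eta^{-1}\Phi^2$ block. Concretely one wants an estimate of the form $\|\mS^{-1}(\eta^{-1}\Phi)(\text{fixed})\| \lesssim 1/\min_i \phi_i$, i.e. everything collapses to a bound governed by the smallest constraint residual. This is exactly why the theorem's bound is $C/\philb \cdot (\cdots)^2$: the $1/\philb$ is the price of one surviving inverse-residual. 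So the real engine is the \emph{lower bound on the residuals} $\phi_i(\vx_0, \vueta(\vx_0)) \geq \philb$, which is supplied by \Cref{lem:contains_linear_ball} and \Cref{thm:quad_opt_result} (the analytic-center optimality-gap bound) with the self-concordance parameter taken as $\nu = 20(\nconstr + R^2\|\vd\|^2)$ for the recentered barrier — note that $\|\vd\|$ is a fixed constant, as remarked after \Cref{def:barr_mpc_formal}, which is what makes $\nu$ a legitimate finite quantity.

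The key steps, in order: (1) differentiate the Jacobian expression from \Cref{lem:dueta_dx} a second time, carefully tracking that the implicit $\vx_0$-dependence through $\vueta$ enters only via $\Phi$, and write $\partial^2 \vueta / \partial \vx_0^2$ as a tensor built from $\mH^{-1}\mG^\top$, $\mS^{-1}$, $\eta^{-1}\Phi$, $(\mG\mH^{-1}\mF^\top - \mP)$, and $(\mP - \mG \DuetaDx)$; (2) bound $\|\partial\phi/\partial\vx_0\| \leq \|\mP\| + \|\mG\|L$ via \Cref{cor:first_der_bound_ueta}, giving the squared factor; (3) invoke the residual lower bound $\phi_i \geq \philb$ from \Cref{thm:quad_opt_result} / \Cref{lem:contains_linear_ball} to control the surviving $\eta^{-1}\Phi$ against one $\mS^{-1}$; (4) bound the remaining $\eta$-independent matrix products by a worst-case over active sets $\bsigma \in S$ using the adjugate/limiting argument of \Cref{thm:convex_combination} and \Cref{lem:svd_adj}, which is where the $(\mG\mH^{-1}\mG^\top)_{\bsigma}^\dagger$ in the definition of $C$ arises; (5) collect constants into $C$ and conclude. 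The main obstacle is unquestionably step (3)–(4): making the $\eta \to 0$ cancellation rigorous requires simultaneously using the structure of $\mS^{-1}$ (to kill the $\eta^{-1}$ growth) \emph{and} the residual lower bound (to control what remains), rather than bounding $\|\mS^{-1}\|$ and $\|\eta^{-1}\Phi\|$ separately; getting the algebra to land exactly on $C/\philb$ with $C$ expressed via pseudo-inverses over active sets is the delicate part.
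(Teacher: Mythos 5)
Your proposal matches the paper's proof essentially step for step: differentiate the Jacobian formula of \Cref{lem:dueta_dx}, absorb the dangerous $\eta^{-1}\Phi$ factor into one copy of $\mS^{-1}$ so that only a $\Phi^{-1}$ survives (bounded by $1/\min_i\phi_i$, then by $1/\philb$ via \Cref{thm:quad_opt_result} with $\nu = 20(\nconstr + R^2\|\vd\|^2)$), bound $\|\partial\phi/\partial\vx_0\| \leq \|\mP\| + \|\mG\|L$ via \Cref{cor:first_der_bound_ueta}, and control $2\mH^{-1}\mG^\top\mS^{-1}$ by the active-set adjugate decomposition (\Cref{lem:split_into_adj}, \Cref{lem:Gtop_adjGHinvGtopsigma_zero}), which yields exactly the constant $C$. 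The only small clarification is that the second factor of $(\|\mP\| + \|\mG\|L)$ comes from $\|\mG\mH^{-1}\mF^\top - \mP\| = \|\mG K_0 - \mP\| \leq \|\mP\| + \|\mG\|L$ (since $\mH^{-1}\mF^\top = K_0$ with $\|K_0\| \leq L$), not from anything absorbed into $C$.
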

\begin{proof} Recall from \Cref{lem:dueta_dx} the following expression for $\DuetaDx$ evaluated at a particular $\vx_0$:
\begin{align*}
    \DuetaDx(\vx_0) &= \mH^{-1}[\mF^\top - \mG^\top(\mG \mH^{-1}\mG^\top + \eta^{-1} \Phi(\vx_0)^{2})^{-1}(\mG \mH^{-1} \mF^\top - \mP)],
\end{align*}
where $\Phi(\vx_0) := \Diag(\mP\vx_0 - \vw + \mG\vueta(\vx_0))$. Let $\vy \in \R^{d_x}$ be an arbitrary unit-norm vector, and define the univariate function \[\mM(t) := \mG\mH^{-1}\mG^\top + \eta^{-1}\Phi(t)^2,\] where 
we overload $\Phi$ to mean $\Phi(t) := \Diag(\mP(\vx_0 + t\vy) - \vw + \mG\vueta(\vx_0 + t\vy))$, the residual along the path $t\mapsto \vx_0 + t\vy$. We therefore have the following expression for $\DuetaDx$ evaluated at $\vx_0 + t\vy$:
\begin{align*}
    \DuetaDx(\vx_0 + t\vy)  &= \mH^{-1}[\mF^\top - \mG^\top\mM(t)^{-1}(\mG \mH^{-1} \mF^\top - \mP)].
\end{align*} Then by differentiating $\mM(t)^{-1}$ and applying the chain rule, we get,
\begin{align*}
    \frac{d}{dt}\left(\frac{\partial \vueta}{\partial \vx_0}(\vx_0 + t \vy)\right) 
    = &\,\mH^{-1}\mG^\top \mM(t)^{-1}\frac{d\mM(t)}{dt} \mM(t)^{-1}(\mG\mH^{-1} \mF^\top - \mP) \\
    = &\,2\mH^{-1}\mG^\top \mM(t)^{-1}\left(\frac{d\Phi(t)}{dt} \eta^{-1} \Phi(t) \right)\mM(t)^{-1}(\mG\mH^{-1} \mF^\top - \mP) \\
    = &\, 2\mH^{-1}\mG^\top \mM(t)^{-1}  \frac{d\Phi(t)}{dt} (\eta \mG \mH^{-1}\mG^\top\Phi^{-1}(t) + \Phi(t))^{-1}(\mG\mH^{-1} \mF^\top - \mP) \\
    = &\, 2\mH^{-1}\mG^\top \mM(t)^{-1}  \frac{d\Phi(t)}{dt} (\eta \Phi(t)^{-1}\mG \mH^{-1}\mG^\top\Phi(t)^{-1} + \mI)^{-1}\Phi(t)^{-1}(\mG\mH^{-1} \mF^\top - \mP),
\end{align*} where the third and fourth steps factor out $\Phi(t)$ from the right and left, respectively. We now bound groups of terms of the product on the right-hand side and then finish the bound by submultiplicativity of the spectral norm. First,   since $\mM(t)$ is a sum of a square matrix and a positive diagonal matrix, we may 
apply \Cref{lem:split_into_adj} to express $\mG^\top \mM(t)^{-1}$ as follows with appropriate $h_{\bsigma}$ and $c_{\bsigma}$:
\[  \mG^\top\mM(t)^{-1} = \frac{1}{\sum_{\bsigma\in S} h_{\bsigma}} \left(\sum_{\bsigma\in S}{h_{\bsigma}}\mG^\top (\mG\mH^{-1}\mG^\top)_{\bsigma}^\dagger + \sum_{\bsigma\in S^{\complement}} {c_{\bsigma}} \mG^\top\adj(\mG\mH^{-1}\mG^\top)_{\bsigma}\right).\numberthis\label{eq:gtop_intermediate}\] Now note that for $\bsigma \in S^{\complement}$, we have $\det(\mG\mH^{-1}\mG^\top)_{\bsigma}=0$. 
We may then invoke  \Cref{lem:Gtop_adjGHinvGtopsigma_zero} to infer that 
 for $\bsigma \in S^{\complement}$, we have $\mG^\top \adj(\mG \mH^{-1}\mG^\top)_{\bsigma} = \zero$.  
As a result, the second term on the right-hand side of \Cref{eq:gtop_intermediate} vanishes, 
thereby affording us the following simplification: 
\begin{align*}
    \|2\mH^{-1}\mG^\top \mM(t)^{-1}\|
    &= \left\| \frac{1}{\sum_{\bsigma \in S} h_{\bsigma}}\sum_{\bsigma \in S} 2h_{\bsigma}\mH^{-1}\mG^\top(\mG \mH^{-1}\mG^\top)_{\bsigma}^\dagger \right\| \\
    &\leq C := \max_{\bsigma \in S} \|2\mH^{-1}\mG^\top (\mG \mH^{-1} \mG^\top)_{\bsigma}^\dagger\|,
\end{align*}
where the second equality follows via H\"older's inequality.
Next, from the definition of $\Phi$, we have that $\frac{d\Phi}{dt} = \mP + \mG \left(\DuetaDx(\vx_0 + t\vy)\vy\right)$. By the triangle inequality,  the Lipschitzness $L$ of $\vueta$ (from \Cref{cor:first_der_bound_ueta}), and the fact that $\vy$ is unit norm, we have  
\[\left\|\frac{d\Phi}{d t}\right\| \leq \|\mP\| + \|\mG\| \left\|\DuetaDx\right\| \leq \|\mP\| + \|\mG\| L.\] 
To bound $\|(\eta \Phi^{-1}\mG\mH^{-1} \mG^\top \Phi^{-1} + \mI)^{-1}\Phi^{-1}\|$, we first note that because $\eta \Phi^{-1}\mG\mH^{-1} \mG^\top \Phi^{-1} \succeq \zero$, we have $(\eta \Phi^{-1}\mG\mH^{-1} \mG^\top \Phi^{-1} + \mI)^{-1}\preceq \mI$, which in turn implies that $\|(\eta \Phi^{-1}\mG\mH^{-1} \mG^\top \Phi^{-1} + \mI)^{-1}\|\leq 1$. Then, by submultiplicativity of the spectral norm, we have 
$$\|(\eta \Phi^{-1}\mG\mH^{-1} \mG^\top \Phi^{-1} + \mI)^{-1}\Phi^{-1}\| \leq \|\Phi^{-1}\| \leq \frac{1}{\min_{i \in [\nconstr]} \phi_i}.$$
We may then plug in the lower bound on $\min_{i \in [\nconstr]} \phi_i$ from \Cref{thm:quad_opt_result} that uses $\nu = 20(m+R^2\|d\|^2)$, the self-concordance parameter (computed via \Cref{lem:linear_plus_barrier_sc}) of the recentered log-barrier in \Cref{def:barr_mpc_formal}. 
Finally, recognizing $\mH^{-1} \mF^\top$ as $K_{\bsigma}$ from \Cref{eq:K_sigma} (with $\bsigma = \zero^\nconstr$) yields \[ \|\mG \mH^{-1}\mF^\top - \mP\| = \left\|\mG K_{0} - \mP\right\| \leq \|\mP\| + \|\mG\|L.\] 
Combining all the bounds obtained above, we may then finish the proof. 
\end{proof}
Thus, \Cref{thm:hess_ueta_bounded} establishes bounds analogous to those  in \Cref{thm:randomized_bounds} for randomized smoothing, demonstrating that the Jacobian of the smoothed expert policy is sufficiently Lipschitz. Indeed, in this case our result is stronger, showing that the Jacobian is differentiable and  the Hessian tensor is bounded. This theoretically validates the  core proposition of our paper:  the barrier MPC policy in \Cref{def:barr_mpc_formal} is suitably smooth, and therefore the  guarantees in  \Cref{sec:learning_guarantees} hold. 
We now briefly revisit our learning guarantees, applied to specifically to log-barrier MPC.

\subsection{Learning Guarantees for Barrier MPC}
\label{sec:end_to_end}
We now revisit the learning guarantees discussed in \Cref{sec:learning_guarantees}, adapted specifically to a log-barrier MPC expert.
We begin by considering the stability properties of barrier MPC. Since we are  interested in  establishing $\|\hat{\vx}_t - \vx^\star_t\| \leq \epsilon$, where $\hat{\vx}$ is the state under the learned policy and $\vx^\star$ is the state under the expert, and since we consider MPC controllers which stabilize to the origin, we can relax our local incremental input-to-state stability requirements to simply input-to-state stability (ISS) with minimal assumptions. \Cref{def:inp_stab} introduces this weaker input-to-state stability property, and \Cref{lem:iss_diss} shows that ISS policies are locally $\delta$ISS. We then observe that there is considerable prior work showing that ISS holds under minimal assumptions for barrier MPC, meaning \Cref{assum:stable} is satisfied for barrier MPC.
    
\begin{definition}[Input to State Stability \cite{khalil2002control}]\label{def:inp_stab} A system $\vx_{t+1} = f(\vx_t, \vu_t)$ is input-to-state stable (equivalently, a controller $\pi$ is input-to-state stabilizing under $f$ for $\vx_{t+1} = f(\vx_t, \pi(\vx_t) + \vu_t)$) if there exists $\beta \in \mathcal{KL}$ and $\gamma \in \mathcal{K}$ (where $f \in \mathcal{K}$ provided $f: [0,a] \to \R_{\geq 0}$ for $a > 0$, $f$ is strictly increasing, $f(0) = 0$, and $g \in \mathcal{KL}$ provided $g: [0,a] \times \R_{\geq 0} \to \R_{\geq 0}$ is class $\mathcal{K}$ in the first argument for some $a > 0$ and monotonically decreasing in the second such that $\lim_{t \to \infty} g(s,t) = 0 \forall s$).  so that for all initial values $\vx_0$ and all admissible inputs $\vu_t$, we have 
\begin{align*}
    \|\vx_t\| \leq \beta(\|\vx_0\|,t) + \gamma(\|\vu_t\|_\infty) \quad \forall t \geq 0.
\end{align*}
\end{definition}

\begin{lemma}\label{lem:iss_diss}Let $\pi$ be an $L$-Lipschitz controller which is input-to-state stabilizing for the dynamics $\vx_{t+1} = \mA \vx_t + \mB \vu_t$ with  gains $\beta \in \mathcal{KL}, \gamma \in \mathcal{K}$. Define $\mathcal{B}^{-1}$ such that $\beta(\|\vx_0\|, \mathcal{B}^{-1}(\epsilon)) \leq \epsilon$ for $\|\vx_0\| \leq B_x$. As $\beta \in \mathcal{KL}$, we know that $\mathcal{B}^{-1}$ exists and is monotonically decreasing. Define the gain,
\[v(\epsilon) := \min\left\{\gamma^{-1}(\epsilon/2), \epsilon \cdot (1 + \|\mA\| + (1 + L)\|\mB\|)^{-\mathcal{B}^{-1}(\epsilon/4)}\right\}.\]
Then, $\pi$ is incrementally input-to-state stabilizing with gain $\gamma' := v^{-1} \in \mathcal{K}$, i.e, for $\|\vu\|_{\infty} \leq v(\epsilon)$ we have that $\|\vx_t - \bar{\vx}_t\| \leq \epsilon$ where $\vx_{t+1} = f(\vx_{t}, \pi(\vx_t) + \vu_t)$ and $\bar{\vx}_{t+1} = f(\bar{\vx}_{t}, \pi(\vx_t))$ with $\bar{\vx}_0 = \vx_0$. Note that over a horizon of length $K$, we have
\begin{align*}
    \gamma'(\|\vu\|_{\infty}) \leq \max\{ 2\gamma(\|\vu\|_{\infty}), (1 + \|\mA\| + (1 + L)\|\mB\|)^{K} \|\vu\|_{\infty} \}.
\end{align*}
\end{lemma}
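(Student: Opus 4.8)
The plan is to prove \Cref{lem:iss_diss} in two stages: first establish the incremental stability bound $\|\vx_t - \bar{\vx}_t\| \leq \epsilon$ directly from the ISS assumption, and then derive the horizon-$K$ upper bound on $\gamma'$ as a corollary. For the first stage, I would fix a perturbation sequence $\{\vu_t\}$ with $\|\vu\|_\infty \leq v(\epsilon)$ and track the deviation $\ve_t := \vx_t - \bar{\vx}_t$, where $\vx_{t+1} = \mA\vx_t + \mB(\pi(\vx_t) + \vu_t)$ is the perturbed trajectory and $\bar{\vx}_{t+1} = \mA\bar{\vx}_t + \mB\pi(\vx_t)$ is the ``nominal'' trajectory driven by the \emph{same} control values $\pi(\vx_t)$. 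Note the subtlety: $\bar{\vx}$ is fed $\pi(\vx_t)$, not $\pi(\bar{\vx}_t)$, so $\ve_{t+1} = \mA\ve_t + \mB\vu_t$ is a \emph{linear} recursion with no feedback term. This gives the crude bound $\|\ve_t\| \leq \sum_{k<t}\|\mA\|^{t-1-k}\|\mB\|\|\vu_k\| \leq \|\vu\|_\infty \sum_{k<t}\|\mA\|^{t-1-k}\|\mB\|$, which I would coarsely upper-bound by $(1 + \|\mA\| + (1+L)\|\mB\|)^{t}\|\vu\|_\infty$ — the extra $(1+L)\|\mB\|$ slack is deliberately built into the base of the exponent so the same quantity reappears cleanly in the horizon bound.

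The reason the definition of $v(\epsilon)$ splits into a $\min$ of two terms is to simultaneously control two regimes. I would argue as follows: by ISS applied to the perturbed trajectory, $\|\vx_t\| \leq \beta(\|\vx_0\|, t) + \gamma(\|\vu\|_\infty)$. After time $T^\star := \mathcal{B}^{-1}(\epsilon/4)$, the transient term satisfies $\beta(\|\vx_0\|, t) \leq \epsilon/4$, and since $\|\vu\|_\infty \leq v(\epsilon) \leq \gamma^{-1}(\epsilon/2)$ we get $\gamma(\|\vu\|_\infty) \leq \epsilon/2$ for all $t$; hence $\|\vx_t\|$ (and similarly $\|\bar{\vx}_t\|$) stays bounded, so the relevant $\|\vx_0\| \leq B_x$ hypothesis for $\mathcal{B}^{-1}$ is self-consistent. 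For the deviation itself, the second term in the $\min$, namely $\epsilon \cdot (1 + \|\mA\| + (1+L)\|\mB\|)^{-\mathcal{B}^{-1}(\epsilon/4)}$, is chosen precisely so that $(1 + \|\mA\| + (1+L)\|\mB\|)^{T^\star} \cdot v(\epsilon) \leq \epsilon$, closing the bound $\|\ve_{T^\star}\| \leq \epsilon$ at the cutoff time; for $t > T^\star$ I would invoke ISS again but now measuring from the reset point, or more simply observe that the incremental-stability bound only needs to hold at the relevant comparison times and the exponential bound together with monotonicity of $\mathcal{B}^{-1}$ handles the tail. I would then package this as: $\pi$ is incrementally ISS with gain $\gamma' = v^{-1}$, which lies in $\mathcal{K}$ because $v$ is a $\min$ of two strictly increasing functions vanishing at $0$.

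For the horizon-$K$ bound, I would simply note that over $t \leq K$ the exponential factor $(1 + \|\mA\| + (1+L)\|\mB\|)^{t}$ is at most $(1 + \|\mA\| + (1+L)\|\mB\|)^{K}$, so the linear recursion bound gives $\|\ve_t\| \leq (1 + \|\mA\| + (1+L)\|\mB\|)^{K}\|\vu\|_\infty$ outright, while the ISS route gives $\|\ve_t\| \leq 2\gamma(\|\vu\|_\infty)$ when the transient has decayed and $\gamma(\|\vu\|_\infty) \leq \epsilon/2$; taking the max of the two expressions and reading off the induced bound on $\gamma'(\|\vu\|_\infty) = v^{-1}(\|\vu\|_\infty)$ yields the stated inequality $\gamma'(\|\vu\|_{\infty}) \leq \max\{ 2\gamma(\|\vu\|_{\infty}), (1 + \|\mA\| + (1 + L)\|\mB\|)^{K} \|\vu\|_{\infty} \}$.

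The main obstacle I anticipate is bookkeeping around the \emph{two} trajectories being compared — making sure the ``nominal'' trajectory $\bar{\vx}$ is the one driven by $\pi(\vx_t)$ (the perturbed controller's inputs) rather than $\pi(\bar{\vx}_t)$, since this is what makes the error recursion linear and $L$-independent in its \emph{dynamics} (the Lipschitz constant $L$ enters only through the slack in the exponent base, not through a contraction/expansion of the error). A secondary delicate point is the circularity in the $\mathcal{B}^{-1}$ definition: it presupposes $\|\vx_0\| \leq B_x$ and gives a decay time, but one must check the perturbed trajectory never leaves the ball where this holds — which is exactly what the $\gamma^{-1}(\epsilon/2)$ branch of $v(\epsilon)$ guarantees, so the argument must be sequenced so that the boundedness claim is established before the decay-time claim is used.
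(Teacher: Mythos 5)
There is a genuine gap, and it is concentrated in the regime $t > \mathcal{B}^{-1}(\epsilon/4)$. Your short-time argument (telescoping the error recursion and using the second branch of $v(\epsilon)$ to get $\|\vx_t - \bar{\vx}_t\| \leq \epsilon$ for $t \leq \mathcal{B}^{-1}(\epsilon/4)$) matches the paper. But for the tail the paper does something specific that you never actually do: it bounds $\|\vx_t - \bar{\vx}_t\| \leq \|\vx_t\| + \|\bar{\vx}_t\|$ and applies the ISS property to \emph{both} trajectories --- the perturbed one, giving $\beta(\|\vx_0\|,t) + \gamma(\|\vu\|_\infty)$, and the nominal one with zero disturbance, giving $\beta(\|\vx_0\|,t)$ --- so that for $t \geq \mathcal{B}^{-1}(\epsilon/4)$ the total is $2\beta(\|\vx_0\|,\mathcal{B}^{-1}(\epsilon/4)) + \gamma(\gamma^{-1}(\epsilon/2)) \leq \epsilon/2 + \epsilon/2$. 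This is exactly where the first branch $\gamma^{-1}(\epsilon/2)$ of $v(\epsilon)$ is consumed; in your write-up it is only used to argue ``boundedness/self-consistency of $B_x$,'' and your two suggested tail arguments (``invoke ISS from the reset point,'' ``the exponential bound together with monotonicity of $\mathcal{B}^{-1}$ handles the tail'') do not work: the exponential bound \emph{grows} with $t$, and your later claim that ``the ISS route gives $\|\vx_t - \bar{\vx}_t\| \leq 2\gamma(\|\vu\|_\infty)$ once the transient has decayed'' is asserted, not derived, and drops the residual $2\beta \leq \epsilon/2$ contribution.

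The root cause is the ``subtlety'' you flag as an asset. You take the nominal trajectory literally as $\bar{\vx}_{t+1} = \mA\bar{\vx}_t + \mB\pi(\vx_t)$, making the error recursion purely linear, $\ve_{t+1} = \mA\ve_t + \mB\vu_t$. Under that reading the nominal trajectory is \emph{not} a closed-loop trajectory of $\pi$, so ISS cannot be applied to $\bar{\vx}_t$ at all, and for unstable $\mA$ (e.g.\ the double integrator used in the paper's experiments) $\|\ve_t\|$ diverges for a fixed nonzero perturbation, so the all-$t$ claim cannot be proved this way --- no tail argument can rescue it. The paper's proof is written for the intended reading, consistent with \Cref{def:locIncStab}: the nominal trajectory is driven by its own feedback, $\bar{\vx}_{t+1} = \mA\bar{\vx}_t + \mB\pi(\bar{\vx}_t)$ (the $\pi(\vx_t)$ in the lemma statement is a typo), which is why the paper's recursion carries the contraction factor $\|\mA\| + \|\mB\|L$ rather than just $\|\mA\|$, and why ISS with zero input is available for $\bar{\vx}_t$ in the long-time regime. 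To repair your proof, adopt that reading, redo the short-time telescoping with the $(\|\mA\| + \|\mB\|L)$ factor (the bound $(1+\|\mA\|+(1+L)\|\mB\|)^t\|\vu\|_\infty$ still holds), and replace your tail discussion with the triangle-inequality-plus-ISS computation above; the horizon-$K$ remark then follows by inverting the two branches of $v$ as you sketched.
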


\begin{proof}
We first show that $\mathcal{\gamma}' \in \mathcal{K}$. Since $\gamma \in \mathcal{K}, \gamma^{-1} \in \mathcal{K}$. Furthermore, as $\mathcal{B}^{-1}$ is monotonically decreasing, $C^{-\mathcal{B}^{-1}(\epsilon)}$ is monotonically non-decreasing in $\epsilon$ for $C \geq 1$ and $\epsilon \cdot C^{-\mathcal{B}^{-1}(\epsilon)} \in \mathcal{K}$. Since $v$ is the minimum of two class $\mathcal{K}$ gains, it follows that $v \in \mathcal{K}$ and therefore $\gamma' := v^{-1} \in \mathcal{K}$.

We now prove that $\pi$ is incrementally input-to-state stabilizing with gain $\gamma' := v^{-1}$. Fix any $\epsilon > 0$. WTS that for $\|\vu\|_{\infty} \leq v(\epsilon)$, $\|\vx - \bar{\vx}\|_{\infty} \leq \epsilon$. First, consider $t \leq \beta^{-1}(\epsilon/4)$. 
Note that,
\begin{align*}
    \|\vx_{t+1} - \bar{\vx}_{t+1}\| \leq (\|\mA\| + \|\mB\|L)\|\vx_t - \bar{\vx}_t\| + \|\mB\| \cdot \|\vu\|_\infty.
\end{align*}
By telescoping we can write,
\[\|\vx_{t} - \bar{\vx}_{t}\| \leq (1 + \|\mA\| + (1+L)\|\mB\|)^t\cdot\|\vu\|_{\infty} \quad.\numberthis\label{eq:error_bound_a}\]
We then use that $t \leq \mathcal{B}^{-1}(\epsilon/4)$ and that $\|\vu\|_{\infty} \leq v(\epsilon) \leq \epsilon (1 + \|\mA\| + (1 + L)\|\mB\|)^{-\mathcal{B}^{-1}(\epsilon/4)}$. Combining with \Cref{eq:error_bound_a}, we get $\|\vx_t - \bar{\vx}_t\| \leq \epsilon$.
We next consider the case $t \geq \mathcal{B}^{-1}(\epsilon/4)$. We now finish the proof by using the input-to-state stability of $\pi$:
\begin{align*}
    \|\vx_t - \bar{\vx}_t\| &\leq \|\vx_t\| + \|\bar{\vx}_t\| \leq 2\beta(\|\vx_0\|,t) + \gamma(\|\vu\|_{\infty})  
    \leq 2\beta(\|\vx_0\|, \mathcal{B}^{-1}(\epsilon/4)) + \gamma(\gamma^{-1}(\epsilon/2)) 
    \leq \frac{\epsilon}{2} + \frac{\epsilon}{2} \leq \epsilon.
\end{align*}
\end{proof}
For stabilizable systems and proper choices of cost function and constraints, the barrier MPC is ISS~\cite{pouilly2020stability, feller2014barrier}. We therefore impose the following assumption on the parameters of the barrier MPC. 
\begin{assumption}\label{assum:bmpc_iss}The parameters of the barrier MPC controller $\pibmpc$ in \Cref{def:barr_mpc_formal} are chosen such that the system is input-to-state stabilizing. Consequently, by \Cref{lem:iss_diss} and  \Cref{cor:first_der_bound_ueta}, 
it is incrementally input-to-state stabilizing over $t \leq K$ for some with linear gain function $\gamma$. 
\end{assumption}
This shows that $\pibmpc$ satisfies the even weaker notion of locally $\delta$ISS as required in  \Cref{assum:stable}.
We now state our end-to-end learning guarantee, an extension of \Cref{prop:goodness_of_learned_policy}. 

\begin{corollary}\label{thm:final}Let $\pibmpc$ be a barrier MPC as in \Cref{def:barr_mpc_formal} that satisfies \Cref{assum:bmpc_iss} such that it is $(\kappa,\gamma)$-locally-$\delta$ISS for linear gain $\gamma$ over a horizon length $K$. Let $L$ be as defined in \Cref{cor:first_der_bound_ueta} and  overload $\gamma$ to denote the constant associated with $\gamma(\cdot)$. Let $m$ be the number of constraints and $r,R$ be the radii associated with the constraint polytope in \Cref{def:barr_mpc_formal}.

Assume that $\|\vx_0\| \leq B_x$ and let $\pilearned$ be chosen such that, for some $\epsilon_0, \epsilon_1 > 0$ and given $N$ sample trajectories of length $K$ under $\pibmpc$ from an initial condition distribution $\mathcal{D}$,
\begin{align*}
    \mathbb{P}_{\vx_0 \sim \mathcal{D}}\bigg[&\sup_{0 \leq k \leq K}\|\pilearned(\vx_k) - \pibmpc(\vx_k)\| \leq \epsilon_0/N \,\,\, \wedge \,\,\,  \sup_{0 \leq k \leq K}\left\|\frac{\partial\pilearned}{\partial \vx}(\vx_k) - \frac{\partial \pibmpc}{\partial \vx}(\vx_k)\right\| \leq \epsilon_1/N \bigg] \geq 1 - \delta. 
\end{align*}
Let $C, L, \philb > 0$ be defined as in \Cref{thm:hess_ueta_bounded} and $\mP, \mG$ be the matrices associated with the original MPC problem, \Cref{eq:reformulated}. Then, provided that $N \geq \mathcal{O}\left(\max\left\{\frac{C\gamma^2}{\philb}(\|\mP\| + \|\mG\|L), \gamma, \frac{\gamma}{\kappa}\right\} \epsilon_0\right)$, and $N \geq 4\gamma\epsilon_1$, it follows that,
\begin{align*}
    \|\hat{\vx}_t - \vx^\eta_t\| \leq \frac{8\gamma\epsilon_0}{N} \quad \forall\, 0 \leq t \leq K.
\end{align*}
Per $\cref{thm:hess_ueta_bounded}$, $\philb$ scales with either $\mathcal{O}(\eta)$ or $\mathcal{O}(\sqrt{\eta})$, depending on the direction of $\vu^* - \vu^\eta$.  
\end{corollary}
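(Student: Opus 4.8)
The plan is to recognize \Cref{thm:final} as a direct instantiation of the black-box imitation guarantee \Cref{prop:goodness_of_learned_policy}, whose hypotheses (smoothness of expert and learned policy, local $\delta$ISS of the expert, and the generalization bound \Cref{assum:bounds}) we verify one at a time using the structural results already established for barrier MPC, and then translate its quantitative precondition into the stated lower bound on $N$.

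First I would verify smoothness. Since $\pibmpc(\vx)$ is the first $d_u$-block of the open-loop minimizer $\vueta(\vx)$, composing with the coordinate projection (which has unit spectral norm) shows that any bound on $\left\|\DuetaDx\right\|$ and on $\left\|\tfrac{\partial^2 \vueta}{\partial \vx_0^2}\right\|$ descends to the Jacobian and Hessian of $\pibmpc$. Hence \Cref{cor:first_der_bound_ueta} supplies $L_0 = L$ and \Cref{thm:hess_ueta_bounded} supplies $L_1 = \tfrac{C}{\philb}(\|\mP\| + \|\mG\|L)^2$, so $\pibmpc$ is $(L_0, L_1)$-smooth in the sense of \Cref{def:def_smoothness}; the learned policy $\pilearned$ is drawn from the smooth parametric class $\Pi$ of \Cref{sec:learning_guarantees}, so \Cref{assumption:smoothness_of_exp_and_learned} holds for both. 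Next, stability: \Cref{assum:bmpc_iss} states the barrier-MPC parameters are chosen so the closed loop is input-to-state stabilizing, and \Cref{lem:iss_diss}, applied with the Lipschitz constant $L$ of \Cref{cor:first_der_bound_ueta}, upgrades this to $(\kappa,\gamma)$-local $\delta$ISS with linear gain $\gamma$ over $t \le K$, so \Cref{assum:stable} holds.

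With these constants in hand and the assumed generalization bounds on $\pilearned$ matching \Cref{assum:bounds}, I would apply \Cref{prop:goodness_of_learned_policy}. Its precondition $\tfrac{\epsilon_0}{N} \le \min\{\tfrac{1}{16\gamma^2 L_1}, \tfrac{1}{16\gamma}, \tfrac{\kappa}{8\gamma}\}$, after substituting $L_1 = \tfrac{C}{\philb}(\|\mP\| + \|\mG\|L)^2$ and absorbing numerical constants into $\mathcal{O}(\cdot)$, is exactly the stated requirement $N \ge \mathcal{O}\!\left(\max\{\tfrac{C\gamma^2}{\philb}(\|\mP\| + \|\mG\|L),\, \gamma,\, \tfrac{\gamma}{\kappa}\}\,\epsilon_0\right)$, while $\tfrac{\epsilon_1}{N} \le \tfrac{1}{4\gamma}$ becomes $N \ge 4\gamma\epsilon_1$; the conclusion $\|\hat{\vx}_t - \vx^\eta_t\| \le \tfrac{8\gamma\epsilon_0}{N}$ for $0 \le t \le K$ is then the verbatim output of \Cref{prop:goodness_of_learned_policy}. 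For the final remark, I would read the $\eta$-dependence of $\philb$ off \Cref{eq:philb}: it enters only through $\sqrt{\eta + \|\vu^\star - \mK_0\vx_0\|_{\mH}^2} - \|\vu^\star - \mK_0\vx_0\|_{\mH}$, which equals $\sqrt{\eta}$ when $\vu^\star = \mK_0\vx_0$ and, by a first-order expansion, equals $\tfrac{\eta}{2\|\vu^\star - \mK_0\vx_0\|_{\mH}} + O(\eta^2) = \mathcal{O}(\eta)$ otherwise, consistent with \Cref{thm:error_bound_barrier_directional}.

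The result is mostly bookkeeping; the one place needing care is the reduction from the open-loop quantities $\vueta, \DuetaDx, \tfrac{\partial^2 \vueta}{\partial \vx_0^2}$ to the closed-loop policy $\pibmpc$ and its derivatives, and making the big-$\mathcal{O}$ translation in the third step internally consistent --- in particular tracking whether the factor $(\|\mP\| + \|\mG\|L)$ survives to the first or second power after constants are absorbed. A secondary loose end is justifying that $\pilearned$ inherits $(L_0,L_1)$-smoothness, which is where the standing assumption $\pilearned \in \Pi$ for a smooth class is invoked.
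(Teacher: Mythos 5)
Your proposal takes essentially the same route as the paper: the paper's proof is a one-line application of \Cref{prop:goodness_of_learned_policy} with the smoothness constant from \Cref{thm:hess_ueta_bounded} and stability via \Cref{assum:bmpc_iss} and \Cref{lem:iss_diss}, which is exactly what you spell out in more detail. The loose end you flag is genuine but is an inconsistency in the paper rather than in your argument: \Cref{thm:hess_ueta_bounded} gives $L_1 \le \tfrac{C}{\philb}(\|\mP\|+\|\mG\|L)^2$, so the condition on $N$ should carry the squared factor as in your substitution, whereas the corollary statement and its proof quote only the first power.
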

\begin{proof}This is an application of \Cref{prop:goodness_of_learned_policy}, combined with our $L_1$ smoothness bound on the Hessian (\Cref{thm:hess_ueta_bounded}), where, 
$L_1 \leq\frac{C}{\philb}(\|\mP\| + \|\mG\|L).$
\end{proof}

In conclusion, our analysis in this section shows that  our key assumption of locally input-to-state stability (\cref{assum:stable}) of the barrier MPC controller $\pibmpc$ is satisfied. We also showed  smoothness  of $\pibmpc$. Per the results of \citet{pfrommer2022tasil}, these two properties together imply the learning guarantees in \cref{{thm:final}}. This concludes our theoretical analysis of imitation learning using the barrier MPC. 
We now corroborate our theory with numerical experiments. 

\begin{figure*}
    \centering
    \includegraphics[width=0.95\linewidth]{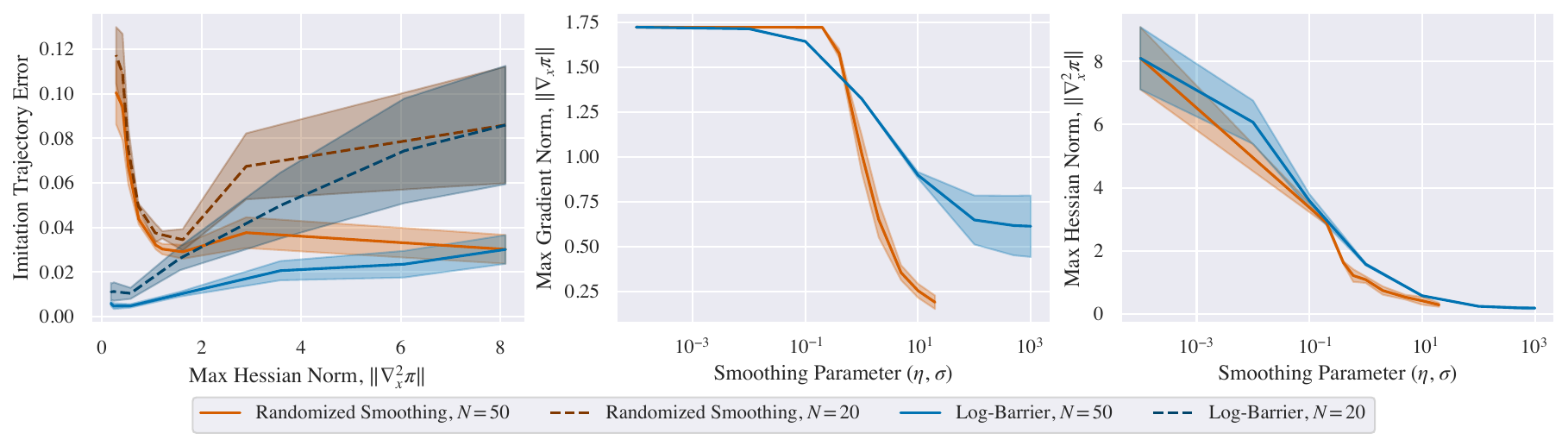}
    \caption{Left: The imitation error $\max_{t} \|\hat{x} - x^\star\|$ for the trained MLP over 5 seeds, as a function of the expert smoothness for both randomized smoothing and log-barrier MPC. Center, Right: The $L_0$ (gradient norm) and $L_1$ (hessian norm) smoothness of $\pi^\star$ as a function of the smoothing parameter. This figure appeared in our previous work \cite{pfrommer2024sample}.}
    \label{fig:experiments}
\end{figure*}

\section{Experiments}
The experiments presented below first appeared in our previous work \cite{pfrommer2024sample}. We include these here for completeness.

We demonstrate the advantage of barrier MPC over randomized smoothing for the double integrator system visualized in \Cref{fig:explicit_mpc}. The matrices describing the dynamics are $\mA = \begin{bsmallmatrix} 1 & 1 \\ 0 & 1\end{bsmallmatrix}$ and $\mB = \begin{bsmallmatrix}0 \\ 1\end{bsmallmatrix}$, 
and the cost matrices are given by $\mQ_t = \mI$, $\mR_t = 0.01 \mI$, with horizon length $T = 10$. Our constraints for \Cref{def:barr_mpc_formal} are $\|\vx\|_{\infty} \leq 10$ and  $\|\vu\|_{\infty} \leq 1$. This is the same setup as in \cite{ahn2023model}, which we note asymptotically stabilizes the system to the origin. 

We sample $N \in \{20, 50\}$ trajectories of length $K = 20$ using $\pibmpc$ and $\pirs$ and smoothing parameters $\eta$ (\Cref{def:barr_mpc_formal}) and $\sigma$ (\Cref{def:rand_smoo}) 
ranging from $10^{-4}$ to $10^3$ and $10^{-4}$ to $20$, respectively. We use $\mathcal{P} = \mathcal{N}(0,I)$ for the randomized smoothing distribution. For each parameter set, we trained a 4-layer multi-layer perceptron (MLP) using GELU activations \cite{hendrycks2016gaussian} to ensure smoothness of $\Pi$. We used AdamW~\cite{loshchilov2018decoupled} with a learning rate of $3 \cdot 10^{-4}$ and weight decay of $10^{-3}$ in order to ensure boundedness of the weights (see \cite{pfrommer2022tasil}).

We visualize the smoothness properties of the chosen expert $\piexpert$ of each method (either $\pibmpc$ or $\pirs$) across the choices of $\eta, \sigma$ in \Cref{fig:experiments}. For small Hessian norms (i.e. the large $\eta, \sigma$ regime), barrier MPC has larger gradient norm $\|\nabla \piexpert\|$ than randomized smoothing. 
This shows that $\pibmpc$ prevents oversmoothing in comparison to $\pirs$. While randomized smoothing reduces $\|\nabla^2 \piexpert\|$ by essentially flattening the function, $\pibmpc$ achieves equally smooth functions while still maintaining control of the system. This effect is also  seen in \Cref{fig:smoothing_contours}, where we visualize the barrier MPC controller for different $\eta$ and see that, even for large $\eta$, we successfully stabilize to the origin.

One interesting phenomenon is that the maximum gradient of $\pibmpc$ begins decreasing much earlier than $\pirs$. This is due to the fact that $\pirs$  smooths only locally, meaning that if the smoothing radius is sufficiently small, the gradient will not be affected. Meanwhile, $\pibmpc$  always performs a \emph{global} form of smoothing, so even for small $\eta$, the controller is smoothed everywhere.

In \Cref{fig:experiments}, we also compare the trajectory error when imitating trajectories from $\pirs, \pibmpc$ for equivalent levels of smoothness. We can see that for $N = 20$ and $N=50$, $\pibmpc$ significantly outperforms $\pirs$ across all smoothness levels. This effect is particularly pronounced in the very smooth 
regime, where imitating $\pirs$ proves unstable due to the inherit instability of $(\mA, \mB)$, leading to extremely large imitation errors. Meanwhile, $\pibmpc$ is strictly easier to imitate the more smoothing that is applied. Overall, these experiments confirm our hypothesis that not all smoothing techniques perform equally and that barrier MPC is an effective smoothing technique that outperforms randomized smoothing for the purposes of imitation learning.

\section{Discussion}

We consider two methods for smoothing MPC policies for constrained linear systems: randomized smoothing and barrier MPC. 
While the former is known to have the theoretically optimal  ratio of approximation error to Hessian norm, 
it may not preserve the stability or constraint satisfaction properties of the underlying controller and hence is not always well-suited for controls applications. 
We show that the log-barrier-based MPC yields a smooth control with  optimal error to smoothness ratio along some direction. 
Additionally, it better ensures constraint satisfaction while also retaining the stability properties of the original policy. 
We show how these properties enable theoretical guarantees when learning barrier MPC and demonstrate experimentally its better performance  compared to a randomized smoothing baseline. 

Our key technical contribution towards proving the smoothness  of barrier MPC is a lower bound  on the optimality gap of the analytic center associated with a convex Lipschitz function, which we hope could be of independent interest to the broader optimization community. Extending our results to smoothing nonlinear MPC policies would be a fruitful direction for future work.

\section*{Acknowledgements}
We gratefully acknowledge funding from  ONR N00014-23-1-2299 and a Vannevar Bush Fellowship from the Office of Undersecretary of Defense for Research and Engineering (OUSDR\&E).

\printbibliography

\newpage

\appendix

\newcommand\SmallMatrix[1]{{%
  \tiny\arraycolsep=0.3\arraycolsep\ensuremath{\begin{pmatrix}#1\end{pmatrix}}}}

\appendix
\section{Technical Results from Matrix Analysis}\label{sec:app-matrix-formulas} 
We use the notation introduced in \Cref{sec:notation}. 
Additionally, we use $\ve_i$ to denote the vector with one at the $i^\mathrm{th}$ coordinate and zeroes at the remaining coordinates. 
We first collect the following relevant facts from matrix analysis before proving our technical results. 

\begin{fact}[\cite{horn2012matrix}]\label{def:adjugate_cofactor}
    Given a square matrix $\mA\in \R^{n\times n}$, its $(i, j)^\mathrm{th}$ minor $\mM_{i,j}$, is defined as  the determinant of the $(n-1)\times (n-1)$ matrix resulting from deleting row $i$ and column $j$ of $\mA$. Next, the $(i,j)^\mathrm{th}$ cofactor is defined to be the $(i,j)^\mathrm{th}$ minor scaled by $(-1)^{i+j}$: \[\mC_{ij} = (-1)^{i+j}\mM_{i,j}.\numberthis\label{eq:def_cofactor_minor}\] We then define the cofactor matrix $\mC\in\R^{n\times n}$ of $\mA$  as the matrix of cofactors of all entries of $\mA$, i.e., $\mC = ((-1)^{i+j} \mM_{i,j})_{1\leq i, j\leq n}.$ The adjugate of $\mA$ is the transpose of the cofactor matrix $\mC$ of $\mA$, and hence its $(i,j)^\mathrm{th}$ entry may be expressed as: \[ \adj(\mA)_{ij} = (-1)^{i+j}\mM_{j,i}.\numberthis\label{eq:adj_in_terms_of_det}\] In particular, if the matrix $\mA$ is \emph{symmetric}, then the $(i,j)^\mathrm{th}$ minor equals the $(j,i)^\mathrm{th}$ minor, implying \[ \adj(\mA)_{ij} = \adj(\mA)_{ji} \text{ for all } i, j\in [n].\numberthis\label{eq:symm_mat_symm_adj}\] The minors of a matrix are also useful in computing its determinant. Specifically, the Laplace expansion of a matrix $\mA$ along its column $j$ is given as:\[ \det(\mA) = \sum_{i = 1}^n (-1)^{i+j} a_{ij} \mM_{i,j}.\numberthis\label{eq:laplace_exp_determinants}\] Finally, the adjugate $\adj(\mA)$ also satisfies the following important property:  \[\adj(\mA) \cdot \mA = \mA\cdot\adj(\mA) = \det(\mA) \cdot \mI.\numberthis\label{eq:adj_det_inv_connection}\] 
\end{fact}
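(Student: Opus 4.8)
The plan is to dispatch the items of this Fact in order of increasing depth, since the first few are essentially definitions and the remaining ones are classical consequences of the permutation formula for the determinant. Equations \eqref{eq:def_cofactor_minor} and \eqref{eq:adj_in_terms_of_det} merely record the definitions of the cofactor $\mC_{ij}$ and of the adjugate (the transpose of the cofactor matrix), so there is nothing to prove beyond unwinding notation; I would only emphasize that the index swap in \eqref{eq:adj_in_terms_of_det}, $\adj(\mA)_{ij} = (-1)^{i+j}\mM_{j,i}$, is exactly the transpose operation distinguishing $\adj(\mA)$ from the cofactor matrix.

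For the symmetry claim \eqref{eq:symm_mat_symm_adj}, fix $i \neq j$. The $(n-1)\times(n-1)$ submatrix obtained from a symmetric $\mA$ by deleting row $i$ and column $j$ is, after the order-preserving relabeling of the surviving indices, the transpose of the submatrix obtained by deleting row $j$ and column $i$; this is immediate from $a_{k\ell}=a_{\ell k}$. Since the determinant is invariant under transposition, $\mM_{j,i}=\mM_{i,j}$, and combining with \eqref{eq:adj_in_terms_of_det} gives $\adj(\mA)_{ij}=(-1)^{i+j}\mM_{j,i}=(-1)^{j+i}\mM_{i,j}=\adj(\mA)_{ji}$.

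The one genuinely substantive ingredient is the Laplace (cofactor) expansion \eqref{eq:laplace_exp_determinants}, which I would derive from $\det(\mA)=\sum_{\pi\in S_n}\operatorname{sgn}(\pi)\prod_{k=1}^n a_{k,\pi(k)}$ by partitioning $S_n$ according to which entry of column $j$ a permutation selects, i.e.\ by the value $i := \pi^{-1}(j)$. For fixed $i$, summing over the permutations with $\pi(i)=j$ and carefully accounting for the sign incurred when the entry $(i,j)$ is moved to the top-left corner produces the factor $(-1)^{i+j}$ multiplied by the signed sum over the permutations of the complementary index sets, which is precisely $\mM_{i,j}$; summing over $i$ yields the claim.

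Finally, for the fundamental identity \eqref{eq:adj_det_inv_connection}, I would compute the $(i,\ell)$ entry of $\mA\cdot\adj(\mA)$ as $\sum_{k=1}^n a_{ik}\,\adj(\mA)_{k\ell}=\sum_{k=1}^n a_{ik}(-1)^{k+\ell}\mM_{\ell,k}$. When $\ell=i$ this is exactly the Laplace expansion of $\det(\mA)$ along row $i$, giving $\det(\mA)$; when $\ell\neq i$ it is the Laplace expansion along row $\ell$ of the matrix obtained from $\mA$ by overwriting row $\ell$ with row $i$, which has two identical rows and hence determinant $0$. The companion identity $\adj(\mA)\cdot\mA=\det(\mA)\,\mI$ follows symmetrically from column expansions, or by applying the row result to $\mA^\top$ together with $\adj(\mA^\top)=\adj(\mA)^\top$. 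I do not expect a real obstacle here: every claim is classical and already attributed to \cite{horn2012matrix}; the only point warranting careful writing is the sign bookkeeping in the Laplace expansion, which is the crux of the whole Fact.
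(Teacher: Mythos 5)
Your proof is correct: the paper itself offers no proof of this Fact, citing it directly to \cite{horn2012matrix}, and your argument (definitions, transpose-invariance of the determinant for the symmetric case, Laplace expansion via partitioning the permutation sum by $\pi^{-1}(j)$, and the adjugate identity via the repeated-row argument) is exactly the classical development given in that reference. The only cosmetic remark is that \eqref{eq:laplace_exp_determinants} is stated as a column expansion while your verification of \eqref{eq:adj_det_inv_connection} uses row expansions, but these are interchangeable by transposition together with $\adj(\mA^\top)=\adj(\mA)^\top$, as you note.
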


\begin{fact}[Matrix determinant lemma, \cite{horn2012matrix}]\label{fact:block-matrix-determinant}
For any $\mM$, the determinant for a unit-rank update may be expressed as: \[ \det(\mM + \vu\vv^\top) = \det(\mM) + \vv^\top\adj(\mM) \vu.\] 
\end{fact}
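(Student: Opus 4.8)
The plan is to prove the identity directly from the multilinearity of the determinant in its columns, together with the Laplace (cofactor) expansion recorded in \Cref{def:adjugate_cofactor}; this has the advantage of requiring no case distinction on whether $\mM$ is invertible. Write $\mM$ in terms of its columns $\mM_{(1)},\dots,\mM_{(n)}\in\R^n$, and observe that the $j$-th column of $\mM+\vu\vv^\top$ is $\mM_{(j)}+v_j\vu$. Expanding $\det(\mM+\vu\vv^\top)$ by multilinearity across all $n$ columns gives a sum of $2^n$ determinants, indexed by the set $T\subseteq[n]$ of columns in which the summand $v_j\vu$ (rather than $\mM_{(j)}$) is selected. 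Since the determinant is alternating, any term with $|T|\geq 2$ has two columns that are scalar multiples of the single vector $\vu$, hence vanishes; so only $T=\varnothing$, which contributes exactly $\det(\mM)$, and the singletons $T=\{j\}$ survive.

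The next step is to evaluate the surviving singleton terms. Factoring out the scalar $v_j$ from the term $T=\{j\}$ leaves the determinant of the matrix $\mM^{(j\to\vu)}$ obtained from $\mM$ by replacing column $j$ with $\vu$. Applying the Laplace expansion along column $j$, \Cref{eq:laplace_exp_determinants}, yields $\det(\mM^{(j\to\vu)})=\sum_{i=1}^n(-1)^{i+j}u_i\,\mM_{i,j}$, where $\mM_{i,j}$ is the $(i,j)$-minor of $\mM$ (note it is unaffected by the column swap, since forming it deletes column $j$). Recognizing $(-1)^{i+j}\mM_{i,j}=\adj(\mM)_{ji}$ via the adjugate formula in \Cref{def:adjugate_cofactor}, this sum is precisely $(\adj(\mM)\vu)_j$. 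Summing the surviving terms then gives
\[
\det(\mM+\vu\vv^\top)=\det(\mM)+\sum_{j=1}^n v_j\,(\adj(\mM)\vu)_j=\det(\mM)+\vv^\top\adj(\mM)\,\vu,
\]
as claimed.

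I expect the only real care needed is bookkeeping: making precise that the $2^n$-fold expansion is a legitimate use of $n$-linearity and alternation of the determinant in its columns, that every multi-index with a repeated $\vu$-column genuinely contributes zero, and that the sign conventions for minors/cofactors in \Cref{def:adjugate_cofactor} match the Laplace expansion so the $(-1)^{i+j}$ factors cancel. As a cross-check and alternative derivation, one could instead compute $\det\!\left(\begin{smallmatrix}\mM & \vu\\ -\vv^\top & 1\end{smallmatrix}\right)$ in two ways by Schur complements: eliminating the scalar block gives $\det(\mM+\vu\vv^\top)$, while (when $\mM$ is invertible) eliminating the $\mM$-block gives $\det(\mM)\,(1+\vv^\top\mM^{-1}\vu)=\det(\mM)+\vv^\top\adj(\mM)\,\vu$ using $\adj(\mM)=\det(\mM)\mM^{-1}$, and the singular case then follows since both sides are polynomials in the entries of $\mM$ that agree on the dense set of invertible matrices. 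I would present the direct multilinear argument, as it is self-contained and avoids this limiting step.
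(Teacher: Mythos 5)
Your proof is correct. Note that the paper does not prove this statement at all: it is recorded as a cited \emph{Fact} from \cite{horn2012matrix}, so there is no internal argument to compare against. Your multilinearity-in-columns expansion is the standard self-contained derivation, and the bookkeeping is right: the $2^n$-term expansion is legitimate by $n$-linearity, every multi-index selecting the $\vu$-column twice dies because two columns are scalar multiples of $\vu$ (or carry a zero factor $v_j$), and your use of \Cref{eq:laplace_exp_determinants} together with $\adj(\mM)_{ji}=(-1)^{i+j}\mM_{i,j}$ from \Cref{eq:adj_in_terms_of_det} matches the paper's sign conventions exactly, giving $(\adj(\mM)\vu)_j$ for the $j$-th singleton term. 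A genuine advantage of your route is that it needs no invertibility assumption on $\mM$, whereas the Schur-complement cross-check you sketch requires the density/continuity (polynomial identity) argument to cover singular $\mM$; since \Cref{cor:adj_unit_rank_update} and \Cref{lem:split_into_adj} in the paper apply this fact in situations where principal submatrices may be singular, presenting the direct multilinear argument is the right choice.
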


\begin{fact}[Sherman-Morrison-Woodbury identity]\label{fact:shermanMorrisonWoodbury}
    Given conformable matrices $\mA, \mC, \mU, $ and $\mV$ such that $\mA$ and $\mC$ are invertible, we have 
\[ (\mA + \mU \mC \mV)^{-1} = \mA^{-1} - \mA^{-1} \mU (\mC^{-1} + \mV \mA^{-1} \mU )^-1 \mV \mA^{-1}.\]
\end{fact}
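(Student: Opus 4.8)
The plan is to prove the identity by \emph{direct verification}. Let $\mX := \mA^{-1} - \mA^{-1}\mU(\mC^{-1} + \mV\mA^{-1}\mU)^{-1}\mV\mA^{-1}$ denote the matrix on the right-hand side; I will check that $(\mA + \mU\mC\mV)\mX = \mI$ and, symmetrically, $\mX(\mA + \mU\mC\mV) = \mI$, so that $\mX$ is the (unique) two-sided inverse of $\mA + \mU\mC\mV$. The one quantity whose invertibility is not among the stated hypotheses is the ``capacitance matrix'' $W := \mC^{-1} + \mV\mA^{-1}\mU$ appearing inside $\mX$; I would note at the outset that $W$ is invertible precisely when $\mA + \mU\mC\mV$ is (a byproduct of the block-matrix argument below), so that both sides of the claimed identity are simultaneously well defined, and it is exactly this case that we verify.

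For the verification, write $\mX = \mA^{-1} - \mA^{-1}\mU W^{-1}\mV\mA^{-1}$ and expand
\begin{align*}
(\mA + \mU\mC\mV)\mX
&= \mA\mX + \mU\mC\mV\mX \\
&= \mI - \mU W^{-1}\mV\mA^{-1} + \mU\mC\bigl(\mV\mA^{-1} - \mV\mA^{-1}\mU W^{-1}\mV\mA^{-1}\bigr).
\end{align*}
The key algebraic move is to substitute $\mV\mA^{-1}\mU = W - \mC^{-1}$ inside the parenthesized group; cancelling $\mC$ against $\mC^{-1}$ and $W$ against $W^{-1}$ collapses $\mU\mC\mV\mX$ to $\mU W^{-1}\mV\mA^{-1}$, which exactly cancels the middle term and leaves $\mI$. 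The computation $\mX(\mA + \mU\mC\mV) = \mI$ is entirely analogous (alternatively, once one side is established, uniqueness of the inverse delivers the other for free).

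As a cleaner alternative I would record the block-matrix derivation: form $\mM := \begin{bmatrix} \mA & \mU \\ -\mV & \mC^{-1} \end{bmatrix}$ and compute the $(1,1)$ block of $\mM^{-1}$ in two ways --- once by eliminating the $(1,1)$ block (Schur complement $\mC^{-1} + \mV\mA^{-1}\mU = W$), which yields $\mA^{-1} - \mA^{-1}\mU W^{-1}\mV\mA^{-1} = \mX$, and once by eliminating the $(2,2)$ block (Schur complement $\mA + \mU\mC\mV$), which yields $(\mA + \mU\mC\mV)^{-1}$; equating the two expressions is precisely the claimed identity, and this route also produces the invertibility of $W$ at no extra cost. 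I do not anticipate any genuine obstacle here: this is a classical identity and either argument is a short, self-contained bookkeeping computation. The only point deserving a word of care is the implicit well-definedness of both sides, i.e.\ invertibility of $W$, which the block-matrix viewpoint handles transparently.
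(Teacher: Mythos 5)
Your proof is correct. The paper states this identity as a classical Fact and gives no proof of its own, so there is nothing to compare against: your direct verification that $(\mA + \mU\mC\mV)\bigl(\mA^{-1} - \mA^{-1}\mU(\mC^{-1}+\mV\mA^{-1}\mU)^{-1}\mV\mA^{-1}\bigr) = \mI$ (using the substitution $\mV\mA^{-1}\mU = W - \mC^{-1}$), as well as the alternative Schur-complement/block-matrix derivation, is the standard argument and goes through without issue. Your observation that the statement implicitly requires invertibility of the capacitance matrix $\mC^{-1}+\mV\mA^{-1}\mU$ (equivalently, of $\mA+\mU\mC\mV$), which the paper's hypotheses omit, is a fair and correct point of care.
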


\noindent We crucially use  the following expansion for determinants of perturbed matrices. 

\begin{fact}[{\citet[Theorem $2.3$]{ipsen2008perturbation}}]\label{lem:det_A_Lambda}
Given $\mA \in \R^{\nconstr \times \nconstr}$
as in \Cref{fact:block-matrix-determinant}, positive diagonal matrix $\mLambda=\Diag(\boldsymbol{\lambda}) \in \R^{\nconstr \times \nconstr}$, and $\mA_{\bsigma}$  denoting the principal submatrix formed by selecting $\mA$'s rows and columns indexed by $\sigma\in \{0, 1\}^{\nconstr}$, we have
\[\det(\mA + \mLambda) = \sum_{\bsigma \in \{0,1\}^n}\left(\prod_{i=1}^m \lambda_i^{1 - \sigma_i}\right)\det(\mA_{\bsigma})\] 
\end{fact}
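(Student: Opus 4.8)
The plan is to expand $\det(\mA + \mLambda)$ by multilinearity of the determinant in its columns. Writing $a_1,\dots,a_n$ for the columns of $\mA$, the $i$-th column of $\mA+\mLambda$ is $a_i+\lambda_i\ve_i$. Since $\det$ is a multilinear function of the columns, expanding each of the $n$ columns into its two summands yields $2^n$ terms, one for every subset $T\subseteq[n]$ recording the set of columns in which the summand $\lambda_i\ve_i$ (rather than $a_i$) was chosen. Factoring the scalars $\lambda_i$, $i\in T$, out of the determinant, the term indexed by $T$ equals $\bigl(\prod_{i\in T}\lambda_i\bigr)\det(M_T)$, where $M_T$ is the matrix whose $i$-th column is $\ve_i$ for $i\in T$ and $a_i$ for $i\notin T$.

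The next step is to evaluate $\det(M_T)$. Applying a simultaneous row-and-column permutation that lists the indices of $T$ first leaves the determinant unchanged, because the associated permutation matrix $P$ contributes $\det(P)^2 = 1$; this brings $M_T$ into the block upper-triangular form $\left(\begin{smallmatrix}\mI & B\\ \zero & \mA_{\bsigma}\end{smallmatrix}\right)$, where the top-left identity block comes from the columns $\ve_i$, $i\in T$ (which are supported only on the coordinates in $T$), the block $B$ collects the cross entries $A_{ij}$ with $i\in T$, $j\notin T$, and the bottom-right block is exactly the principal submatrix of $\mA$ on the index set $[n]\setminus T$. Hence $\det(M_T)=\det(\mA_{[n]\setminus T})$.

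Finally I would re-index the sum by complements: to each $T\subseteq[n]$ associate the indicator vector $\bsigma\in\{0,1\}^n$ with $\sigma_i=1\iff i\notin T$. Then $\prod_{i\in T}\lambda_i=\prod_{i=1}^n\lambda_i^{1-\sigma_i}$ and $\mA_{[n]\setminus T}=\mA_{\bsigma}$, and ranging over all $T\subseteq[n]$ is the same as ranging over all $\bsigma\in\{0,1\}^n$. Substituting gives the claimed identity $\det(\mA+\mLambda)=\sum_{\bsigma}\bigl(\prod_{i=1}^n\lambda_i^{1-\sigma_i}\bigr)\det(\mA_{\bsigma})$.

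The one delicate point is the sign bookkeeping inside $\det(M_T)$: a naive iterated Laplace expansion along the columns $\ve_i$, $i\in T$, forces one to track how the remaining row/column indices shift after each deletion, which is error-prone. The simultaneous permutation into block-triangular form sidesteps this entirely, using only that conjugation by a permutation matrix preserves the determinant and that each $\ve_i$ with $i\in T$ has zero entries outside $T$. (An alternative would be to invoke the generalized Laplace/Cauchy–Binet expansion directly, but the permutation argument is the most self-contained.)
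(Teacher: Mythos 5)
Your proposal is correct. Note that the paper does not prove this statement at all: it is quoted as an external fact, citing Theorem 2.3 of Ipsen and Rehman, so there is no in-paper argument to compare against. Your derivation is a clean, self-contained proof of that imported result: multilinearity of the determinant in the columns $a_i+\lambda_i\ve_i$ gives the $2^n$ terms indexed by the set $T$ of columns where $\lambda_i\ve_i$ is chosen; the simultaneous row/column permutation (which preserves the determinant since $\det(P)\det(P^\top)=1$) puts the mixed matrix into block upper-triangular form with identity top-left block and the principal submatrix $\mA_{[n]\setminus T}$ bottom-right, so each term is $\bigl(\prod_{i\in T}\lambda_i\bigr)\det(\mA_{[n]\setminus T})$; re-indexing by the complementary indicator $\bsigma$ yields exactly the claimed sum (including the boundary cases $T=[n]$, where the empty principal minor is $1$, and $T=\emptyset$). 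Your block-permutation step is indeed the right way to avoid the sign bookkeeping of iterated Laplace expansion. For context, the closest argument the paper does carry out is the adjugate analogue, \Cref{lem:split_into_adj}, which it proves by induction on the number of nonzero entries of $\mLambda$ via the rank-one update formula (\Cref{cor:adj_unit_rank_update}); your multilinearity-plus-permutation technique is the more direct route for the determinant identity itself, while the paper's inductive rank-one-update style is what generalizes to the adjugate statement it actually needs to prove.
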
 

\noindent We now state and prove a technical result that we build upon to prove \Cref{lem:split_into_adj}, which we in turn use in the proof of  \Cref{thm:convex_combination}.

\begin{lemma}\label{lem:adj_A}
Consider a matrix $\mA = \begin{bmatrix} 
a & \vb^\top \\ \vb &  \mD\end{bmatrix}\in \R^{n\times n},$ where $\mD\in \R^{(n-1)\times(n-1)}$ is a symmetric matrix. Then the adjugate $\adj(\mA)$ may be expressed as follows:
\[\adj (\mA) = \begin{bmatrix}
    \det(\mD) & -  \vb^\top \adj(\mD) \\
    -\adj(\mD) \vb & a\cdot \adj(\mD) + \mK
\end{bmatrix},\]
for some matrix $\mK$ independent of $a$. 
\end{lemma}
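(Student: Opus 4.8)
The plan is to compute the adjugate of $\mA$ entrywise using the cofactor formula $\adj(\mA)_{ij} = (-1)^{i+j}\mM_{j,i}$ from \Cref{def:adjugate_cofactor}, and to track how each minor depends on the scalar $a$ in the top-left corner. Since $\mA$ is symmetric (because $\mD$ is symmetric), $\adj(\mA)$ is symmetric too, so it suffices to handle the $(1,1)$ entry, the first-column (equivalently first-row) entries, and the lower-right $(n-1)\times(n-1)$ block separately.

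First I would treat the $(1,1)$ entry: the minor $\mM_{1,1}$ is obtained by deleting row $1$ and column $1$ of $\mA$, which leaves exactly $\mD$, so $\adj(\mA)_{11} = \det(\mD)$, with no dependence on $a$. Next, for an entry $\adj(\mA)_{i1}$ with $i \geq 2$ (and its transpose $\adj(\mA)_{1i}$), the relevant minor $\mM_{1,i}$ comes from deleting row $1$ and column $i$; the resulting matrix has its first column equal to $\vb$ (the part of the original first column below the corner, reindexed) and its remaining columns drawn from $\mD$ with column $i-1$ removed. A Laplace/cofactor expansion of this minor along that first column expresses it as a combination of the cofactors of $\mD$, yielding exactly $\pm(\adj(\mD)\vb)$ for the first column and $\pm(\vb^\top \adj(\mD))$ for the first row after the sign bookkeeping $(-1)^{i+j}$ is carried through; crucially, none of these entries involve $a$, since row $1$ (the only row containing $a$) has been deleted.

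The substantive case is the lower-right block: for $i,j \geq 2$, the minor $\mM_{j,i}$ is the determinant of the matrix obtained from $\mA$ by deleting row $j$ and column $i$, and this matrix \emph{does} still contain the corner entry $a$, sitting in its top-left position. I would expand this determinant (e.g.\ via \Cref{fact:block-matrix-determinant} or a cofactor expansion along the first row/column) to split it into the term proportional to $a$ — which, after collecting, contributes $a$ times the $(i-1,j-1)$ cofactor of $\mD$, i.e.\ $a\cdot\adj(\mD)$ in block form — plus a remaining term that is independent of $a$; calling that $a$-independent part $\mK$ finishes the claim. The main obstacle is purely bookkeeping: getting all the $(-1)^{i+j}$ signs and the reindexing (deleting the first row/column shifts indices by one) to line up so that the scattered cofactors of $\mD$ assemble cleanly into $\adj(\mD)$ rather than some signed permutation of it. I would manage this by writing $\mA = a\, \ve_1\ve_1^\top + \begin{bmatrix} 0 & \vb^\top \\ \vb & \mD\end{bmatrix}$ and using linearity of the determinant in the first row (or \Cref{fact:block-matrix-determinant} with the rank-one update $a\,\ve_1\ve_1^\top$) to isolate the $a$-dependence at the level of $\det(\mA)$ and each minor simultaneously, which makes the ``affine in $a$'' structure transparent and localizes all sign issues to a single computation.
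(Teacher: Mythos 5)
Your proposal is correct and follows essentially the same route as the paper: an entrywise cofactor computation in which the $(1,1)$ minor is $\det(\mD)$, the first row/column cofactors are obtained by Laplace expansion along $\vb$ (and are visibly $a$-independent), and the lower-right cofactors are split into an $a\cdot\adj(\mD)$ part plus an $a$-independent remainder via a rank-one-update determinant identity (\Cref{fact:block-matrix-determinant}). Your suggestion to isolate the $a$-dependence by writing $\mA = a\,\ve_1\ve_1^\top + \begin{bmatrix} 0 & \vb^\top \\ \vb & \mD\end{bmatrix}$ is only a cosmetic variant of the paper's use of the same lemma inside each $(1+i,1+j)$ minor, so no substantive difference remains beyond the sign bookkeeping you already flagged.
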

\begin{proof} Let $\mDmod_{ij}$ be  the $(n-2)\times (n-2)$ matrix obtained by deleting the $i^\mathrm{th}$ row and $j^\mathrm{th}$ column of $\mD$. Let $\mDmod_{j}$ be the $(n-1)\times (n-2)$ matrix formed by removing the $j^\mathrm{th}$ column of $\mD$, and let $\vbmod_i\in \R^{(n-2)}$ be the vector obtained by deleting the  $i^\mathrm{th}$ coordinate of $\vb$. With this notation in hand, we now compute some relevant cofactors. 

First, observe that $\mD$ is the matrix obtained by deleting the first row and column of $\mA$, and hence this fact along with  \Cref{eq:def_cofactor_minor} yields the   $(1,1)^\mathrm{th}$ cofactor:  
\[\mC_{1,1} = \det(\mD).\] 
Second, for some $j >0$, observe that the matrix obtained by deleting the first row of $\mA$ and the $(1+j)^\mathrm{th}$ column of $\mA$  is exactly the horizontal concatenation of $\vb$ and $\mDmod_j$. Applying this observation in \Cref{eq:def_cofactor_minor} then  gives the following expression for the $(1, 1+j)^\mathrm{th}$  cofactor:
\begin{align*}
    \mC_{1, \space 1+j} &= (-1)^{j}\det\left(\begin{bmatrix}
    \vb & \mDmod_j
    \end{bmatrix}\right)
    = (-1)^j \sum_{i=1}^{n-1} b_i (-1)^{i 
 + 1}\det(\mDmod_{ij}) 
    = -\sum_{i=1}^{n-1} b_i \adj(\mD)_{ij} 
    = -[\vb^\top \adj(\mD)]_j, 
\end{align*} where the second step is by using \Cref{eq:laplace_exp_determinants} to expand $\det\left(\begin{bmatrix} \vb & \mDmod_j\end{bmatrix}\right)$ along the column vector $\vb$, and the third step is by \Cref{eq:adj_in_terms_of_det} and \Cref{eq:symm_mat_symm_adj}, which applies since $\mD$ is assumed symmetric.
    Finally, to compute the $(1+i,1+j)^\mathrm{th}$ cofactor, we first construct the matrix obtained by deleting the $(1+i)^\mathrm{th}$ row and $(1+j)^\mathrm{th}$ column of $\mA$. Based on the notation we introduced above, this  may be expressed as $\begin{bmatrix}
        a & \vbmod_j^\top \\
        \vbmod_i & \mDmod_{ij}
    \end{bmatrix}$, from which we have by \Cref{eq:def_cofactor_minor}:
\begin{align*}
    \mC_{1 + i, \space 1 + j} &= (-1)^{i + j}\det\left(\begin{bmatrix}
        a & \vbmod_j^\top \\
        \vbmod_i & \mDmod_{ij}
    \end{bmatrix}\right),\numberthis\label{eq:cofactor_oneplusi_oneplusj}
\end{align*} 
 which we now simplify. To this end, we observe that \[ \det \left( \begin{bmatrix}
     a & \vbmod_j^\top \\ 
    \vbmod_j & \mDmod_{ij}
 \end{bmatrix} \right) = \det\left( \begin{bmatrix}
     a & \vbmod_j^\top \\ 
     0 & \mDmod_{ij} 
 \end{bmatrix}  + \begin{bmatrix}
     0 \\ \vbmod_j
 \end{bmatrix} \cdot \ve_1^\top \right) = \det\left( \begin{bmatrix}
     a & \vbmod_j^\top \\ 
     0 & \mDmod_{ij}
 \end{bmatrix}\right) + \ve_1^\top \adj\left( \begin{bmatrix}
     a & \vbmod_j^\top \\
     0 & \mDmod_{ij}
 \end{bmatrix} \right) \begin{bmatrix}
     0 \\ \vbmod_j
 \end{bmatrix},\numberthis\label{eq:cofactor_oneplusi_oneplusj_intermediate}\]
where we used \Cref{fact:block-matrix-determinant} in the second step. The first term in the right-hand side of the preceding equation may be simplified to $a \cdot \det(\mDmod_{ij})$. To simplify the second term, we first observe that we wish to compute only the first row of $\adj\left( \begin{bmatrix}
     a & \vbmod_j^\top \\
     0 & \mDmod_{ij}
 \end{bmatrix} \right)$; to this end, we introduce the notation that $\mX :=\mDmod_{ij}$ and $\vy = \vbmod_j$; we  denote $\mXt_{\ell j}$ to be the matrix obtained by deleting the $\ell^\mathrm{th}$ row and $j^\mathrm{th}$ column of $\mX$; we use $\mXt_{\ell}$ for the matrix obtained by deleting the $\ell^\mathrm{th}$ row of $\mX$. Now observe that for $\ell>0$, the $(1+\ell)^\mathrm{th}$ entry of the desired first row may be computed as follows:
\[ \adj\left( \begin{bmatrix}
    a & \vbmod_j^\top\\
    0 & \mDmod_{ij}
\end{bmatrix} \right)_{1, 1+\ell} = (-1)^{\ell} \det\begin{bmatrix}
    \vy^\top \\ 
   \mXt_{\ell}
\end{bmatrix} = \sum_{ j = 1}^{n-1} y_j \cdot(-1)^{\ell+j} \det\mXt_{\ell j}  = \sum_{j = 1}^{n-1} y_j \cdot (\adj\mXt)_{j\ell} = \vbmod_j^\top \adj(\mDmod_{i j})\ve_{\ell},\]
where the first step is by expressing the $(1, 1+\ell)^\mathrm{th}$ entry of the adjugate in question in terms of its $(1+\ell, 1)^\mathrm{th}$ minor (as per \Cref{eq:adj_in_terms_of_det}), the second step is by the Laplace expansion of the determinant along its first row (analogous to \Cref{eq:laplace_exp_determinants}), the third step is by \Cref{eq:adj_in_terms_of_det}, and the final step plugs back the newly introduced notation. Hence, we have \[\ve_1^\top \adj\left( \begin{bmatrix}
     a & \vbmod_j^\top \\
     0 & \mDmod_{ij}
 \end{bmatrix} \right) = \begin{bmatrix}a \cdot\det(\mDmod_{ij}) & \vbmod_j^\top \adj(\mDmod_{ij})\end{bmatrix}.\numberthis\label{eq:cofactor_computation_penultimate}\] Multiplying the right-hand side of \Cref{eq:cofactor_computation_penultimate} by $\begin{bmatrix}
     0 \\ \vbmod_j
 \end{bmatrix}$ and plugging the result back into \Cref{eq:cofactor_oneplusi_oneplusj_intermediate} and eventually into \Cref{eq:cofactor_oneplusi_oneplusj} then gives \[ \mC_{1+i, 1+j} = a\cdot \adj(\mD)_{ij} + (-1)^{i+j} \vbmod_j^\top \adj(\mDmod_{ij})\vbmod_j.\]
 By mapping these cofactors back into the definition of the adjugate we want, one can  then conclude the proof, with $\mK$ collecting all the $(-1)^{i + j}\vbmod_i^\top \adj(\mM_{ij})\vbmod_j$ terms.  
\end{proof}

\begin{corollary}\label{cor:adj_unit_rank_update}Let $\mA = \begin{bmatrix}a & \vb^\top \\ \vb & \mD\end{bmatrix}$ be a symmetric matrix. Then, 
\begin{align*}
    \adj(\mA + \lambda\ve_1 \ve_1^\top) &= \adj(\mA) + \lambda \begin{bmatrix} 0 & \zero^\top \\ \zero & \adj(\mD)\end{bmatrix}
\end{align*}
\end{corollary}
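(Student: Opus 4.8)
The plan is to observe that adding $\lambda \ve_1\ve_1^\top$ to $\mA$ changes \emph{only} the top-left scalar entry, leaving the off-diagonal block $\vb$ and the lower-right block $\mD$ untouched:
\[
\mA + \lambda \ve_1\ve_1^\top = \begin{bmatrix} a + \lambda & \vb^\top \\ \vb & \mD \end{bmatrix}.
\]
Since $\mA$ is symmetric, so is this perturbed matrix, and it has exactly the block form required by \Cref{lem:adj_A}, with $a$ replaced by $a + \lambda$ and the same $\vb$, $\mD$.

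The key step is then to apply \Cref{lem:adj_A} twice — once to $\mA$ and once to $\mA + \lambda\ve_1\ve_1^\top$. For $\mA$ we get
\[
\adj(\mA) = \begin{bmatrix} \det(\mD) & -\vb^\top \adj(\mD) \\ -\adj(\mD)\vb & a\cdot\adj(\mD) + \mK \end{bmatrix},
\]
and for the perturbed matrix we get
\[
\adj(\mA + \lambda\ve_1\ve_1^\top) = \begin{bmatrix} \det(\mD) & -\vb^\top \adj(\mD) \\ -\adj(\mD)\vb & (a+\lambda)\cdot\adj(\mD) + \mK \end{bmatrix},
\]
where \emph{crucially $\mK$ is the same matrix in both expressions}, because \Cref{lem:adj_A} guarantees $\mK$ depends only on $\vb$ and $\mD$ (not on the $(1,1)$ entry), and $\vb,\mD$ are unchanged by the perturbation. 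Subtracting the first display from the second, all blocks cancel except the lower-right one, where the difference is $\big((a+\lambda) - a\big)\adj(\mD) = \lambda\,\adj(\mD)$, giving exactly
\[
\adj(\mA + \lambda\ve_1\ve_1^\top) - \adj(\mA) = \lambda \begin{bmatrix} 0 & \zero^\top \\ \zero & \adj(\mD) \end{bmatrix},
\]
which is the claim.

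There is essentially no obstacle here beyond correctly invoking \Cref{lem:adj_A}; the only thing to be careful about is confirming that $\mA + \lambda\ve_1\ve_1^\top$ genuinely has the hypothesized structure (same $\vb$, same symmetric $\mD$), so that the ``$a$-independence of $\mK$'' clause of \Cref{lem:adj_A} applies verbatim and the two $\mK$'s are identical. Once that bookkeeping is in place, the result follows immediately by linearity of the block expression in the $(1,1)$ entry.
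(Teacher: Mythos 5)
Your proposal is correct and follows essentially the same route as the paper: both invoke \Cref{lem:adj_A} on the perturbed matrix $\begin{bsmallmatrix} a+\lambda & \vb^\top \\ \vb & \mD\end{bsmallmatrix}$, use the fact that $\mK$ depends only on $\vb$ and $\mD$ (hence is unchanged when $a$ is replaced by $a+\lambda$), and read off the difference as $\lambda$ times the padded $\adj(\mD)$ block. The only cosmetic difference is that you phrase it as applying the lemma twice and subtracting, while the paper applies it once to the perturbed matrix and splits the right-hand side into $\adj(\mA)$ plus the correction term.
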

\begin{proof}
First, observe that by applying \Cref{lem:adj_A}, we have \[\adj\left(\begin{bmatrix}a + \lambda & \vb^\top \\ \vb & \mD\end{bmatrix}\right) = \begin{bmatrix}
        \det(\mD) & -\vb^\top \adj(\mD) \\
        -\adj(\mD)\vb & (a + \lambda) \cdot \adj(\mD) + \mK
\end{bmatrix}.\numberthis\label{eq:cor_intermediate}\] Next, observe that based on the definition of $\mA$, the left-hand side of \Cref{eq:cor_intermediate} is precisely $\adj(\mA+ \lambda \ve_1 \ve_1^\top)$; based on the expression for $\adj(\mA)$ from  \Cref{lem:adj_A}, the right-hand side of \Cref{eq:cor_intermediate} may be split into $ \adj(\mA) + \lambda \begin{bmatrix} 0 & 0 \\ 0 & \adj(\mD)\end{bmatrix}$, as desired. This concludes the proof. 
\end{proof}

\begin{lemma}\label{lem:svd_adj} 
Consider a matrix $\mL \in \R^{m\times n}$, and define the matrix $\mA = \mL\mL^\top \in  \R^{m \times m}$. Suppose that $\det(\mA) = 0$. 
Then, the following equation holds: 
\begin{align*}
    \adj(\mA)\mL  = \zero.
\end{align*}
\end{lemma}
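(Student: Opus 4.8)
The plan is to exploit the two structural facts peculiar to $\mA=\mL\mL^\top$ --- that it is symmetric and that $\ker(\mA)=\ker(\mL^\top)$ --- together with the fundamental adjugate identity $\adj(\mA)\mA=\mA\adj(\mA)=\det(\mA)\mI$ from \Cref{eq:adj_det_inv_connection}. Given the hypothesis $\det(\mA)=0$, these should combine to give the claim directly, with no case analysis on $\mathrm{rank}(\mA)$.

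First I would record that $\adj(\mA)$ is symmetric: since $\mA=\mL\mL^\top$ is symmetric, its $(i,j)$ and $(j,i)$ minors coincide, so \Cref{eq:symm_mat_symm_adj} gives $\adj(\mA)^\top=\adj(\mA)$. Second, I would establish the kernel identity $\ker(\mL^\top)=\ker(\mA)$: if $\mL^\top\vx=\zero$ then trivially $\mA\vx=\mL\mL^\top\vx=\zero$; conversely, if $\mA\vx=\zero$ then $\|\mL^\top\vx\|^2=\vx^\top\mL\mL^\top\vx=\vx^\top\mA\vx=0$, forcing $\mL^\top\vx=\zero$.

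With these in hand, the finish is short. From $\det(\mA)=0$ and \Cref{eq:adj_det_inv_connection} we get $\mA\adj(\mA)=\det(\mA)\mI=\zero$, so every column of $\adj(\mA)$ lies in $\ker(\mA)=\ker(\mL^\top)$; hence $\mL^\top\adj(\mA)=\zero$. Transposing this identity and using symmetry of $\adj(\mA)$ yields $\adj(\mA)\mL=\bigl(\mL^\top\adj(\mA)^\top\bigr)^\top=\bigl(\mL^\top\adj(\mA)\bigr)^\top=\zero$, which is exactly the statement.

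The only step needing any care --- the ``obstacle,'' such as it is --- is the kernel identity $\ker(\mL^\top)=\ker(\mA)$, where positive semidefiniteness of $\mL\mL^\top$ (really just $\|\mL^\top\vx\|^2=\vx^\top\mA\vx$) enters; everything else is a one-line appeal to the adjugate identity. A more hands-on alternative would split on $\mathrm{rank}(\mA)$: if $\mathrm{rank}(\mA)\le m-2$ then $\adj(\mA)=\zero$ since all order-$(m-1)$ minors vanish, while if $\mathrm{rank}(\mA)=m-1$ then $\adj(\mA)=c\,\vz\vz^\top$ with $\vz$ spanning $\ker(\mA)$ and $\mL^\top\vz=\zero$, again annihilating the product; but the kernel-based argument avoids this bookkeeping.
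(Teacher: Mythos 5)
Your proof is correct, and it takes a route that is genuinely different in execution from the paper's, even though both rest on the same structural fact relating $\mL$ and $\mA=\mL\mL^\top$. The paper works on the range side: it writes $\adj(\mA)\mL = [\adj(\mA)\mA]\,\mA^\dagger\mL$, which requires the pseudoinverse identity $\mA\mA^\dagger\mL=\mL$, justified by $\mathcal{R}(\mL)\subseteq\mathcal{R}(\mA)$ via the two cited facts from matrix analysis (\Cref{fact:aapseudoL_L} and \Cref{fact:range_L_range_LLt}), and then kills the product with $\adj(\mA)\mA=\det(\mA)\mI=\zero$. You instead work on the kernel side: from $\mA\adj(\mA)=\det(\mA)\mI=\zero$ the columns of $\adj(\mA)$ lie in $\ker(\mA)$, the positive-semidefinite identity $\|\mL^\top\vx\|^2=\vx^\top\mA\vx$ gives $\ker(\mA)=\ker(\mL^\top)$, so $\mL^\top\adj(\mA)=\zero$, and the symmetry of $\adj(\mA)$ (which the paper records as \Cref{eq:symm_mat_symm_adj}) lets you transpose to the stated conclusion. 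Since $\mA$ is symmetric, $\ker(\mA)\subseteq\ker(\mL^\top)$ and $\mathcal{R}(\mL)\subseteq\mathcal{R}(\mA)$ are dual statements, so the mathematical content coincides; what your version buys is self-containedness --- no pseudoinverse and no external range-space citations, only the adjugate facts already stated in the paper plus a one-line PSD argument --- while the paper's version avoids invoking symmetry of the adjugate and directly manipulates the product in the order in which it appears. Your fallback rank case-split ($\mathrm{rank}(\mA)\le m-2$ versus $\mathrm{rank}(\mA)=m-1$) is also sound but, as you note, unnecessary.
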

\noindent To prove \Cref{lem:svd_adj}, we use the following two technical results from matrix analysis. 

\begin{fact}[Theorem $4.18$, \cite{laub2004matrix}]\label{fact:aapseudoL_L}
Suppose $\mA\in \R^{n\times p}$ and $\mL\in \R^{n\times m}$. Then $\mA\mA^\dagger\mL =\mL$ if and only if the range spaces $\mathcal{R}(\mL)$ and $\mathcal{R}(\mA)$ satisfy the inclusion 
 $\mathcal{R}(\mL)\subseteq \mathcal{R}(\mA)$.
\end{fact}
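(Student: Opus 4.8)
The plan is to prove the equivalence by showing that $\mA\mA^\dagger$ is precisely the orthogonal projector onto the column space $\mathcal{R}(\mA)$; once this is in hand, both implications follow in two lines each.

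First I would recall the four Moore--Penrose defining identities for $\mA^\dagger$, namely $\mA\mA^\dagger\mA=\mA$, $\mA^\dagger\mA\mA^\dagger=\mA^\dagger$, $(\mA\mA^\dagger)^\top=\mA\mA^\dagger$, and $(\mA^\dagger\mA)^\top=\mA^\dagger\mA$. Writing $P:=\mA\mA^\dagger\in\R^{n\times n}$, the first identity gives $P^2=\mA\mA^\dagger\mA\mA^\dagger=\mA\mA^\dagger=P$, so $P$ is idempotent, while the third gives $P^\top=P$; hence $P$ is an orthogonal projector. To identify its range, note that $\mathcal{R}(P)\subseteq\mathcal{R}(\mA)$ since every column of $P$ is $\mA$ times a column of $\mA^\dagger$, whereas for any $\vy=\mA\vx\in\mathcal{R}(\mA)$ we have $P\vy=\mA\mA^\dagger\mA\vx=\mA\vx=\vy$ by $\mA\mA^\dagger\mA=\mA$; thus $\mathcal{R}(\mA)\subseteq\mathcal{R}(P)$ and therefore $\mathcal{R}(P)=\mathcal{R}(\mA)$. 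In particular, $P\vv=\vv$ holds for a vector $\vv$ if and only if $\vv\in\mathcal{R}(\mA)$.

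With this established, I would finish by checking the two directions. For the \emph{if} direction, assume $\mathcal{R}(\mL)\subseteq\mathcal{R}(\mA)$; then each column of $\mL$ lies in $\mathcal{R}(\mA)$ and is therefore fixed by $P$, so $P\mL=\mL$, i.e., $\mA\mA^\dagger\mL=\mL$. For the \emph{only if} direction, assume $\mA\mA^\dagger\mL=\mL$, i.e., $P\mL=\mL$; then $\mathcal{R}(\mL)=\mathcal{R}(P\mL)\subseteq\mathcal{R}(P)=\mathcal{R}(\mA)$, which is the desired inclusion.

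I do not expect any genuine obstacle here: the entire content is the identification of $\mA\mA^\dagger$ with the orthogonal projection onto $\mathcal{R}(\mA)$, a standard consequence of the pseudoinverse axioms. If one wished to avoid invoking those axioms directly, an equivalent route is to take a thin SVD $\mA=\mU\mSigma\mV^\top$, with $\mU$ and $\mV$ having orthonormal columns spanning $\mathcal{R}(\mA)$ and $\mathcal{R}(\mA^\top)$ respectively and $\mSigma$ positive definite diagonal, so that $\mA^\dagger=\mV\mSigma^{-1}\mU^\top$ and $\mA\mA^\dagger=\mU\mU^\top$; the latter is visibly the orthogonal projector onto $\mathrm{span}$ of the columns of $\mU$, i.e., onto $\mathcal{R}(\mA)$, and the same two-line argument then closes each direction.
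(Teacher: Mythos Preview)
Your proof is correct and complete. The paper itself does not supply a proof of this statement: it is quoted as a textbook fact from \cite{laub2004matrix} and invoked as a black box in the proof of \Cref{lem:svd_adj}, so there is nothing to compare against; your argument via identifying $\mA\mA^\dagger$ with the orthogonal projector onto $\mathcal{R}(\mA)$ is the standard one and would serve perfectly well as a self-contained justification.
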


\begin{fact}[Theorem $3.21$, \cite{laub2004matrix}]\label{fact:range_L_range_LLt}
    Let $\mL\in \R^{m \times n}$. Then  the range spaces $\mathcal{R}(\mL)$ and $\mathcal{R}(\mL\mL^\top)$ satisfy the property $\mathcal{R}(\mL) = \mathcal{R}(\mL\mL^\top)$. 
\end{fact}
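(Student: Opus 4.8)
The plan is to prove the two inclusions separately, with the reverse inclusion obtained by passing to orthogonal complements of null spaces. The forward inclusion $\mathcal{R}(\mL\mL^\top) \subseteq \mathcal{R}(\mL)$ is immediate: every element of $\mathcal{R}(\mL\mL^\top)$ has the form $\mL\mL^\top \vy = \mL(\mL^\top \vy)$, which manifestly lies in $\mathcal{R}(\mL)$. So the content is in the reverse inclusion.

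For that, I would first establish the null-space identity $\mathcal{N}(\mL\mL^\top) = \mathcal{N}(\mL^\top)$. The containment $\mathcal{N}(\mL^\top) \subseteq \mathcal{N}(\mL\mL^\top)$ is trivial. For the other direction, suppose $\mL\mL^\top \vy = \zero$; then $0 = \vy^\top \mL\mL^\top \vy = \|\mL^\top \vy\|^2$, which forces $\mL^\top \vy = \zero$, i.e., $\vy \in \mathcal{N}(\mL^\top)$. Hence the two null spaces coincide. Now I would invoke the fundamental theorem of linear algebra, namely that the range of a real matrix is the orthogonal complement of the null space of its transpose. Applied to $\mL$ this gives $\mathcal{R}(\mL) = \mathcal{N}(\mL^\top)^\perp$; applied to the symmetric matrix $\mL\mL^\top$, whose transpose is itself, it gives $\mathcal{R}(\mL\mL^\top) = \mathcal{N}(\mL\mL^\top)^\perp$. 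Combining these with the null-space identity just proved yields $\mathcal{R}(\mL\mL^\top) = \mathcal{N}(\mL\mL^\top)^\perp = \mathcal{N}(\mL^\top)^\perp = \mathcal{R}(\mL)$, as desired. (Equivalently, one could finish by a dimension count: the null-space identity gives $\mathrm{rank}(\mL\mL^\top) = m - \dim\mathcal{N}(\mL\mL^\top) = m - \dim\mathcal{N}(\mL^\top) = \mathrm{rank}(\mL)$, and equal ranks together with the forward inclusion force equality of the subspaces.)

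There is no genuine obstacle here, as this is a standard fact of matrix analysis; the only care needed is bookkeeping — applying the range/null-space orthogonality identity to the correct matrix ($\mL^\top$ versus $\mL\mL^\top$) and using symmetry of $\mL\mL^\top$ when identifying its transpose. If one preferred to avoid orthogonal complements, an SVD argument works just as well: writing $\mL = \mU\mSigma\mV^\top$ with $\mSigma$ the rectangular singular-value matrix, one has $\mL\mL^\top = \mU\mSigma\mSigma^\top\mU^\top$, and both $\mathcal{R}(\mL)$ and $\mathcal{R}(\mL\mL^\top)$ equal the span of the left singular vectors of $\mL$ associated with its nonzero singular values.
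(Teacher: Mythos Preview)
Your proof is correct and complete; this is the standard argument. Note, however, that the paper does not actually prove this statement: it is stated as a \emph{Fact} cited from \cite{laub2004matrix} (Theorem~3.21) and invoked without proof in the surrounding lemma, so there is no paper proof to compare your approach against.
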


\begin{proof}[Proof of \Cref{lem:svd_adj}]
We prove the claim by showing that
\begin{align*}
    \adj(\mA)\mL &= [\adj(\mA)\mA]\mA^\dagger \mL = \zero,
\end{align*}
where the last equality follows from the property that $\adj(\mA)\mA = \det(\mA)\mI = 0$. All that remains is to prove that $\mL = \mA\mA^\dagger \mL$. By \Cref{fact:aapseudoL_L}, this is true if and only if $\mathcal{R}(\mL) \subseteq \mathcal{R}(\mA)$. From \Cref{fact:range_L_range_LLt}, we know that this is true. This concludes the proof. 
\end{proof}

\begin{lemma}\label{lem:Gtop_adjGHinvGtopsigma_zero}
Given a binary vector $\bsigma\in \left\{0, 1\right\}^m$, matrix  $\mG \in \R^{m \times n}$ and matrix $\mH \in \R^{n \times n}$ with the properties $\mH\succ0$ and $\det(\mG\mH^{-1}\mG^\top)_{\bsigma}=0$, we have \[ \mG^\top \adj(\mG\mH^{-1}\mG^\top)_{\bsigma}=0.\] 
\end{lemma}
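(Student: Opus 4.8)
The plan is to reduce the claim to \Cref{lem:svd_adj} by symmetrically factoring out $\mH$. Write $\mH^{-1}=\mH^{-1/2}\mH^{-1/2}$ for the symmetric positive-definite square root (valid since $\mH\succ0$), and let $\mG_{\bsigma}$ denote the submatrix of $\mG$ obtained by keeping the rows $i$ with $\sigma_i=1$. The first observation is that the principal submatrix $[\mG\mH^{-1}\mG^\top]_{\bsigma}$ equals $\mG_{\bsigma}\mH^{-1}\mG_{\bsigma}^\top$: its $(i,j)$ entry is $\vg_i^\top\mH^{-1}\vg_j$ with $\vg_i$ the $i$-th row of $\mG$ viewed as a column vector, which is exactly the $(i,j)$ entry of $\mG_{\bsigma}\mH^{-1}\mG_{\bsigma}^\top$ when $\sigma_i=\sigma_j=1$. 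Consequently, setting $\mL:=\mG_{\bsigma}\mH^{-1/2}$, we have $[\mG\mH^{-1}\mG^\top]_{\bsigma}=\mL\mL^\top$, and the hypothesis $\det([\mG\mH^{-1}\mG^\top]_{\bsigma})=0$ is precisely $\det(\mL\mL^\top)=0$. Applying \Cref{lem:svd_adj} to this $\mL$ then gives $\adj(\mL\mL^\top)\,\mL=\zero$, i.e. $\adj([\mG\mH^{-1}\mG^\top]_{\bsigma})\,\mG_{\bsigma}\mH^{-1/2}=\zero$.

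Next I would clear the invertible factor $\mH^{-1/2}$ by right-multiplying by $\mH^{1/2}$, obtaining $\adj([\mG\mH^{-1}\mG^\top]_{\bsigma})\,\mG_{\bsigma}=\zero$. Taking transposes and using that the adjugate of a symmetric matrix is itself symmetric (see \Cref{def:adjugate_cofactor}, noting $[\mG\mH^{-1}\mG^\top]_{\bsigma}$ is symmetric), this becomes $\mG_{\bsigma}^\top\,\adj([\mG\mH^{-1}\mG^\top]_{\bsigma})=\zero$.

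Finally I would unwind the zero-padding convention to pass back to the $m\times m$ padded matrix $\adj(\mG\mH^{-1}\mG^\top)_{\bsigma}$. By definition this matrix has all entries outside the index block $\{i:\sigma_i=1\}$ equal to zero, so in the product $\mG^\top\,\adj(\mG\mH^{-1}\mG^\top)_{\bsigma}$ every column $j$ with $\sigma_j=0$ vanishes identically, while the columns $j$ with $\sigma_j=1$ assemble precisely into $\mG_{\bsigma}^\top\,\adj([\mG\mH^{-1}\mG^\top]_{\bsigma})$, which we just showed is $\zero$. Hence $\mG^\top\,\adj(\mG\mH^{-1}\mG^\top)_{\bsigma}=\zero$, completing the proof. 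The only care required is the index bookkeeping in this last step (matching the padded adjugate against the row-selected submatrix $\mG_{\bsigma}$) and invoking the symmetry of the adjugate correctly; there is no substantive obstacle once \Cref{lem:svd_adj} is in hand.
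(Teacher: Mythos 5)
Your proposal is correct and follows essentially the same route as the paper: identify the principal submatrix as $\mG_{\bsigma}\mH^{-1}\mG_{\bsigma}^\top$, factor $\mH^{-1}=\mH^{-1/2}\mH^{-1/2}$ so that \Cref{lem:svd_adj} applies to $\mL=\mG_{\bsigma}\mH^{-1/2}$, and handle the zero-padding via the block structure. The only (cosmetic) difference is ordering — the paper does the block/padding reduction first and then invokes \Cref{lem:svd_adj}, while you apply the lemma to the submatrix first and unwind the padding afterwards.
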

\begin{proof}Without loss of generality, let $\mG = \begin{bmatrix} \mG_1 \\ \mG_2\end{bmatrix}$ where $\sigma_i = 1$ for the rows and columns associated with $\mG_2$. Then we may express $\mG\mH^{-1}\mG^\top$ in terms of $\mG_1$ and $\mG_2$ as follows: 
\begin{align*}
    \mG \mH^{-1} \mG^\top = \begin{bmatrix}
        \mG_1 \mH^{-1} \mG_1^\top & \mG_1 \mH^{-1} \mG_2^\top \\
        \mG_2 \mH^{-1} \mG_1^\top & \mG_2 \mH^{-1} \mG_2^\top
    \end{bmatrix}.
\end{align*} Based on the expansion above, observe that $(\mG\mH^{-1}\mG^\top)_{\bsigma} = \mG_2\mH^{-1}\mG_2^\top$. As a result, we may express $\mG^\top \adj(\mG\mH^{-1}\mG^\top)_{\bsigma}$, our matrix product of interest, as follows:
\begin{align*}
    \mG^\top \adj(\mG \mH^{-1}\mG^\top)_{\bsigma} = \begin{bmatrix}\mG_1^\top & \mG_2^\top\end{bmatrix}\begin{bmatrix} \zero & \zero \\ \zero & \adj(\mG_2 \mH^{-1} \mG_2^\top) \end{bmatrix} = \begin{bmatrix} \zero & \mG_2^\top \adj(\mG_2 \mH^{-1}\mG_2^\top)\end{bmatrix}.
\end{align*}
All that remains is to show that $\mG_2^\top \adj(\mG_2 \mH^{-1}\mG_2^\top) = \zero.$ To this end, we note that 
\begin{align*}
    \mG_2^\top \adj(\mG_2 \mH^{-1} \mG_2^\top) = \mH^{1/2} \left[\mH^{-1/2}\mG_2^\top \adj(\mG_2 \mH^{-1/2}\mH^{-1/2}\mG_2^\top)\right] = \zero,
\end{align*}
where the last equality follows immeditely by applying \Cref{lem:svd_adj}.
\end{proof}

\begin{lemma}\label{lem:split_into_adj} For a positive semi-definite matrix $\mA \in \R^{\nconstr \times \nconstr}$, a diagonal matrix $\mLambda = \Diag(\lambda)$, and the indicator vector $\bsigma\in \left\{0, 1\right\}^m$, define the following parameters:  
\begin{enumerate}[label=(\roman*)]
\compresslist{
\item Scaling factor $h_{\bsigma} = \det(\mA_{\bsigma})\prod_{i=1}^m \lambda_i^{1-{\bsigma}_i}$, 
\item Normalizing factor $h = \sum_{\bsigma \in \{0,1\}^m} h_{\bsigma}$, 
\item Adjugate scaling factor $c_{\bsigma} = \prod_{i=1}^m \lambda_i^{1-\sigma_i}$.
\item We split the set $\bsigma\in \left\{0, 1\right\}^m$ into the following two sets: \[ S := \left\{\bsigma \in \left\{0, 1\right\}^{\nconstr} \, \mid \, \det([\mG\mH^{-1}\mG^\top]_{\bsigma}) > 0\right\} \text{ and } S^{\complement} = \left\{\bsigma\in \left\{0, 1\right\}^m \, \mid \, \det([\mG\mH^{-1}\mG^\top]_{\bsigma})=0 \right\}.\] 

}
\end{enumerate} Assume that $\mA + \mLambda$ is invertible. Then we have the following decomposition of $(\mA + \mLambda)^{-1}$ in terms of inverses and adjugates of $\mA_{\bsigma}$ (the principal submatrices  of $\mA$), with the adjugate or inverse computed on the basis of whether  or not $\det(\mA_{\bsigma})=0$, as follows: 
\begin{align}
    (\mA + \mLambda)^{-1} &=  \sum_{\bsigma\in S} \frac{h_{\bsigma}}{h}\mA_{\bsigma}^{-1}
    + \sum_{\bsigma\in S^{\complement}} \frac{c_{\bsigma}}{h}\adj(\mA)_{\bsigma}.\label{eq:inverse_lemma}
\end{align}
\end{lemma}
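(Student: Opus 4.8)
The plan is to work with adjugates. Writing $(\mA+\mLambda)^{-1}=\adj(\mA+\mLambda)/\det(\mA+\mLambda)$ and applying \Cref{lem:det_A_Lambda} to the denominator gives $\det(\mA+\mLambda)=\sum_{\bsigma}h_{\bsigma}=h$, which is nonzero since $\mA+\mLambda$ is assumed invertible; so it suffices to prove the ``adjugate analogue'' of \Cref{lem:det_A_Lambda},
\begin{equation}\label{eq:adj_expansion_plan}
\adj(\mA+\mLambda)=\sum_{\bsigma\in\{0,1\}^m}\Big(\prod_{i=1}^{m}\lambda_i^{1-\sigma_i}\Big)\adj(\mA)_{\bsigma}.
\end{equation}
Granting \eqref{eq:adj_expansion_plan}, divide by $h$ to get $(\mA+\mLambda)^{-1}=\tfrac1h\sum_{\bsigma}c_{\bsigma}\adj(\mA)_{\bsigma}$. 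Because $\mA\succeq 0$, every principal submatrix $[\mA]_{\bsigma}$ is positive semidefinite, so $\det([\mA]_{\bsigma})\ge 0$ and $S$, $S^{\complement}$ are exactly the index sets on which $[\mA]_{\bsigma}$ is invertible, respectively singular; for $\bsigma\in S$ we have $\adj([\mA]_{\bsigma})=\det([\mA]_{\bsigma})[\mA]_{\bsigma}^{-1}$ (cf.\ \Cref{def:adjugate_cofactor}), hence $c_{\bsigma}\adj(\mA)_{\bsigma}=h_{\bsigma}\mA_{\bsigma}^{-1}$, and the sum becomes $\tfrac1h\big(\sum_{\bsigma\in S}h_{\bsigma}\mA_{\bsigma}^{-1}+\sum_{\bsigma\in S^{\complement}}c_{\bsigma}\adj(\mA)_{\bsigma}\big)$, which is \eqref{eq:inverse_lemma}.

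To prove \eqref{eq:adj_expansion_plan}, first observe that every entry of $\adj(\mA+\mLambda)$ is, up to sign, a minor of $\mA+\mLambda$, i.e.\ the determinant of a submatrix in which each variable $\lambda_i$ occurs in at most one entry (it survives only when both row $i$ and column $i$ are kept). Hence $\adj(\mA+\mLambda)$ is a matrix-valued polynomial of degree at most one in each $\lambda_i$, so we may write $\adj(\mA+\mLambda)=\sum_{\bsigma}\big(\prod_i\lambda_i^{1-\sigma_i}\big)\mathbf{C}_{\bsigma}$ for constant matrices $\mathbf{C}_{\bsigma}$, and it suffices to show $\mathbf{C}_{\bsigma}=\adj(\mA)_{\bsigma}$ for each fixed $\bsigma$. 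Fixing $\bsigma$ and setting $\lambda_i=0$ for every $i$ with $\sigma_i=1$, the matrix $\mathbf{C}_{\bsigma}$ is exactly the coefficient of the top monomial $\prod_{i:\sigma_i=0}\lambda_i$ in $\adj\!\big(\mA+\sum_{i:\sigma_i=0}\lambda_i\ve_i\ve_i^\top\big)$.

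The last step is to extract this coefficient by iterating \Cref{cor:adj_unit_rank_update}. After relabeling the chosen index $j$ to the first coordinate, that corollary says that for symmetric $\mathbf{M}$ the coefficient of $\lambda_j$ in $\adj(\mathbf{M}+\lambda_j\ve_j\ve_j^\top)$ is $\adj(\mathbf{M})_{\bsigma'}$, where $\bsigma'$ is the indicator of $[m]\setminus\{j\}$; in words, extracting the $\lambda_j$-coefficient deletes row and column $j$, replaces the matrix by the adjugate of the surviving principal submatrix, and pads back with zeros. Since every matrix that arises --- $\mA$ with some diagonal entries perturbed, and principal submatrices thereof --- is symmetric, this may be applied repeatedly: peeling off $\lambda_j$ for every $j$ with $\sigma_j=0$ (formally, an induction on $\#\{i:\sigma_i=0\}$, the inductive step applied to the principal submatrix of $\mA$ with row and column $j$ deleted; the base case $\bsigma=\mathbf{1}$ is immediate) deletes exactly the rows and columns indexed by $\{i:\sigma_i=0\}$ and leaves $\adj([\mA]_{\bsigma})$ padded with zeros, which is $\adj(\mA)_{\bsigma}$ by definition. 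This gives $\mathbf{C}_{\bsigma}=\adj(\mA)_{\bsigma}$ and hence \eqref{eq:adj_expansion_plan}.

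I expect the main obstacle to be the bookkeeping in this last step --- tracking the shrinking index set and the zero-padding, and verifying that the $\lambda$-coefficient of a product of rank-one diagonal updates peels off one deleted coordinate at a time independently of the order in which the indices are processed. It is also worth noting why a more direct entrywise argument via \Cref{lem:det_A_Lambda} does \emph{not} go through: an off-diagonal entry $(i,j)$ of $\adj(\mA+\mLambda)$ is a minor obtained by deleting row $j$ and column $i$ with $i\neq j$, which destroys the ``positive semidefinite matrix plus positive diagonal'' form that \Cref{lem:det_A_Lambda} requires --- this is precisely why the argument is routed through the rank-one update identity of \Cref{cor:adj_unit_rank_update}.
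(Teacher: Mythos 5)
Your proposal is correct and follows essentially the same route as the paper: both reduce \eqref{eq:inverse_lemma} to the adjugate expansion $\adj(\mA+\mLambda)=\sum_{\bsigma}c_{\bsigma}\adj(\mA)_{\bsigma}$ (dividing by $h=\det(\mA+\mLambda)$ via \Cref{lem:det_A_Lambda} and converting the nonsingular terms with $c_{\bsigma}\adj(\mA)_{\bsigma}=h_{\bsigma}\mA_{\bsigma}^{-1}$), and both prove that expansion by inductively peeling off one diagonal perturbation at a time with \Cref{cor:adj_unit_rank_update}. The only difference is organizational --- you use multilinearity in the $\lambda_i$ to extract each coefficient $\mathbf{C}_{\bsigma}$ separately, whereas the paper runs a single induction on $\nnz(\mLambda)$ and regroups the monomials --- and your implicit use of permutation-equivariance of the adjugate when relabeling index $j$ to the first coordinate is at the same level of (acceptable) informality as the paper's own ``without loss of generality'' reordering.
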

\begin{proof}We begin by proving the following simpler statement for $c_{\bsigma} = \prod_{i = 1}^m \lambda_i^{1-\sigma_i}$: 
\begin{align}\label{eq:adj_split}
    \adj(\mA + \mLambda) &= \sum_{{\bsigma} \in \{0,1\}^m} c_{\bsigma} \adj(\mA)_{\bsigma}.
\end{align} Once this statement is proven, \Cref{eq:inverse_lemma} is implied by the following argument: Per \Cref{lem:det_A_Lambda}, we have that $\det(\mA + \mLambda) = \sum_{\bsigma\in \left\{0, 1 \right\}^m}h_{\bsigma} =  h$, so dividing throughout by $h$ yields $(\mA+\mLambda)^{-1}$ on the left-hand side (by \Cref{eq:adj_det_inv_connection}); the term $\sum_{\bsigma\in \left\{0, 1\right\}^m} \frac{c_{\bsigma}}{h}\adj(\mA)_{\bsigma}$ may be split into two sums of terms, one over those vectors $\bsigma\in \left\{0, 1\right\}^m$ for which  $\det(\mA_{\bsigma})=0$ and the second over those choices of $\bsigma\in \left\{0, 1\right\}^m$ for which $\det(\mA_{\bsigma})\neq0$. 
For terms such that $\det(\mA_{\bsigma}) \neq 0$, we have \[c_{\bsigma} \adj(\mA)_{\bsigma} = c_{\bsigma}\det([\mA]_{\bsigma})\frac{1}{\det([\mA]_{\bsigma})}\adj(\mA)_{\bsigma} = h_{\bsigma} \cdot
(\mA)_{\bsigma}^{-1}.\] Hence,   \Cref{eq:adj_split}, when divided by $h$, gives \Cref{eq:inverse_lemma}, as desired. 
We now prove \Cref{eq:adj_split}, proceeding via induction on $\nnz(\mLambda)$, the number of nonzero entries in $\mLambda$.
\\
\\
\textbf{Base case:}  When the number of non-zero entries $\nnz(\mLambda) = 0$,  by definition, $\mLambda = \Diag(\zero)$, which implies that the left-hand side of \Cref{eq:adj_split} is $\adj(\mA)$. Further, since by definition, $c_{\bsigma} = \prod_{i=1}^m \lambda_i^{1-\sigma_i}$, for our choice of  $\mLambda=\Diag(\zero)$, this gives the following expression:    \[c_{\bsigma}= \begin{dcases*}
0 & if  $\bsigma\neq \vec{1}$\,, \\[1ex]
1 & if $\bsigma=\vec{1}$\,.
\end{dcases*}
.\] With this choice of $c_{\bsigma}$, the right-hand side of \Cref{eq:adj_split} reduces to $\adj(\mA)$, which matches the left-hand side of \Cref{eq:adj_split}, thus implying that in this base case,  \Cref{eq:adj_split} is true. 
\\
\\
\textbf{Induction Step:} Suppose \Cref{eq:adj_split} holds for $\nnz(\mLambda) = k$. 
We now show  that \Cref{eq:adj_split} holds for $\nnz(\mLambda) = k + 1$ as well with some scaling factor $c_{\bsigma}$. 
 Without loss of generality, assume that $\lambda_i \neq 0$ for $i \in [k+1]$. Let $\mAmod_{11}$ be the $\R^{(m - 1) \times (m - 1)}$ matrix obtained by deleting the first row and first column of $\mA$. 
By expressing $\mA + \mLambda$ as $(\mA+ \sum_{i = 2}^{k+1} \lambda_i \ve_i\ve_i^\top) + \lambda_1 \ve_1\ve_1^\top$, we may use 
 \Cref{cor:adj_unit_rank_update} to expand $\adj(\mA + \mLambda)$ as follows: 
\begin{align*}
&\adj (\mA + \mLambda) 
= \adj\left(\mA + \sum_{i=2}^{k+1}\lambda_i \ve_i\ve_i^\top\right) + \lambda_1\begin{bmatrix}0 & \zero^\top \\ \zero & \adj\left(\mAmod_{11} + \sum_{i=1}^{k}\lambda_{i+1} \vemod_i\vemod_i^\top\right)\end{bmatrix},\numberthis\label{lem:induction_adj_intermediate}
\end{align*} where note that the $\ve_i\in \R^m$ and $\vemod_i\in\R^{m-1}$. 
We observe that both the terms on the right-hand side  have $\nnz(\mLambda) - 1 = k$ nonzero entries in their respective diagonal components. Hence, by our assumption, the induction hypothesis is applicable; therefore, suppose that by \Cref{eq:adj_split}, 
\begin{equation}\label{eq:adj_hyp_suppose} 
\begin{aligned} 
\adj(\mA + \sum_{i=2}^{\ell}\lambda_i \ve_i\ve_i^\top) &= \sum_{\sigma\in \left\{0, 1 \right\}^{m}}\hat{c}_\sigma \adj(\mA)_\sigma, \\
\adj(\mAmod_{11} + \sum_{i=1}^{\ell-1}\lambda_{i+1} \vemod_i\vemod_i^\top) &= \sum_{\sigma' \in \left\{0, 1 \right\}^{m-1}}\tilde{c}_{\sigma'} \adj(\mAmod_{11})_{\sigma'},
\end{aligned} 
\end{equation}
where, based on the diagonal components in each of the terms on the left-hand side, the scaling factors on the respective right-hand sides are $\hat{c}_\sigma = 0^{(1 - \sigma_1)}\prod_{i=2}^m \lambda_{i}^{1 - \sigma_{i}}$, 
 and $\tilde{c}_{\sigma'} = \prod_{i=1}^{m-1} \lambda_{i+1}^{1 - \sigma'_{i}}$. As a consequence of these definitions, we can re-write the terms in \Cref{eq:adj_hyp_suppose} using $c_{\bsigma}$ as follows. First, observe that $\hat{c}_\sigma = 0$ when $\sigma_1 = 0$ and $\hat{c}_\sigma = c_\sigma$ otherwise. This implies: 
 \[ \sum_{\bsigma\in \left\{0, 1\right\}^m} \tilde{c}_{\bsigma} \adj(\mA)_{\bsigma} = \sum_{\substack{\bsigma\in \left\{0, 1\right\}^m,\\ \sigma_1 = 1}} {c}_{\bsigma} \adj(\mA)_{\bsigma}. \numberthis\label{lem:induction_adj_1} \] Next, observe that for the vector $\bsigma= [0; \bsigma']$ formed by concatenating zero with $\bsigma'$, we have $c_{\bsigma} = \lambda_1 \tilde{c}_{\bsigma'}$. This implies the following chain of equations: 
 \[ \lambda_1\begin{bmatrix} 0 & \zero^\top \\ \zero & \sum_{\bsigma'\in \left\{0, 1 \right\}^{m-1}} \tilde{c}_{\bsigma'} \adj(\mAmod_{11})_{\bsigma'}\end{bmatrix} = 
 \sum_{{\bsigma}' \in \{0,1\}^{m-1}}\lambda_1 \tilde{c}_{{\bsigma}'} \adj(\mA)_{\begin{bsmallmatrix}0 \\ {\bsigma}'\end{bsmallmatrix}} =  \sum_{\substack{{\bsigma} \in \{0,1\}^m, \\ {\bsigma}_{0} = 0}}c_{\bsigma} \adj(\mA)_{\bsigma},\numberthis\label{lem:induction_adj_2}\] where in the first step, we used the fact that $\mAmod_{11}$ is, by definition, the principal submatrix of $\mA$ obtained by deleting its first row and first column; the second step is by our prior observation connecting $c_{\bsigma}$ and $\tilde{c}_{\bsigma'}$. 
Plugging the right-hand sides from \Cref{eq:adj_hyp_suppose} into that of \Cref{lem:induction_adj_intermediate} and then applying \Cref{lem:induction_adj_1}
 and \Cref{lem:induction_adj_2} gives
\begin{align*}
    \adj \left(\mA + \mLambda \right) 
    &= \sum_{{\bsigma} \in \{0,1\}^m}\hat{c}_{\bsigma} \adj(\mA)_{\bsigma} + \lambda_1\begin{bmatrix}0 & \zero^\top \\ \zero & \sum_{{\bsigma}' \in \{0,1\}^{m-1}}  \tilde{c}_{{\bsigma}'}\adj(\mAmod_{11})_{{\bsigma}'}\end{bmatrix} \\
    &= \sum_{\substack{{\bsigma} \in \{0,1\}^m, \\ {\bsigma}_{0} = 1}}c_{\bsigma} \adj(\mA)_{\bsigma} + \sum_{\substack{{\bsigma} \in \{0,1\}^m, \\ {\bsigma}_{0} = 0}}c_{\bsigma} \adj(\mA)_{\bsigma} \\ 
    &= \sum_{{\bsigma} \in \{0,1\}^m} c_{\bsigma} \adj(\mA)_{\bsigma}.
\end{align*}  
Thus, we have shown   \Cref{eq:adj_split} for $\nnz(\mLambda)=k+1$, thereby completing the induction and concluding the proof of \Cref{eq:adj_split} and, consequently, of the stated lemma. 
\end{proof}

\section{Technical Results from Convex Analysis}\label{sec:convex_analysis_lemmas}
\begin{fact}[\cite{nesterov1994interior}]\label{thm:inner_prod_ub_nu}
    Let $\Phi$ be a $\nu$-self-concordant barrier. Then for any $\vx\in \textrm{dom}(\Phi)$ and $\vy\in \textrm{cl(dom)}(\Phi)$,  \[\nabla \Phi(\vx)^\top (\vy-\vx) \leq \nu.\]  
\end{fact}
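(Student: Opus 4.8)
The plan is to reduce this multivariate inequality to a one‑dimensional differential inequality along the segment from $\vx$ to $\vy$. First I would set $\vh := \vy - \vx$ and observe that, since $\vx$ is an interior point of the convex domain $\calQ$ and $\vy\in\mathrm{cl}(\calQ)$, the half‑open segment $\{\vx + t\vh : t\in[0,1)\}$ lies entirely inside $\calQ$ (a standard fact about convex sets). Hence $\psi(t) := \Phi(\vx + t\vh)$ is well‑defined and thrice differentiable on $[0,1)$, with $\psi'(t) = \nabla\Phi(\vx+t\vh)^\top\vh$ and $\psi''(t) = \vh^\top\nabla^2\Phi(\vx+t\vh)\vh\geq 0$ by convexity. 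The goal then becomes to bound $\psi'(0) = \nabla\Phi(\vx)^\top(\vy-\vx)$ by $\nu$, and note that we never need to evaluate $\Phi$ at a boundary point — this is precisely what accommodates the weaker hypothesis that $\vy$ need only lie in the closure of the domain.

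The key step is to establish $\psi'(t)^2 \leq \nu\,\psi''(t)$ for all $t\in[0,1)$. For this I would fix $t$, write $H := \nabla^2\Phi(\vx+t\vh)\succ 0$ and $\vg := \nabla\Phi(\vx+t\vh)$, and apply the Cauchy–Schwarz inequality in the inner product induced by $H$: $(\vg^\top\vh)^2 \leq (\vg^\top H^{-1}\vg)(\vh^\top H\vh)$. Property (iii) of \Cref{def:sc_and_scb} bounds the first factor by $\nu$, and the second factor is exactly $\psi''(t)$, which yields the claimed inequality.

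With this in hand I would finish by a short ODE estimate. If $\psi'(0)\leq 0$ the conclusion is immediate since $\nu\geq 1>0$. Otherwise $\psi'(0) > 0$; since $\psi''\geq 0$, $\psi'$ is nondecreasing, so $\psi'(t)\geq\psi'(0)>0$ on $[0,1)$, and then $\psi'(t)^2\leq\nu\psi''(t)$ forces $\psi''(t)>0$ there as well. Consequently $\frac{d}{dt}\bigl(-1/\psi'(t)\bigr) = \psi''(t)/\psi'(t)^2 \geq 1/\nu$, and integrating from $0$ to $t$ gives $1/\psi'(0) \geq t/\nu + 1/\psi'(t) > t/\nu$, so $\psi'(0) < \nu/t$ for every $t\in(0,1)$; letting $t\to 1^-$ yields $\psi'(0)\leq\nu$. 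The only genuine subtlety is the opening step — verifying that the open segment stays in the open domain so that $\psi$ is smooth on all of $[0,1)$ and the limit $t\to 1^-$ is legitimate; the Cauchy–Schwarz/self‑concordance step and the ODE estimate are both routine.
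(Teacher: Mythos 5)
The paper does not actually prove this statement—it is quoted as a Fact directly from \cite{nesterov1994interior}—so there is no internal proof to compare against; your argument is correct and is essentially the classical one from that reference: restrict to the segment $t\mapsto \vx+t(\vy-\vx)$, use Cauchy--Schwarz in the local Hessian inner product together with property (iii) of \Cref{def:sc_and_scb} to get $\psi'(t)^2\le\nu\,\psi''(t)$, and integrate the resulting differential inequality, with the line-segment principle handling the case $\vy\in\mathrm{cl(dom)}(\Phi)$. The only implicit hypothesis you rely on, invertibility of $\nabla^2\Phi$ at points of the segment so that the Cauchy--Schwarz step $(\vg^\top\vh)^2\le(\vg^\top H^{-1}\vg)(\vh^\top H\vh)$ applies, is already built into the paper's statement of property (iii), so your proof is complete as written.
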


\begin{fact}[{\cite[Proposition $2.3.1$]{nesterov1994interior}}]\label{thm:affine_scb} 
Let $G$ be a closed convex domain in $E$, let $F$ be a $\nu$-self-concordant barrier for $G$, and let $x=\mathcal{A}(y)$ be an affine transformation from a space $E^{\prime}$ into $E$ such that $\mathcal{A}(E^\prime)\cap \mathrm{int}(G)\neq \emptyset$. Let $G^\prime=\mathcal{A}^{-1}(G)$ and $F^\prime(y)= F(\mathcal{A}(y)):\mathrm{int}(G^\prime)\to \mathbb{R}$. Then, $F^\prime$ is a $\nu$-self-concordant barrier for $G^\prime$. 
\end{fact}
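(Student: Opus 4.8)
The plan is to verify the three defining conditions of \Cref{def:sc_and_scb} for $F'$ on $\mathrm{int}(G')$ by transporting the corresponding properties of $F$ through the chain rule. Write the affine map as $\mathcal{A}(y) = \mathcal{A}_0 y + b$ with $\mathcal{A}_0$ linear. Since $G$ is closed convex and $\mathcal{A}$ is continuous, $G' = \mathcal{A}^{-1}(G)$ is closed convex; and since $\mathcal{A}(E')$ meets $\mathrm{int}(G)$, a short argument (if $\mathcal{A}(y_0)\in\mathrm{int}(G)$ then $\mathcal{A}(y)\in\mathrm{int}(G)$ for all $y$ near $y_0$, using convexity) shows $\mathrm{int}(G') = \mathcal{A}^{-1}(\mathrm{int}(G)) \neq \emptyset$, so $F' = F \circ \mathcal{A}$ is well-defined, convex, and thrice differentiable there. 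I would also flag at the outset that condition (iii) as written presumes $\nabla^2 F'$ nonsingular, which forces $\mathcal{A}_0$ injective; I would either impose this as a harmless nondegeneracy hypothesis or derive it from the fact that $\mathrm{int}(G')$, being a genuine (line-free) barrier domain, cannot contain the affine line $y + \ker(\mathcal{A}_0)$.

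First I would record the chain-rule identities. For $y \in \mathrm{int}(G')$ put $x := \mathcal{A}(y) \in \mathrm{int}(G)$, and for a direction $h$ put $u := \mathcal{A}_0 h$; then $\mathcal{D}^k F'(y)[h,\dots,h] = \mathcal{D}^k F(x)[u,\dots,u]$ for $k = 1,2,3$, and in matrix form $\nabla F'(y) = \mathcal{A}_0^\top \nabla F(x)$ and $\nabla^2 F'(y) = \mathcal{A}_0^\top \nabla^2 F(x) \mathcal{A}_0$. Condition (ii) is then immediate: for every $y$ and $h$,
\[ \lvert \mathcal{D}^3 F'(y)[h,h,h] \rvert = \lvert \mathcal{D}^3 F(x)[u,u,u] \rvert \le 2\bigl(\mathcal{D}^2 F(x)[u,u]\bigr)^{3/2} = 2\bigl(\mathcal{D}^2 F'(y)[h,h]\bigr)^{3/2}, \]
by applying condition (ii) for $F$ at $x$ along $u$. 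Condition (i) I would obtain by taking $y_i \in \mathrm{int}(G')$ with $y_i \to \bar y \in \partial G'$: then $x_i := \mathcal{A}(y_i) \to \bar x := \mathcal{A}(\bar y) \in G$, and $\bar x \in \partial G$ (else $\bar y \in \mathcal{A}^{-1}(\mathrm{int}(G)) = \mathrm{int}(G')$), so $\{x_i\} \subseteq \mathrm{int}(G)$ converges to the boundary of $G$ and condition (i) for $F$ gives $F'(y_i) = F(x_i) \to \infty$.

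Condition (iii) is the crux, and I would handle it with the variational form of the Newton decrement. For $y \in \mathrm{int}(G')$ and $x = \mathcal{A}(y)$,
\[ \nabla F'(y)^\top \bigl(\nabla^2 F'(y)\bigr)^{-1} \nabla F'(y) = \sup_{h:\, \mathcal{A}_0 h \neq 0} \frac{\bigl(\mathcal{D} F(x)[\mathcal{A}_0 h]\bigr)^2}{\mathcal{D}^2 F(x)[\mathcal{A}_0 h, \mathcal{A}_0 h]} \le \sup_{u \neq 0} \frac{\bigl(\mathcal{D} F(x)[u]\bigr)^2}{\mathcal{D}^2 F(x)[u,u]} = \nabla F(x)^\top \bigl(\nabla^2 F(x)\bigr)^{-1} \nabla F(x) \le \nu, \]
where the inequality only enlarges the feasible set from the subspace $\mathcal{A}_0(E') \subseteq E$ to all of $E$, and the final step is condition (iii) for $F$. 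Since the bound $\nu \ge 1$ is inherited verbatim, $F'$ is a $\nu$-self-concordant barrier for $G'$, which finishes the proof. The genuinely delicate points are the topological bookkeeping — establishing $\mathrm{int}(G') = \mathcal{A}^{-1}(\mathrm{int}(G))$ and that sequences approaching $\partial G'$ map to sequences approaching $\partial G$ — together with the nonsingularity of $\nabla^2 F'(y)$; the differential-calculus manipulations (chain rule, the variational identity for the decrement) are entirely routine.
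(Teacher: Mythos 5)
The paper offers no proof of this statement --- it is imported as a Fact from \citet{nesterov1994interior} (Proposition 2.3.1) --- so there is nothing in-paper to compare against; judged on its own, your argument is correct and is essentially the standard one: the chain rule gives $\mathcal{D}^k F'(y)[h,\dots,h]=\mathcal{D}^k F(\mathcal{A}(y))[\mathcal{A}_0h,\dots,\mathcal{A}_0h]$, condition (ii) of \Cref{def:sc_and_scb} transfers verbatim, the barrier (blow-up) property follows from the boundary correspondence, and the $\nu$-bound follows because the variational form of the Newton decrement restricted to the subspace $\range(\mathcal{A}_0)$ can only be smaller. Two points to tighten. First, your parenthetical argument for $\mathrm{int}(G')=\mathcal{A}^{-1}(\mathrm{int}(G))$ establishes only the easy inclusion $\mathcal{A}^{-1}(\mathrm{int}(G))\subseteq\mathrm{int}(G')$; the inclusion you actually need for $F'$ to be defined on all of $\mathrm{int}(G')$ is the converse, and it is exactly where the hypothesis $\mathcal{A}(E')\cap\mathrm{int}(G)\neq\emptyset$ enters: for $y\in\mathrm{int}(G')$ and $x_0=\mathcal{A}(y_0)\in\mathrm{int}(G)$, the point $w:=\mathcal{A}\bigl(y+t(y-y_0)\bigr)$ lies in $G$ for small $t>0$, and $\mathcal{A}(y)=\tfrac{t}{1+t}x_0+\tfrac{1}{1+t}w$ is then a strict convex combination of an interior point and a point of $G$, hence interior (the line-segment principle). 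Second, the injectivity caveat you flag is an artifact of \Cref{def:sc_and_scb}, whose condition (iii) is phrased with $(\nabla^2 f)^{-1}$; in the source the barrier parameter is defined via the equivalent inequality $\lvert\mathcal{D}F(x)[u]\rvert\le\sqrt{\nu}\,\bigl(\mathcal{D}^2F(x)[u,u]\bigr)^{1/2}$, under which your restriction argument needs no nondegeneracy assumption at all, so either adopting that formulation or assuming $\mathcal{A}_0$ injective (as you do) closes the gap cleanly.
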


To prove \cref{{thm:res_lower_bound}}, we use two simple properties of self-concordant barriers \cref{{lem:f_sc_one_by_nine_r_squared}} and \cref{{lem:linear_plus_barrier_sc}} that originally appeared in \cite{ghadiri2024improving}. We provide their proofs here for completeness. 

\begin{lemma}[\cite{ghadiri2024improving}]
\label{lem:f_sc_one_by_nine_r_squared}If $f$ is a self-concordant barrier for a set $\calK\subseteq \calB(0,R),$
then for any $x\in \calK$, we have  \[\nabla^{2}f(\vx)\succeq\frac{1}{9R^{2}}I.\] 
\end{lemma}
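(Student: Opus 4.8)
The plan is to deduce the Hessian lower bound from the Dikin ellipsoid property of self-concordant functions together with the boundedness of $\calK$. First I would observe that, by condition (ii) of \Cref{def:sc_and_scb} (the estimate $|\mathcal{D}^3 f(\vx)[\vh,\vh,\vh]| \le 2(\mathcal{D}^2 f(\vx)[\vh,\vh])^{3/2}$ with the standard normalization), a $\nu$-self-concordant barrier $f$ on $\calK$ is in particular an ordinary self-concordant function on the open convex domain $\calK$. This lets me invoke the classical fact that the open unit Dikin ellipsoid $W(\vx) := \{\vy : \|\vy - \vx\|_{\nabla^2 f(\vx)} < 1\}$ is contained in $\calK$. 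If a self-contained argument is preferred, this follows in two lines by integrating condition (ii) along a ray: for fixed $\vh$ the map $t \mapsto (\mathcal{D}^2 f(\vx + t\vh)[\vh,\vh])^{-1/2}$ is $1$-Lipschitz, so it remains strictly positive — and hence the Hessian remains finite and we stay inside $\calK$ — for all $|t| < (\mathcal{D}^2 f(\vx)[\vh,\vh])^{-1/2}$.

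Next I would fix $\vx \in \calK$ and an arbitrary unit vector $\vu \in \R^n$, set $\lambda := \vu^\top \nabla^2 f(\vx)\vu$, and apply the containment along the direction $\vu$: for every $|t| < 1/\sqrt{\lambda}$ we have $\|t\vu\|_{\nabla^2 f(\vx)} < 1$, hence $\vx + t\vu \in W(\vx) \subseteq \calK \subseteq \calB(0,R)$. Letting $t \to (1/\sqrt{\lambda})^-$ from both sides and using continuity, the two points $\vx + s\vu$ and $\vx - s\vu$ with $s := 1/\sqrt{\lambda}$ lie in $\overline{\calB(0,R)}$, so each has Euclidean norm at most $R$; the triangle inequality then gives $2s = \|(\vx + s\vu) - (\vx - s\vu)\| \le \|\vx+s\vu\| + \|\vx - s\vu\| \le 2R$, i.e. $1/\sqrt{\lambda} \le R$. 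Therefore $\vu^\top \nabla^2 f(\vx)\vu \ge 1/R^2 \ge 1/(9R^2)$, and since $\vu$ was an arbitrary unit vector, $\nabla^2 f(\vx) \succeq \tfrac{1}{9R^2} I$.

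I do not expect a genuine obstacle here: the only real content is the Dikin ellipsoid containment, which is a textbook property of self-concordant functions, and everything else is elementary geometry. The points requiring a little care are (i) matching the normalization of self-concordance used in whatever form of the Dikin statement is cited to the one in \Cref{def:sc_and_scb} (the factor $2$ there is the standard one, so the unit ellipsoid is the correct object), and (ii) the open/closed subtleties in the limit $t \to (1/\sqrt{\lambda})^-$, which is why the conclusion is phrased with the closed ball $\overline{\calB(0,R)}$. The gap between the $1/R^2$ this argument naturally yields and the claimed $1/(9R^2)$ is harmless slack — it comfortably absorbs the weaker constant one would incur if a conservative variant of the containment (for instance, the radius-$1/3$ Dikin ellipsoid, often used to simultaneously obtain Hessian-stability estimates) were invoked instead.
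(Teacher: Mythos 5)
Your proof is correct and takes essentially the same route as the paper: both hinge on the classical fact that the unit Dikin ellipsoid $\left\{\vy : (\vy-\vx)^\top \nabla^2 f(\vx)(\vy-\vx) \leq 1\right\}$ is contained in $\calK$, combined with $\calK \subseteq \calB(0,R)$. The only difference is cosmetic: the paper argues by contradiction at scale $3R$ (hence the constant $\tfrac{1}{9R^2}$), whereas your symmetric two-point argument directly yields the sharper bound $\tfrac{1}{R^2}$, which of course implies the stated claim.
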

\begin{proof}
For the sake of contradiction, suppose $\nabla^{2}f\not\succeq\frac{1}{9R^{2}}I.$ This is equivalent to, for some $x\in \calK$ and unit vector $\u$, 
\begin{equation}
(3R\u)^{\top}\nabla^{2}f(\vx)(3R\vu)<1.\label[ineq]{eq:contradiction_ineq}
\end{equation}
Define the unit-radius
Dikin ellipsoid around $\vx$ as \[\mathcal{E}_\vx(\vx, 1)=\left\{\vy:(\vy-\vx)^{\top}\nabla^{2}f(\vx)(\vy-\vx)\leq1\right\}.\]
Then, \Cref{eq:contradiction_ineq} is equivalent to the assertion that $\vx+3R\vu\in\mathcal{E}_\vx(\vx,1).$
Because $f$ is self-concordant we have $\mathcal{E}_\vx(\vx,1)\subseteq \calK$ (see, e.g., \cite[Theorem $2.1.1$]{nesterov1994interior}).
This, combined with $\vx+3R\vu\in\mathcal{E}_\vx(\vx,1)$,  implies $\vx+3R\vu\in \calK.$
However, since $\calK\subseteq \calB(0,R)$ and $x\in \calK$ by construction,
the inclusion $\vx+3R\vu\in \calK$  cannot hold for any unit vector $\vu,$ which implies that our initial assumption must be false, thus concluding the proof.  
\end{proof}
\begin{lemma}[\cite{ghadiri2024improving}]
\label{lem:linear_plus_barrier_sc}If $f$ is a $\nu$-self-concordant barrier for a given
convex set $\calK$ then $g(\vx)=c^{\top}\vx+f(\vx)$ is a self-concordant barrier over $\calK$. Further, if $\calK\subseteq \calB(0,R),$ then $g$ has self-concordance parameter at most 
\[20(\nu+R^{2}\|\vc\|^{2}).\] 
\end{lemma}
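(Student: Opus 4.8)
The plan is to exploit the fact that adding a linear function leaves the Hessian (and all higher derivatives) unchanged, so that only the bound on the barrier parameter in \Cref{def:sc_and_scb} requires real work.

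First I would record the elementary identities $\nabla g(\vx) = \nabla f(\vx) + \vc$, $\nabla^2 g(\vx) = \nabla^2 f(\vx)$, and $\mathcal{D}^3 g(\vx)[\vh,\vh,\vh] = \mathcal{D}^3 f(\vx)[\vh,\vh,\vh]$, all because $\vc^\top\vx$ is affine. Consequently, condition (ii) of \Cref{def:sc_and_scb} holds for $g$ verbatim (it is the same inequality as for $f$), so $g$ is self-concordant. Condition (i) also transfers: for any sequence $\vx_i \to \partial\calK$ we have $f(\vx_i)\to\infty$ by condition (i) for $f$, and $\vc^\top\vx_i$ stays bounded since $\calK\subseteq\calB(0,R)$, hence $g(\vx_i)\to\infty$. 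Thus $g$ is already a self-concordant barrier, and it remains only to estimate its parameter $\nu$.

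For that, fix $\vx\in\calK$ and write $H := \nabla^2 f(\vx) = \nabla^2 g(\vx)$. Using the inequality $(\va+\vb)^\top M(\va+\vb) \le 2\va^\top M\va + 2\vb^\top M\vb$, valid for any $M\succeq 0$, we get
\[ \nabla g(\vx)^\top H^{-1}\nabla g(\vx) = (\nabla f(\vx) + \vc)^\top H^{-1}(\nabla f(\vx) + \vc) \le 2\,\nabla f(\vx)^\top H^{-1}\nabla f(\vx) + 2\,\vc^\top H^{-1}\vc. \]
The first term is at most $2\nu$ by condition (iii) for $f$. For the second term I would invoke \Cref{lem:f_sc_one_by_nine_r_squared}, which gives $H \succeq \tfrac{1}{9R^2}\mI$, hence $H^{-1}\preceq 9R^2\mI$ and $\vc^\top H^{-1}\vc \le 9R^2\|\vc\|^2$. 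Combining, $\nabla g(\vx)^\top H^{-1}\nabla g(\vx) \le 2\nu + 18R^2\|\vc\|^2 \le 20(\nu + R^2\|\vc\|^2)$; since $\nu\ge 1$ the right-hand side exceeds $1$, so it is a valid self-concordance parameter for $g$.

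There is no genuinely hard step: the only point requiring care is verifying the barrier property (condition (i)), which is exactly where boundedness of $\calK$ is used, and the sole nontrivial ingredient is \Cref{lem:f_sc_one_by_nine_r_squared}, which converts the otherwise-uncontrolled contribution of the linear term's gradient $\vc$ into the controlled quantity $R^2\|\vc\|^2$.
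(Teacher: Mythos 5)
Your proposal is correct and follows essentially the same route as the paper: the identical decomposition $\|\vc+\nabla f(\vx)\|_{\nabla^{2}f(\vx)^{-1}}^{2}\leq 2\|\vc\|_{\nabla^{2}f(\vx)^{-1}}^{2}+2\|\nabla f(\vx)\|_{\nabla^{2}f(\vx)^{-1}}^{2}$ combined with \Cref{lem:f_sc_one_by_nine_r_squared} to bound the $\vc$-term by $18R^{2}\|\vc\|^{2}$. Your explicit verification of the barrier (boundary blow-up) condition is a small addition the paper leaves implicit, but the substance of the argument is the same.
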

\begin{proof}
Since $\nabla^{2}g=\nabla^{2}f,$ we can conclude that $g$ is also
a self-concordant function. Since $\calK\subseteq \calB(0,R)$, \Cref{lem:f_sc_one_by_nine_r_squared}
applies, and we have $\nabla^{2}f(\vx)\succeq\frac{1}{9R^{2}}\mI\text{ for all }x\in \calK.$
Equivalently, 
\[
\nabla^{2}f(\vx)^{-1}\preceq9R^{2}\mI\text{ for all }x\in \calK. \numberthis\label[ineq]{ineq:nabla_squared_f_at_most_squared_R}
\]
 The self-concordance parameter  (see \Cref{def:sc_and_scb}) of $g$ is:  
\begin{align*}
\|\nabla g(\vx)\|_{\nabla^{2}g(\vx)^{-1}}^{2}  =\|\vc+\nabla f(\vx)\|_{\nabla^{2}f(\vx)^{-1}}^{2} \leq2\|\vc\|_{\nabla^{2}f(\vx)^{-1}}^{2}+2\|\nabla f(\vx)\|_{\nabla^{2}f(\vx)^{-1}}^{2},\numberthis\label[ineq]{ineq:sc_linear_plus_barr_last}
\end{align*}
 where 
 the first step is by definition of self-concordance parameter of $g$. 
 To finish the proof, we recall that 
$\|\vc\|_{\nabla^{2}f(\vx)^{-1}}^{2}\leq 9R^{2}\|\vc\|_{2}^{2}$ 
by \Cref{ineq:nabla_squared_f_at_most_squared_R}, and 
$\|\nabla f(\vx)\|_{\nabla^{2}f(\vx)^{-1}}^{2}\leq
\nu$ 
 by the  self-concordance parameter of $f$ and put these bounds into \Cref{ineq:sc_linear_plus_barr_last}. 
\end{proof}

Both our proofs of our main result lower bounding the residual crucially build upon the following result from \cite{zong2023short}, who proved the following highly non-trivial lower bound in \cref{{eq:2}} (the upper bound was known in the classical literature on interior-point methods). 
\begin{fact}[\cite{zong2023short}, Theorem 2]\label{lem:zong_opt_gap}
    Fix a vector $\vc$, a polytope $\calK$, and a point $\v.$ We assume
that the polytope $\calK$ contains a full-dimensional ball of radius $r.$ Let $\vstar=\arg\min_{\u\in \calK}\vc^{\top}\u$.
We define, for $\vc,$  
\begin{equation}
\gap(\v)=\vc^{\top}(\v-\vstar).\label{eq:1}
\end{equation}
Further, define $\v_{\eta}=\arg\min_{\v}\vc^{\top}\v+\eta\phiK(\v)$,
where $\phiK$ is a self-concordant barrier on $\calK.$ Then we have the following
lower bound on this suboptimality gap evaluated at $\veta$: 
\begin{equation}
\min\left\{ \frac{\eta}{2},\frac{r\|\vc\|}{2\nu+4\sqrt{\nu}}\right\} \leq\gap(\veta)=\vc^{\top}(\veta-\vstar) \leq \eta\nu.\label[ineq]{eq:2}
\end{equation}
\end{fact}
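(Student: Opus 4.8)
The plan is to establish the two inequalities separately, via the first-order optimality condition of the penalized problem. Writing $g(\v) := \vc^\top\v + \eta\phiK(\v)$, note that $g$ is strictly convex with an interior minimizer (since $\phiK$ blows up on $\partial\calK$), so $\veta$ is characterized by $\vc = -\eta\,\nabla\phiK(\veta)$. The upper bound will be routine; the lower bound is the real content.

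\emph{Upper bound.} This is the classical interior-point duality-gap estimate: substituting $\vc = -\eta\,\nabla\phiK(\veta)$,
\[
\gap(\veta) \;=\; \vc^\top(\veta - \vstar) \;=\; \eta\,\nabla\phiK(\veta)^\top(\vstar - \veta) \;\le\; \eta\nu,
\]
where the last step is the defining gradient inequality of a $\nu$-self-concordant barrier (\Cref{thm:inner_prod_ub_nu}) applied at the interior point $\veta$ and the point $\vstar \in \mathrm{cl}(\calK)$.

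\emph{Lower bound.} The plan is to show that a small gap forces $\veta$ to lie very close to $\partial\calK$, which is incompatible with $\veta$ minimizing $g$ unless $\eta$ is itself small. First, the ray $\{\veta - s\vc/\|\vc\| : s \ge 0\}$ strictly decreases $\vc^\top(\cdot)$ and so cannot attain a value below $\vc^\top\vstar$; hence it exits $\calK$ at some $s^\ast$ with $s^\ast\|\vc\| \le \vc^\top\veta - \vc^\top\vstar = \gap(\veta)$, giving $\mathrm{dist}(\veta,\partial\calK) \le \gap(\veta)/\|\vc\|$. Next, since the unit Dikin ellipsoid of $\phiK$ at $\veta$ is contained in $\calK$, proximity $\rho := \gap(\veta)/\|\vc\|$ to the boundary forces $\phiK$ to have large curvature at $\veta$ in the offending direction, hence (integrating toward the boundary) forces $\phiK(\veta)$ to be large — quantitatively on the order of $\log(1/\rho)$ plus a term depending on the inscribed/circumscribed radii. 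Finally, compare $g(\veta) \le g(\vp)$ against the deep interior point $\vp := \vx_c - \tfrac{r}{2}\,\vc/\|\vc\|$, where $\vx_c$ is the center of an inscribed ball $B(\vx_c,r)\subseteq\calK$, so that $B(\vp,r/2)\subseteq\calK$ and $\phiK(\vp)$ is controlled; using $\calK\subseteq B(0,R)$ to bound $\vc^\top(\vp - \veta)$, this caps how large $\eta\,\phiK(\veta)$ can be. Confronting the two estimates, and invoking the standard ``outer'' bound for self-concordant barriers (the body $\calK$ lies within $\nu + 2\sqrt{\nu}$ Dikin radii of its analytic center — the source of the denominator $2\nu + 4\sqrt{\nu}$), yields a contradiction unless $\gap(\veta) \ge \min\{\eta/2,\ r\|\vc\|/(2\nu + 4\sqrt{\nu})\}$: the first branch is the regime in which the barrier weight $\eta$ is simply too small to displace $\veta$ from the optimal face, and the second is the regime in which the geometry of $\calK$ is the binding constraint.

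\emph{Main obstacle.} The delicate step is the middle one in the lower bound — converting ``$\veta$ is within $\rho$ of $\partial\calK$'' into a sharp lower bound on $\phiK(\veta) - \phiK(\vp)$ (equivalently on $\|\nabla\phiK(\veta)\| = \|\vc\|/\eta$) using only $\nu$-self-concordance together with the ball radii, and then tuning the two regimes so the constants collapse exactly to $\min\{\eta/2,\, r\|\vc\|/(2\nu+4\sqrt{\nu})\}$. This is precisely the technical contribution of \citet{zong2023short}, which we invoke here as a black box rather than reprove.
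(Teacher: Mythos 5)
The paper offers no proof of this statement: it is imported wholesale as a Fact from \citet{zong2023short} (the upper bound being classical, the lower bound being that paper's contribution), and your proposal in effect treats it the same way. Your upper-bound derivation via the optimality condition $\vc=-\eta\nabla\phiK(\veta)$ together with \Cref{thm:inner_prod_ub_nu} is the standard argument (it mirrors the computation the paper itself performs in \Cref{thm:error_bound_barrier_mpc}), and for the lower bound you ultimately invoke \citet{zong2023short} as a black box, exactly as the paper does, so there is nothing to fault relative to the paper.

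One caution: the sketch you wrap around that citation should not be mistaken for a viable proof outline of the lower bound. Its first step, $\mathrm{dist}(\veta,\partial\calK)\le \gap(\veta)/\|\vc\|$, is fine, but the proposed contradiction---lower-bounding $\phiK(\veta)-\phiK(\vp)$ by roughly $\log(1/\rho)$ because $\veta$ is $\rho$-close to the boundary, and upper-bounding $\eta\,(\phiK(\veta)-\phiK(\vp))$ by $\mathcal{O}(R\|\vc\|)$ from $g(\veta)\le g(\vp)$---would only force $\gap(\veta)\gtrsim \|\vc\|\, e^{-\mathcal{O}(R\|\vc\|/\eta)}$, which is exponentially weaker than $\min\{\eta/2,\ r\|\vc\|/(2\nu+4\sqrt{\nu})\}$ in the regime $\|\vc\|\gg\eta$. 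The actual argument has to exploit the gradient information $\|\nabla\phiK(\veta)\|=\|\vc\|/\eta$ and Dikin-ellipsoid containments rather than a comparison of barrier values at a deep interior point. Since you explicitly defer this step to the reference, this is a caveat about the sketch, not a gap in what you actually claim.
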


Finally, we need the following known claim about the minimizer of a convex combination of two convex functions. We provide its proof for the sake of completeness of this document.  
\begin{claim}\label{claim:one_dim_grad_monotone}
Let $f:\mathbb{R\to\mathbb{R}}$ and $g:\mathbb{R}\to\mathbb{R}$
be two convex, continuously differentiable functions with
$x_{f}$ and $x_{g}$ satisfying $f^{\prime}(x_{f})=0$ and $g^{\prime}(x_{g})=0.$ Additionally, let $g$ be strictly convex. 
Let $x_{f}<x_{g}$. Then, for any positive $\alpha$ and $\beta,$
we have that the point $x_{\alpha f+\beta g}$ satisfying $\alpha f^{\prime}(x_{\alpha f+\beta g})+\beta g^{\prime}(x_{\alpha f+\beta g})=0$ additionally
satisfies $x_{\alpha f+\beta g}\in[x_{f},x_{g}].$ 
\end{claim}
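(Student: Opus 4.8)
The plan is to reduce everything to monotonicity of first derivatives. Write $h := \alpha f + \beta g$. Since $f$ is convex, $g$ is strictly convex, and $\alpha,\beta > 0$, the function $h$ is strictly convex and continuously differentiable; hence $h'$ is continuous and strictly increasing, so it has at most one zero. Thus the point $x_{\alpha f+\beta g}$ (the stationary point of $h$) is unambiguously defined once we exhibit a zero of $h'$, and it suffices to locate one such zero inside $[x_f, x_g]$.

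First I would record the sign patterns of the derivatives: convexity of $f$ with $f'(x_f) = 0$ gives $f'(x) \le 0$ for $x \le x_f$ and $f'(x) \ge 0$ for $x \ge x_f$; strict convexity of $g$ with $g'(x_g) = 0$ gives $g'(x) < 0$ for $x < x_g$ and $g'(x) > 0$ for $x > x_g$. Next I would evaluate $h'$ at the two candidate endpoints. At $x_f$, since $x_f < x_g$ we have $g'(x_f) < 0$, so $h'(x_f) = \alpha f'(x_f) + \beta g'(x_f) = \beta g'(x_f) < 0$. At $x_g$, since $x_g > x_f$ we have $f'(x_g) \ge 0$, so $h'(x_g) = \alpha f'(x_g) + \beta g'(x_g) = \alpha f'(x_g) \ge 0$. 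As $h'$ is continuous and $h'(x_f) < 0 \le h'(x_g)$, the intermediate value theorem produces a zero of $h'$ in $[x_f, x_g]$; by the uniqueness noted above this zero is exactly $x_{\alpha f+\beta g}$, which therefore lies in $[x_f, x_g]$.

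There is no substantive obstacle here — the argument is a short intermediate-value / monotonicity computation. The only point requiring a little care is the exact role of the strict convexity of $g$: it is what makes $h$ strictly convex (and hence $x_{\alpha f+\beta g}$ well defined as a single point), and it is also what yields the strict inequality $g'(x_f) < 0$ needed so that $h'(x_f) < 0$. If $g$ were merely convex the stationary set of $h$ could in principle be a nondegenerate interval, but the same sign analysis shows it would still be contained in $[x_f, x_g]$.
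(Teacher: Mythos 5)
Your proof is correct and follows essentially the same route as the paper's: use monotonicity of the derivatives of $f$ and $g$ to get $h'(x_f)<0\le h'(x_g)$ for $h=\alpha f+\beta g$, apply the intermediate value theorem to the continuous derivative, and invoke strict convexity of $h$ for uniqueness. Your version is in fact slightly more careful than the paper's (which writes a strict inequality at $x_g$ where only $\ge 0$ is guaranteed), but the argument is the same.
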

\begin{proof}
Since $f$ and $g$ are both  convex, their gradients are
monotone (and the gradients of $g$ are strictly monotone). Since $x_f < x_g$, this then implies $0=f^{\prime}(x_{f})\leq f^{\prime}(x_{g})$
and $g^{\prime}(x_{f})<g^{\prime}(x_{g})=0.$ Multiplying the first
inequality by $\alpha$ and the second by $\beta$ and summing them
gives $\alpha f^{\prime}(x_{f})+\beta g^{\prime}(x_{f})<0<\alpha f^{\prime}(x_{g})+\beta g^{\prime}(x_{g}).$
By the mean value theorem, there must be a point $x_{\alpha f+\beta g}\in[x_{f},x_{g}]$
at which the derivative is zero. Since $\alpha f + \beta g$ is strictly convex, the point where its derivative is zero must be unique. This concludes the proof. 
\end{proof}

\subsection{A First Lower Bound on the Residual}
Equipped with the tools from the previous sections, we now provide a preliminary lower bound on the residual, as desired to  claim  smoothness of our barrier MPC solution.  
\begin{theorem}\label{thm:res_lower_bound}
Let $\mathcal{K}=\left\{ \vx:{{A}}\vx\geq{b}\right\} $ be
a polytope such that each of $m$ rows of ${A}$ is normalized to be
unit norm. Let $\mathcal{K}$ contain a ball of radius $r$ and be contained
inside a ball of radius $R$ centered at the origin. Let 
\begin{equation}
\ue:=\arg\min_{\u}q(\u)+\eta\phi_{\mathcal{K}}(\u),\label{eq:def-ueta}
\end{equation}
where $q$ is a convex $L$-Lipschitz function and $\phi_{\mathcal{K}}$ is a $\nu$-self-concordant barrier on $\mathcal{K}$. We show for $\re_i(u_\eta)$, the $i^\mathrm{th}$ residual at $u_\eta$, the following lower bound:
 \[\re_i(u_\eta)\geq  \min\left\{ \frac{\eta}{2}, \frac{r\eta^2}{150 (\nu\eta^2 + R^2 (L^2 +1) )} \right\}.\] 
\end{theorem}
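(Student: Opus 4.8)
The plan is to reduce the Lipschitz-objective setting to the linear-objective setting covered by \Cref{lem:zong_opt_gap}, and then relate the suboptimality gap to the residual. First, fix the index $i$ and consider the single constraint $\va_i^\top \u \geq b_i$ (with $\|\va_i\|=1$), so that the $i^\mathrm{th}$ residual is exactly $\re_i(\u) = \va_i^\top \u - b_i$, which is an affine function. The key observation is that $\ue$ is the minimizer of $q(\u) + \eta\phiK(\u)$, so by first-order optimality $\nabla q(\ue) = -\eta\nabla\phiK(\ue)$; in particular $\ue$ is \emph{also} the minimizer of the linearized problem $\min_\u \vc^\top\u + \eta\phiK(\u)$ where $\vc := \nabla q(\ue)$. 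Since $q$ is $L$-Lipschitz, $\|\vc\| = \|\nabla q(\ue)\| \leq L$. This is the crucial reduction: we have traded a general convex Lipschitz $q$ for a linear objective with gradient norm bounded by $L$, at the cost of $\vc$ depending on $\ue$ — but that dependence does not matter for the bounds we extract, which only use $\|\vc\|\leq L$.

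Next I would apply \Cref{lem:zong_opt_gap} to the linear objective $\vc^\top\u$ over $\calK$. Writing $\vstar_\vc := \arg\min_{\u\in\calK}\vc^\top\u$ and $\gap_\vc(\ue) = \vc^\top(\ue - \vstar_\vc)$, the lemma gives
\[
\gap_\vc(\ue) \;\geq\; \min\left\{\frac{\eta}{2},\ \frac{r\|\vc\|}{2\nu + 4\sqrt\nu}\right\}.
\]
The remaining work is to convert this lower bound on the suboptimality gap into a lower bound on $\min_i \re_i(\ue)$. The standard geometric fact here is that if $\calK$ contains a ball of radius $r$, then any point $\u\in\calK$ whose distance to the boundary is small forces the linear suboptimality gap along the \emph{worst} direction to also be small: concretely, moving from $\ue$ toward $\vstar_\vc$ one can argue (using the inscribed ball of radius $r$ and that $\calK\subseteq\calB(0,R)$) that $\gap_\vc(\ue) \leq \frac{2R}{r}\cdot\|\vc\|\cdot\min_i\re_i(\ue)$, i.e. the gap is controlled by the minimum residual times a $\|\vc\|\cdot R/r$ factor. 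Rearranging and combining with the lower bound on the gap — and being careful about the two regimes in the $\min$, and that $\|\vc\|$ appears on both sides so the $\|\vc\|$ in the numerator of the second term of the $\min$ does not simply cancel — yields a bound of the form
\[
\min_i \re_i(\ue) \;\geq\; \min\left\{\frac{\eta r}{4R},\ \frac{r^2\|\vc\|}{(2\nu+4\sqrt\nu)\cdot 2R\|\vc\|}\right\},
\]
and then one must also handle the degenerate case $\vc = 0$ (where $\ue$ is the unconstrained minimizer, trivially interior) separately, or bound $\|\vc\|$ away from the dangerous regime. Matching this against the claimed bound $\min\{\eta/2,\ r\eta^2/(150(\nu\eta^2 + R^2(L^2+1)))\}$ requires tracking constants and using $\|\vc\|\leq L$ together with the self-concordance parameter; the $(L^2+1)$ and the extra $\eta^2$ factors in the denominator come from a slightly more careful argument that avoids the cancellation issue, presumably by not linearizing with $\vc=\nabla q(\ue)$ exactly but rather bounding the gap of $q$ itself via strong-convexity-free arguments and the $L$-Lipschitzness.

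The main obstacle I anticipate is precisely this last conversion step and the handling of $\|\vc\|$. Because $\vc = \nabla q(\ue)$ can be arbitrarily small (it is zero exactly when $\ue$ is interior and far from all constraints, which is the easy case, but it can also be small while $\ue$ is near a constraint if $q$ is nearly flat there), one cannot simply write $\gap \leq (\text{const})\cdot\re$ and divide by $\|\vc\|$. The fix is to treat the regime where $\|\vc\|$ is small — there the residual is large anyway because $\ue$ stays near the analytic center — and the regime where $\|\vc\|$ is bounded below separately, or to work with the gap of $q$ directly (using $q(\ue)-q(\vstar) \geq$ something and the Lipschitz bound $q(\ue)-q(\vstar)\leq L\|\ue-\vstar\|\leq 2RL$) rather than passing through the linearization. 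Getting the clean unified bound with the stated constant $150$ and the form $\nu\eta^2 + R^2(L^2+1)$ will require care in combining these cases, but no deep new idea beyond the reduction to \Cref{lem:zong_opt_gap}.
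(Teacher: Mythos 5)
Your opening move is the same as the paper's: linearize at the optimum by setting $\vc=\nabla q(\ue)$, so that $\ue=\arg\min_{\u}\vc^{\top}\u+\eta\phiK(\u)$ with $\|\vc\|\leq L$. But the rest of your plan has a genuine gap. The conversion inequality you rely on, $\gap_{\vc}(\ue)\leq \tfrac{2R}{r}\|\vc\|\min_i\re_i(\ue)$, is false for a general point of $\calK$: take $\calK=[0,1]^2$, $\vc=e_1$, and the point $(1/2,\epsilon)$, whose minimum residual is $\epsilon$ while the gap is $1/2$. Small distance to \emph{some} facet does not force a small gap; what optimality of $\vstar_\vc$ actually gives (via KKT, $\vc=\sum_j\lambda_j\va_j$ over constraints active at $\vstar_\vc$, $\lambda_j\geq 0$) is $\gap_{\vc}(\ue)=\sum_j\lambda_j\re_j(\ue)$, i.e.\ the gap controls only a weighted combination of the residuals of the constraints active at $\vstar_\vc$ --- it says nothing about an individual $\re_i(\ue)$ for an arbitrary $i$, which is what \Cref{thm:res_lower_bound} demands. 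And if you instead assert the inequality specifically for $\ue$, you are assuming the content of the theorem (that $\ue$ is far from \emph{every} facet), so the argument becomes circular. Your own closing admission --- that the $(L^2+1)$ and $\eta^2$ factors must come from ``a slightly more careful argument'' you do not specify --- is exactly where the proof is missing.

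The missing idea is to perform the reduction \emph{per constraint}, making the linear cost equal to $\va_i$ itself rather than $\vc$. The paper absorbs the discrepancy into the barrier: define $\phitildeK(\vx)=\eta^{-1}(\vc-\va_i)^{\top}\vx+\phiK(\vx)$, which by \Cref{lem:linear_plus_barrier_sc} is again a self-concordant barrier on $\calK$ with parameter $\widetilde{\nu}\leq 20\bigl(\nu+R^{2}\eta^{-2}(\|\vc\|^{2}+\|\va_i\|^{2})\bigr)\leq 20\bigl(\nu+R^{2}\eta^{-2}(L^{2}+1)\bigr)$; then $\ue=\arg\min_{\u}\va_i^{\top}\u+\eta\phitildeK(\u)$, and \Cref{lem:zong_opt_gap} applied with cost $\va_i$ (note $\|\va_i\|=1$) lower-bounds $\va_i^{\top}(\ue-\us)=\re_i(\ue)-\re_i(\us)\leq\re_i(\ue)$, since $\re_i(\us)\geq 0$. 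This yields the stated bound directly, with the $R^{2}\eta^{-2}(L^{2}+1)$ term coming precisely from the inflated self-concordance parameter $\widetilde{\nu}$ --- no geometric gap-to-residual conversion is needed. Without this per-constraint recentering step (or an equivalent device), your outline does not produce the theorem.
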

\begin{proof}[Proof of \Cref{thm:res_lower_bound}]
Applying the first-order optimality condition of $\ue$ in \Cref{eq:def-ueta} gives us that 
\begin{equation}
\eta\nabla\phi_{\mathcal{K}}(\ue)+\nabla q(\ue)=0.\label{eq:a}
\end{equation}
From here on, we fix $\vc=\nabla q(\ue),$where $\ue$ is as in
\Cref{eq:def-ueta}. Then, 
 we may conclude 
\begin{equation}
\ue\in\arg\min_{\u}\vc^{\top}\u+\eta\phi_{\mathcal{K}}(\u),\label{eq:4}
\end{equation}
where we have
replaced the cost $q$ in   \Cref{eq:def-ueta} with a specific linear cost $\vc$; to see \Cref{eq:4},
observe that $\ue$ satisfies the first-order optimality condition
of  \Cref{eq:4} because of  \Cref{eq:a} and our choice of $\vc$. 

We now define the function $\phitilde_{\mathcal{K}}(x)= \eta^{-1}\cdot(c - a_i)^\top x + \phi_{\mathcal{K}}(x)$. By \Cref{lem:linear_plus_barrier_sc}, we have that $\phitilde_{\mathcal{K}}$ is a self-concordant-barrier on ${\mathcal{K}}$ with self-concordance parameter 
\[ \widetilde{\nu}\leq 20 (\nu+ R^2\eta^{-2}(\|c\|^2+\|a_i\|^2).\numberthis\label{eq:new-scb-parameter} \] With this new self-concordant barrier in hand, we may now express $\vueta$ from \Cref{eq:4} as the following optimizer: \[ \vueta = \arg\min_u a_i^\top u + \eta \phitilde_{\mathcal{K}}(u).\numberthis\label{eq:new-expression-for-ueta}\] Further,
let $\us\in\arg\min_{\u\in {\mathcal{K}}}a_i^\top\u$. 
By applying 
\Cref{lem:zong_opt_gap} to $\vueta$ expressed as in  \Cref{eq:new-expression-for-ueta}, we have 
\begin{equation}
\min\left\{ \frac{\eta}{2},\frac{r\|a_i\|}{2\widetilde{\nu}+4\sqrt{\widetilde{\nu}}}\right\} \leq a_i^{\top}(\ue-\us).\label[ineq]{eq:dist_bound_zeroth}
\end{equation} The lower bound in \Cref{eq:dist_bound_zeroth} may be expanded upon via \Cref{eq:new-scb-parameter}, and chaining this with the observation  $a_i^\top (u_\eta- \us) = \textrm{res}_i (\vueta) - \textrm{res}_i(\us)$ gives: \[ \min\left\{ \frac{\eta}{2}, \frac{r\|a_i\|}{150 (\nu + R^2 \eta^{-2} (\|c\|^2 + \|a_i\|^2) )} \right\} \leq \textrm{res}_i(\vueta)-\textrm{res}_i(\us).\] The definition of $\us$ 
implies $\textrm{res}_i(\us) \geq 0$, hence  $\textrm{res}_i(u_\eta)\geq  \min\left\{ \frac{\eta}{2}, \frac{r}{150 (\nu + R^2 \eta^{-2} (L^2 + 1) )} \right\}.$ Repeating this computation for each constraint of $\mathcal{K}$ gives the claimed bound overall. 
\end{proof}

\subsection{An Improved Lower Bound on the Residual}
In this section, we strengthen the bound from \cref{thm:res_lower_bound} via a more careful analysis. 
\subsubsection{Warmup: The One-Dimensional Case}
We begin with a lemma on optimizing quadratics in one dimension to motivate our later results for arbitrary polytopes in higher dimensions.

\begin{lemma}\label{lem:quad_gap} Let $\phi$ be a $\nu$-self-concordant barrier over $(0, r)$ and $q$ be a convex function such that $\nabla q(v) = 0$ and $0 < m \leq \nabla^2 q(x) \leq M$. Define,
\begin{align*}
    x^\eta := \arg\min_{x} q(x) + \eta \phi(x).
\end{align*}
Then,
\begin{align}
\min\left\{\frac{1}{2}\left(\sqrt{\frac{2\eta}{M} + v^2} + v\right), \frac{mr}{M(2\nu + 4 \sqrt{\nu})}\right\} \leq x^\eta \leq \frac{1}{2}\left(\sqrt{\frac{4\eta\nu}{m} + v^2} + v\right).\label{eq:x_eta_bound}
\end{align}
\end{lemma}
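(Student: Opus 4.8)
The plan is to combine the first-order optimality condition for $x^\eta$ with the two self-concordance tools already in hand: \Cref{thm:inner_prod_ub_nu} (the bound $\nabla\phi(x)^\top(y-x)\le\nu$) for the upper bound, and \Cref{lem:zong_opt_gap} (the gap bound of \citet{zong2023short}) for the lower bound, after linearizing the quadratic at the optimum exactly as in the proof of \Cref{thm:res_lower_bound}. First I would record that $x^\eta$ exists and is unique: $q+\eta\phi$ is strictly convex (since $q$ is $m$-strongly convex) and coercive on $(0,r)$ (since $\phi$ blows up at both endpoints), so stationarity reads $\nabla q(x^\eta)=-\eta\,\nabla\phi(x^\eta)$. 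Write $c:=\nabla q(x^\eta)$. Because $\nabla q$ is strictly increasing with $\nabla q(v)=0$ and $\nabla\phi$ is strictly increasing with $\nabla\phi(x_\phi)=0$ at the analytic center $x_\phi:=\argmin_x\phi(x)$, the sign of $c$ pins down the location of $x^\eta$: $c\ge0\iff v\le x^\eta\le x_\phi$ and $c\le0\iff x_\phi\le x^\eta\le v$ (equivalently, \Cref{claim:one_dim_grad_monotone}, after ordering $q$ and $\eta\phi$ so their minimizers are increasing, localizes $x^\eta$ between $v$ and $x_\phi$).

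For the \textbf{upper bound}, if $x^\eta\le v$ — which forces $v\ge0$ since $x^\eta>0$ — then $x^\eta\le v\le\tfrac12\big(\sqrt{4\eta\nu/m+v^2}+v\big)$ and we are done. If instead $x^\eta>v$, then $m$-strong convexity gives $c=\nabla q(x^\eta)\ge m(x^\eta-v)>0$, while \Cref{thm:inner_prod_ub_nu} with barrier point $x^\eta$ and boundary point $0\in\overline{(0,r)}$ gives $\nabla\phi(x^\eta)\ge-\nu/x^\eta$; substituting stationarity yields $m(x^\eta-v)\le c\le\eta\nu/x^\eta$, i.e. $x^\eta(x^\eta-v)\le\eta\nu/m$. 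Since $x^\eta>v$ lies on the increasing branch of $t\mapsto t(t-v)$, inverting this quadratic inequality gives $x^\eta\le\tfrac12\big(\sqrt{4\eta\nu/m+v^2}+v\big)$.

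For the \textbf{lower bound} I would split on the sign of $c$. If $c\ge0$, then stationarity shows $x^\eta=\argmin_{(0,r)}\big[c\,x+\eta\phi(x)\big]$, and $\argmin_{[0,r]}cx=0$, so \Cref{lem:zong_opt_gap} (with linear cost $c$, barrier $\phi$ of parameter $\nu$, and inscribed-ball radius $r'$ of $(0,r)$, a fixed multiple of $r$) gives $c\,x^\eta=\gap(x^\eta)\ge\min\big\{\tfrac{\eta}{2},\ \tfrac{r'c}{2\nu+4\sqrt\nu}\big\}$. In the first branch, $M$-smoothness gives $c\le M(x^\eta-v)$, hence $x^\eta(x^\eta-v)\ge\tfrac{\eta}{2M}$ and therefore $x^\eta\ge\tfrac12\big(\sqrt{2\eta/M+v^2}+v\big)$; in the second branch, dividing by $c>0$ gives $x^\eta\ge\tfrac{r'}{2\nu+4\sqrt\nu}\ge\tfrac{mr}{M(2\nu+4\sqrt\nu)}$, absorbing $m\le M$ and the constant from $r'$. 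If $c<0$, then $\nabla\phi(x^\eta)=-c/\eta>0$, so $x^\eta>x_\phi$; invoking the standard inner/outer Dikin-ellipsoid inclusions for a $\nu$-self-concordant barrier on a bounded domain \cite{nesterov1994interior}, the analytic center $x_\phi$ is at distance at least $\tfrac{r}{2\nu+4\sqrt\nu}$ from the boundary $0$, so again $x^\eta>x_\phi\ge\tfrac{mr}{M(2\nu+4\sqrt\nu)}$. The degenerate case $c=0$ forces $x^\eta=v=x_\phi$ and is covered by the same analytic-center estimate. Taking the minimum of the two branches gives the stated lower bound.

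I expect the main obstacle to be the lower bound when the unconstrained minimizer $v$ of $q$ lies outside $(0,r)$: then the quadratic cost pushes $x^\eta$ toward the boundary $0$, and the only mechanism preventing $x^\eta$ from approaching $0$ is the repulsion of the barrier. Making this quantitative is what forces the case split and, in the $c<0$ case, the appeal to the (standard but non-trivial) fact that the analytic center of a bounded domain is interior to it by an amount controlled by $\nu$ and the inscribed radius. I also anticipate some bookkeeping to match the numeric constants — in particular the $m/M$ factor in the second term of the minimum, which I obtain for free from $m\le M$ once the \Cref{lem:zong_opt_gap}/analytic-center step is in place.
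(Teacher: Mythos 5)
Your proposal is correct and takes essentially the same route as the paper's proof: fix $c=\nabla q(x^\eta)$, localize $x^\eta$ between $v$ and the analytic center via gradient monotonicity (\Cref{claim:one_dim_grad_monotone}), convert to $m(x^\eta-v)\le c\le M(x^\eta-v)$, and apply the \citet{zong2023short} bound, with the upper bound coming from $c\,x^\eta\le\eta\nu$ exactly as in \Cref{thm:inner_prod_ub_nu}. The only cosmetic differences are that the paper obtains your analytic-center distance estimate by taking $c=1$ and $\eta\to\infty$ in \Cref{lem:zong_opt_gap} rather than from the Dikin-ellipsoid containments, and that the paper, like you, glosses over the inscribed-radius constant for the interval $(0,r)$.
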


\begin{proof} Let $c := \nabla q(x^{\eta})$. Via the same trick as for \cref{{eq:4}},  we can express $\vx^\eta$ equivalently as $x^\eta = \arg\min_{x} c x + \eta \phi(x)$. Let $x^\star := \arg\min_{x \in (0,r)} cx$ and $\tilde{x} := \arg\min_{x} \phi(x)$.
\\ \\
\textbf{Case 1}: $v < \tilde{x}$. Then by \cref{claim:one_dim_grad_monotone} applied to the functions $q$ and $\phi$, we may deduce that $v < x^\eta < \tilde{x}$, meaning $c > 0$ and therefore $x^\star = 0$.
\\\\
Applying \Cref{lem:zong_opt_gap}, we have:
\begin{align*}
    \min \left\{ \frac{\eta}{2}, \frac{r c}{2 \nu + 4 \sqrt{\nu}} \right\} \leq cx^\eta \leq \eta\nu.
\end{align*}
Using that $m(x^\eta - v) \leq c \leq M(x^\eta - v)$ we have,
\begin{align*}
    \min \left\{ \frac{\eta}{2 M}, \frac{r(x^\eta - v)}{2\nu + 4 \sqrt{\nu}} \right\} &\leq (x^\eta - v)x^\eta \leq \frac{\eta\nu}{m}.
\end{align*}
Solving $\frac{\eta}{2M} \leq (x^\eta - v)x^\eta \leq \frac{\nu \eta}{m}$ with the condition that $x^\eta > v$, we have, 
\begin{align*}
\frac{1}{2}\left(\sqrt{\frac{2\eta}{M} + v^2} + v\right) \leq x^\eta \leq \frac{1}{2}\left(\sqrt{\frac{4\eta\nu}{m} + v^2} + v\right).
\end{align*}
Combining with the minimum on the LHS bound, we arrive at
\begin{align*}
\min\left\{\frac{1}{2}\left(\sqrt{\frac{2\eta}{M} + v^2} + v\right), \frac{r}{2\nu + 4 \sqrt{\nu}}\right\} \leq x^\eta \leq \frac{1}{2}\left(\sqrt{\frac{4\eta\nu}{m} + v^2} + v\right).
\end{align*}
\\
\textbf{Case 2}: $v \geq \tilde{x}$. By the same reasoning as in the previous case, we have $\tilde{x} \leq x^\eta \leq v$. Note that by applying \Cref{lem:zong_opt_gap} with $c = 1$ and considering $\eta \to \infty$, we can deduce that $\tilde{x} \geq \frac{r}{2\nu + 4\sqrt{\nu}}$. We can see that \Cref{eq:x_eta_bound} still holds as,
\begin{align*}
    \min\left\{\frac{1}{2}\left(\sqrt{\frac{\eta}{M} + v^2} + v\right), \frac{r}{(2\nu + 4 \sqrt{\nu})}\right\} \leq \frac{r}{2\nu + 4\sqrt{\nu}} \leq \tilde{x} \leq x^\eta \leq v \leq \frac{1}{2}\left(\sqrt{\frac{4\eta\nu}{m} + v^2} + v\right).
\end{align*}
\end{proof}

The above result shows that if the minimizer of a strongly convex cost lies outside of the constraint set, we should expect to get a 
bound of the form $O(\sqrt{\eta + v^2} - v)$, where $v$ is the distance to the constraint set. 
\subsubsection{Upper Bounds on Approximation Error}
\begin{lemma}\label{lem:iso_quad_universal_upper_bound}Let $\mathcal{K} \subset \R^n$ be a polytope and $\phi$ be a $\nu$-self-concordant barrier on $\mathcal{K}$. Let $\vx^\eta := \argmin_{\vx} \frac{\alpha}{2}\|\vx - \vv\|^2 + \eta \phi(\vx)$ and  $\vx^\star := \argmin_{\vx \in \mathcal{K}} \frac{\alpha}{2}\|\vx - \vv\|^2$ for some $\v \in \R^n$. Then,
\begin{align*}
   \left\|\vx^\eta - \vx^\star\right\| \leq \sqrt{\frac{\eta\nu}{\alpha}}.
\end{align*}
\end{lemma}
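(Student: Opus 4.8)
The plan is to adapt the argument behind \Cref{thm:error_bound_barrier_mpc} to the special quadratic cost $q(\vx) := \tfrac{\alpha}{2}\|\vx-\vv\|^2$, but to squeeze out the factor of $2$ that a naive strong-convexity argument incurs. The extra ingredient available here is that $\vx^\star$ is the Euclidean projection of $\vv$ onto the polytope $\mathcal{K}$ (equivalently, the constrained minimizer of $q$ over $\mathcal{K}$), which supplies a \emph{matching} lower bound on the cost gap $q(\vx^\eta)-q(\vx^\star)$ that pairs with the self-concordance-based upper bound.

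First I would record that $q$ is $\alpha$-strongly convex with $\nabla^2 q\equiv\alpha I$, so its second-order Taylor expansion is exact, and that since $\phi$ is a barrier on $\mathcal{K}$, the minimizer $\vx^\eta$ of $q+\eta\phi$ lies in $\mathrm{int}\,\mathcal{K}$ and obeys the first-order condition $\nabla q(\vx^\eta)=-\eta\nabla\phi(\vx^\eta)$. Expanding $q(\vx^\star)$ about $\vx^\eta$, substituting this identity, and applying \Cref{thm:inner_prod_ub_nu} at the pair $(\vx^\eta,\vx^\star)$ — legitimate since $\vx^\eta\in\mathrm{int}\,\mathcal{K}=\mathrm{dom}(\phi)$ and $\vx^\star\in\mathcal{K}=\mathrm{cl(dom)}(\phi)$ — yields
\begin{align*}
q(\vx^\eta)-q(\vx^\star)+\tfrac{\alpha}{2}\|\vx^\eta-\vx^\star\|^2=-\eta\langle\nabla\phi(\vx^\eta),\vx^\star-\vx^\eta\rangle\le\eta\nu .
\end{align*}
Separately, since $\vx^\star=\argmin_{\vx\in\mathcal{K}}q(\vx)$ with $\mathcal{K}$ convex and $\vx^\eta\in\mathcal{K}$, the first-order optimality condition for constrained minimization gives $\langle\nabla q(\vx^\star),\vx^\eta-\vx^\star\rangle\ge0$, so the exact quadratic expansion of $q(\vx^\eta)$ about $\vx^\star$ gives $q(\vx^\eta)-q(\vx^\star)\ge\tfrac{\alpha}{2}\|\vx^\eta-\vx^\star\|^2$.

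Combining the two — using the second inequality to lower-bound the cost-gap term on the left-hand side of the first — gives $\alpha\|\vx^\eta-\vx^\star\|^2\le\eta\nu$, i.e.\ $\|\vx^\eta-\vx^\star\|\le\sqrt{\eta\nu/\alpha}$, as claimed. There is no serious obstacle in this argument; the step requiring the most care is the bookkeeping that produces the constant $1$ rather than $2$ — that is, remembering to pair the self-concordance bound with the projection optimality of $\vx^\star$, rather than invoking strong convexity only once — together with checking the domain hypotheses ($\vx^\eta$ interior, $\vx^\star$ in the closure) needed to apply \Cref{thm:inner_prod_ub_nu}.
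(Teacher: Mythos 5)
Your argument is correct and is essentially the paper's own proof: the paper combines the same three ingredients (the first-order optimality condition $\nabla q(\vx^\eta)=-\eta\nabla\phi(\vx^\eta)$, \Cref{thm:inner_prod_ub_nu} applied at $(\vx^\eta,\vx^\star)$, and the variational inequality $\nabla q(\vx^\star)^\top(\vx^\eta-\vx^\star)\geq 0$), merely phrasing the strong-convexity step through gradient monotonicity rather than through your two exact Taylor expansions, and it arrives at the same bound $\alpha\|\vx^\eta-\vx^\star\|^2\leq\eta\nu$. The only blemish is a sign slip in your displayed identity: substituting $\nabla q(\vx^\eta)=-\eta\nabla\phi(\vx^\eta)$ makes the middle term $+\eta\langle\nabla\phi(\vx^\eta),\vx^\star-\vx^\eta\rangle$ rather than $-\eta\langle\nabla\phi(\vx^\eta),\vx^\star-\vx^\eta\rangle$, and it is this positive-signed quantity that \Cref{thm:inner_prod_ub_nu} bounds by $\eta\nu$; with that fix the chain closes exactly as you describe.
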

\begin{proof}
We proceed similar to \Cref{thm:error_bound_barrier_mpc}. Note that by $\alpha$-strong-convexity of $q(\vx) := \frac{\alpha}{2}\|\vx - \vv\|^2$, we have that,
\begin{align*}
    [\nabla q(\vx^\eta) - \nabla q(\vx^\star)]^\top (\vx^\eta - \vx^\star)  \geq \alpha\|\vx^\eta - \vx^\star\|^2.
\end{align*}
Note that from the optimality condition $\nabla q(\vx^\eta) + \eta \nabla \phi(\vx^\eta) = 0$, and, by convexity of $\mathcal{K}$, $\nabla q(\vx^\star)^\top[\vx^\eta - \vx^\star] \geq 0$, and by \Cref{thm:inner_prod_ub_nu}, it follows,
\begin{align*}
    \alpha \|\vx^\eta - \vx^\star\|^2 \leq \eta\nabla \phi(\vx^\eta)^\top [\vx^\star - \vx^\eta] - \nabla q(\vx^\star)^\top [\vx^\eta - \vx^\star] \leq \nabla\phi(\vx^\eta)^\top [\vx^\star - \vx^\eta] \leq \eta\nu.
\end{align*}
Rearranging the terms then gives the claim. 
\end{proof}
Note that the above result can be generalized to $\alpha$-strongly-convex functions. 
In the next lemma, we show that we can make a similar bound along the gradient of the cost function at $\vx^\star$.
\begin{lemma}\label{lem:iso_quad_upper_bound}Let $\mathcal{K} \subset \R^n$ and $\phi$ be a $\nu$-self-concordant barrier on $\mathcal{K}$. Let $\vx^\eta := \argmin_{\vx} \frac{\alpha}{2}\|\vx - \vv\|^2 + \eta \phi(\vx)$ and $\vx^\star := \argmin_{\vx \in \mathcal{K}} \frac{\alpha}{2}\|\vx - \vv\|^2$ for some $\vv \in \R^{n}$. Assume that $\vx^\star \neq \vv$, and let $\va = \frac{\vx^\star - \vv}{\|\vx^\star - \vv\|}$. Then,
\begin{align*}
    0 \leq \va^\top (\vx^\eta - \vx^\star) \leq \frac{1}{2}\left(\sqrt{\frac{4\eta\nu}{\alpha} + \|\vx^\star - \vv\|^2} - \|\vx^\star -\vv\|\right).
\end{align*}
(Note that for the case where $\vx^\star = \vv$, \Cref{lem:iso_quad_universal_upper_bound} can be chosen for any $\va$)
\end{lemma}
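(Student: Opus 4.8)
The plan is to combine two first-order optimality conditions---one for the constrained minimizer $\vx^\star$ and one for the barrier-regularized minimizer $\vx^\eta$---with the self-concordance inequality of \Cref{thm:inner_prod_ub_nu}, thereby reducing the whole statement to a scalar quadratic inequality in $t := \va^\top(\vx^\eta - \vx^\star)$. Throughout, write $q(\vx) := \tfrac{\alpha}{2}\|\vx - \vv\|^2$, so that $\nabla q(\vx) = \alpha(\vx - \vv)$, and set $D := \|\vx^\star - \vv\| > 0$, so that $\nabla q(\vx^\star) = \alpha(\vx^\star - \vv) = \alpha D\,\va$.

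First I would establish the lower bound $t \ge 0$. Since $\vx^\star$ minimizes the convex function $q$ over the convex set $\mathcal{K}$, and $\vx^\eta \in \mathrm{int}(\mathcal{K}) \subseteq \mathcal{K}$ because $\phi$ is a barrier, the variational inequality $\nabla q(\vx^\star)^\top(\vx^\eta - \vx^\star) \ge 0$ holds; dividing by $\alpha D > 0$ gives $\va^\top(\vx^\eta - \vx^\star) \ge 0$.

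For the upper bound, I would use the unconstrained optimality condition of $\vx^\eta$, namely $\alpha(\vx^\eta - \vv) = -\eta\nabla\phi(\vx^\eta)$, together with \Cref{thm:inner_prod_ub_nu} applied at $\vx^\eta$ with the feasible point $\vx^\star$, to obtain $\alpha(\vx^\eta - \vv)^\top(\vx^\eta - \vx^\star) = \eta\,\nabla\phi(\vx^\eta)^\top(\vx^\star - \vx^\eta) \le \eta\nu$. Then I would decompose $\vx^\eta - \vv = (\vx^\eta - \vx^\star) + (\vx^\star - \vv)$ and use $(\vx^\star - \vv)^\top(\vx^\eta - \vx^\star) = D\,t$ along with $\|\vx^\eta - \vx^\star\|^2 \ge (\va^\top(\vx^\eta - \vx^\star))^2 = t^2$ (projection onto the unit vector $\va$) to get $t^2 + D t \le (\vx^\eta - \vv)^\top(\vx^\eta - \vx^\star) \le \eta\nu/\alpha$.

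Finally I would solve this scalar inequality: since $t \ge 0$, the map $t \mapsto t^2 + Dt$ is strictly increasing, and a direct computation shows that at $t_0 := \tfrac12\big(\sqrt{4\eta\nu/\alpha + D^2} - D\big)$ one has $t_0^2 + D t_0 = \eta\nu/\alpha$; hence $t \le t_0$, which is precisely the claimed bound (the hypothesis $\vx^\star \ne \vv$ ensures $D>0$ so $\va$ is well-defined, and the parenthetical remark about $\vx^\star = \vv$ is handled by \Cref{lem:iso_quad_universal_upper_bound}). I do not expect a genuine obstacle here; the only points needing mild care are checking $\vx^\eta \in \mathcal{K}$ so that both the variational inequality and \Cref{thm:inner_prod_ub_nu} apply, performing the lower-bound step before the upper-bound step (so that $Dt \ge 0$ can be used), and verifying the final algebraic identity for $t_0$.
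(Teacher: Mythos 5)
Your proof is correct and follows essentially the same route as the paper's: both use the first-order optimality condition of $\vx^\eta$ together with \Cref{thm:inner_prod_ub_nu} to obtain $\alpha(\vx^\eta-\vv)^\top(\vx^\eta-\vx^\star)\leq \eta\nu$, discard the component of $\vx^\eta-\vx^\star$ orthogonal to $\va$ (the paper via an explicit $\va,\vb$ decomposition, you via the projection inequality $\|\vx^\eta-\vx^\star\|^2\geq t^2$, which is the same step), and solve the resulting scalar quadratic, with the lower bound coming from the identical variational inequality at $\vx^\star$. No gaps to report.
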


\begin{proof}Let $q(\vx) := \frac{\alpha}{2}\|\vx-\vv\|^2$. Note that we can write:
\begin{align*}
    \nabla q(\vx^\eta) = \alpha  \cdot \va^\top (\vx^\eta - \vx^\star) \va + \alpha \cdot \vb^\top (\vx^\eta - \vx^\star)\vb + \nabla q(\vx^\star),
\end{align*}
where $\|\va\| = \|\vb\| = 1$ and $\vb \perp \va$. Using \cref{{eq:a}} (valid because in this lemma we define $\vx^\eta:= \arg\min_x q(\vx) + \eta\phi(\vx)$ for a convex $q$) with  \cref{{thm:inner_prod_ub_nu}}  yields
\begin{align*}
    \nabla q(\vx^\eta)^\top(\vx^\eta - \vx^\star) \leq \eta\nu.
\end{align*}
Then it follows that
\begin{align*}
    \alpha \cdot [\va^\top(\vx^\eta - \vx^\star)]^2 + \alpha [\vb^\top (\vx^\eta - \vx^\star)]^2 + \alpha \|\vx^\star - \vv\| \cdot \va^\top (\vx^\eta - \vx^\star) \leq \eta\nu.
\end{align*}
We  drop  $\alpha \cdot [b^\top (\vx^\eta - \vx^\star)]^2$  and  solve for $\va^\top (\vx^\eta - \vx^\star)$ to prove our claimed upper bound on $ \va^\top (\vx^\eta - \vx^\star)$.  To derive the lower bound, 
 we use that $0 \leq \nabla q(\vx^\star)^\top [\vx^\eta - \vx^\star] = c \cdot \va^\top[\vx^\eta - \vx^\star]$ for some $c > 0$.
\end{proof}

\subsubsection{An Improved Lower Bound on the Residual}
\begin{lemma}\label{lem:contains_linear_ball} Fix a polytope $\mathcal{K}$, a convex function $q$, and a $\nu$-self-concordant barrier $\phi$ over $\mathcal{K}$. Assume that the polytope $\mathcal{K}$ contains a full-dimensional ball of radius $r$ and is contained within a ball of radius $R$ around some point $\bar{\vx}$, i.e. $\mathcal{B}(\bar{\vx},r) \subseteq \mathcal{K} \subseteq \mathcal{B}(\bar{\vx}, R)$. Let $\vx^\eta := \arg\min q(\vx) + \eta\phi(\vx)$ for arbitrary $\eta > 0$,
\begin{align*}
    \mathcal{B}\left(\vx^\eta, \frac{r}{R}\min\left\{\frac{\eta}{2 \|\nabla q(\vx^\eta)\|}, \frac{r}{2\nu + 4 \sqrt{\nu}}\right\}\right) \subseteq \mathcal{K}. \numberthis\label{eq:eta_ball_around_x_eta}
\end{align*}
\end{lemma}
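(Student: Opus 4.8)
The plan is to prove the equivalent statement $\mathrm{dist}(\vx^\eta,\partial\mathcal{K})\geq\rho$, where $\rho$ is the radius appearing in \Cref{eq:eta_ball_around_x_eta}. Normalizing the facet description $\mathcal{K}=\{\vy: a_j^\top \vy\geq b_j,\ j\in[m]\}$ so each $\|a_j\|=1$, this is the same as showing every residual satisfies $\re_j(\vx^\eta):=a_j^\top\vx^\eta-b_j\geq\rho$: if so, then any $\vy$ with $\|\vy-\vx^\eta\|\leq\rho$ has $\re_j(\vy)=\re_j(\vx^\eta)+a_j^\top(\vy-\vx^\eta)\geq\rho-\|\vy-\vx^\eta\|\geq0$, so $\vy\in\mathcal{K}$. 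To bring \Cref{lem:zong_opt_gap} into play I would first linearize the cost: the optimality condition for $\vx^\eta=\arg\min_\vx q(\vx)+\eta\phi(\vx)$ is $\vc+\eta\nabla\phi(\vx^\eta)=0$ with $\vc:=\nabla q(\vx^\eta)$, so, exactly as in the passage to \Cref{eq:4}, $\vx^\eta$ is also the minimizer of $\vc^\top\vx+\eta\phi(\vx)$. If $\vc=0$ then $\vx^\eta$ is the analytic center of $\mathcal{K}$ and the claim is the standard lower bound on its depth (apply \Cref{lem:zong_opt_gap} with a unit objective and let $\eta\to\infty$, as in Case~2 of \Cref{lem:quad_gap}); otherwise set $\hat\vc:=\vc/\|\vc\|$, so dividing the objective by $\|\vc\|$ gives $\vx^\eta=\arg\min_\vx\hat\vc^\top\vx+\tfrac{\eta}{\|\vc\|}\phi(\vx)$. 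With $\vx^\star:=\arg\min_{\vy\in\mathcal{K}}\hat\vc^\top\vy$, \Cref{lem:zong_opt_gap} then yields
\[
\hat\vc^\top(\vx^\eta-\vx^\star)\ \geq\ g'\ :=\ \min\!\Big\{\tfrac{\eta}{2\|\vc\|},\ \tfrac{r}{2\nu+4\sqrt{\nu}}\Big\},
\]
i.e.\ $\vx^\eta$ lies at Euclidean distance at least $g'$ from the supporting hyperplane $H:=\{\vy:\hat\vc^\top\vy=\hat\vc^\top\vx^\star\}$ of $\mathcal{K}$ (recall $\mathcal{K}\subseteq\{\vy:\hat\vc^\top\vy\geq\hat\vc^\top\vx^\star\}$).

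The core of the argument is to upgrade this ``$\vx^\eta$ is $g'$-deep in the $\hat\vc$ direction'' into ``$\vx^\eta$ is $\rho$-deep with respect to \emph{every} facet'', the loss being the factor $r/R$. My plan here is a homothety argument anchored at the inscribed ball: combining $\mathcal{B}(\bar\vx,r)\subseteq\mathcal{K}$ with the boundary point $\vx^\star$ (and, symmetrically, with the $\hat\vc$-maximizer of $\mathcal{K}$) by convexity shows that the section of $\mathcal{K}$ by the hyperplane through $\vx^\eta$ orthogonal to $\hat\vc$ contains a Euclidean ball of radius at least of order $\tfrac{r}{R}\,\hat\vc^\top(\vx^\eta-\vx^\star)\geq\tfrac{r}{R}g'$, the factor $r/R$ reflecting that $\mathcal{K}$ is wedged between balls of radii $r$ and $R$. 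One then uses that $\vx^\eta$, being the minimizer of $\hat\vc^\top\vx+\tfrac{\eta}{\|\vc\|}\phi(\vx)$, minimizes $\phi$ restricted to this section (the linear term is constant there); hence $\vx^\eta$ is the analytic center of the section and stays uniformly away from its relative boundary, while the two ``cap'' directions $\pm\hat\vc$ are treated separately, the $-\hat\vc$ one directly from the distance-$\geq g'$ bound to $H$. Assembling these pieces yields $\re_j(\vx^\eta)\gtrsim\tfrac{r}{R}g'=\rho$ for every $j$, and \Cref{lem:f_sc_one_by_nine_r_squared} may additionally be invoked to control the curvature of $\phi$ in terms of $R$ where needed.

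The step I expect to be the genuine obstacle is exactly this geometric transfer. A point of $\mathcal{K}$ that is merely deep relative to the single hyperplane $H$ can still be arbitrarily close to some other facet, so the argument must truly exploit that $\vx^\eta$ is a central-path point — equivalently that $\hat\vc\propto-\nabla\phi(\vx^\eta)$ — and carrying the constants (the interplay of $r$, $R$, $\nu$ and $\|\vc\|$) cleanly through the cone/section estimate is the delicate part. By contrast, the reduction to residuals, the linearization trick, and the single application of \Cref{lem:zong_opt_gap} are routine.
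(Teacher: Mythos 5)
Your reduction to residuals, the linearization $\vc:=\nabla q(\vx^\eta)$, and the application of \Cref{lem:zong_opt_gap} in the full space are all fine, but the step you yourself flag as the obstacle --- the ``geometric transfer'' --- is genuinely missing, and the sketch you give does not close it. Two concrete problems. First, what Zong's bound gives you is distance at least $g'$ from the \emph{single supporting hyperplane} $H=\{\vy:\hat\vc^\top\vy=\hat\vc^\top\vx^\star\}$; this is \emph{not} a lower bound on how far $\vx^\eta$ is from the boundary of $\mathcal{K}$ in the $-\hat\vc$ direction (the boundary along that ray is generally much closer than $H$, which touches $\mathcal{K}$ only at $\vx^\star$), and a facet whose normal $a_j$ is nearly parallel to $\hat\vc$ is controlled by neither your ``cap'' bound nor your slice bound: the slice through $\vx^\eta$ orthogonal to $\hat\vc$ only sees the component of $a_j$ orthogonal to $\hat\vc$, so such a facet can pass arbitrarily close to $\vx^\eta$ while $\vx^\eta$ remains deep both within the slice and relative to $H$. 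Patching this by interpolating between the two bounds produces a residual lower bound that is quadratic in $g'$ and carries extra factors of $R$, not the linear $\tfrac{r}{R}g'$ claimed. Second, even where the slice argument applies, your plan re-invokes the analytic-center depth bound \emph{inside} the slice (after establishing, correctly, that $\nabla\phi(\vx^\eta)\parallel\hat\vc$ makes $\vx^\eta$ the analytic center of the section), which costs an additional factor $\tfrac{1}{2\nu+4\sqrt\nu}$ on top of the section inradius $\approx\tfrac{r}{R}g'$; the resulting radius is therefore strictly weaker than the one in \Cref{eq:eta_ball_around_x_eta}, which is the bound actually needed downstream in \Cref{thm:quad_opt_result} and \Cref{thm:hess_ueta_bounded}. (Your section-inradius estimate itself also needs care when the slice level lies above the inscribed-ball center, where you must separately lower bound $\max_{\vx\in\mathcal{K}}\hat\vc^\top\vx-\hat\vc^\top\vx^\eta$.)

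For contrast, the paper never slices: it restricts everything to the one-dimensional chord $\mathcal{S}\cap\mathcal{K}$ through $\bar{\vx}$ and $\vx^\eta$, where the restricted barrier is still $\nu$-self-concordant and the segment inherits inradius $r$ because it passes through the center of the inscribed ball. Applying \Cref{lem:zong_opt_gap} twice \emph{on the segment} --- once with the scalar directional derivative $c=\nabla q(\vx^\eta)^\top\tfrac{\bar{\vx}-\vx_1}{\|\bar{\vx}-\vx_1\|}\ge 0$ (identifying $t^\eta=\|\vx^\eta-\vx_1\|$ via the projected optimality condition), and once with $\eta\to\infty$ together with a monotonicity argument placing the one-dimensional analytic center between $\vx^\eta$ and $\vx_2$ --- yields lower bounds on the distances from $\vx^\eta$ to \emph{both actual endpoints} of the chord, i.e.\ to the true boundary, not to a supporting hyperplane. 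The similar-triangles step with the inscribed ball at $\bar{\vx}$ then converts chord depth into a full-dimensional ball at the cost of exactly one factor $\tfrac{r}{R}$. That anchoring of the line at $\bar{\vx}$ is precisely what your slice construction lacks, and it is why the paper's argument controls every facet (including those with normals nearly parallel to $\hat\vc$) without the extra $\nu$-dependent loss.
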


\begin{proof} 
Consider the line passing through $\bar{\vx}$ and $\vx^\eta$ given by $\mathcal{S} = \{\bar{\vx}t + \vx^\eta(1-t) : t\}$ and let $\vx_1$ and  $\vx_2$ be the endpoints of $\mathcal{K} \cap \mathcal{S}$.  Equipped with these definitions, we will show
\begin{align*}\min(\|\vx^\eta - \vx_1\|,\|\vx^\eta - \vx_2\|) \geq \min \left\{\frac{\eta}{2\|\nabla q(\vx^\eta)\|}, \frac{r}{2\nu + 4 \sqrt{\nu}}\right\}.\numberthis\label[ineq]{ineq:dist_to_xeta_xo_xt}
\end{align*} Before proving \cref{{ineq:dist_to_xeta_xo_xt}}, we first show why it immediately gives the claimed result of \cref{eq:eta_ball_around_x_eta}. Pick $\hat{\vx} \in \{\vx_1, \vx_2\}$ such that $\vx^\eta$ lies on the line segment between $\hat{\vx}$ and $\bar{\vx}$. Consider any direction $\vc$ such that  $\|\vc\| = 1$. Consider the triangle formed by the points $\hat{\vx}, \bar{\vx}$, and $\bar{\vx} + r \vc$, and draw a line segment from $\vx^\eta$, parallel to $c$ and intersecting the line segment from $\hat{\vx}$ to $\bar{\vx}$ at a point we label $y$. Then, by convexity, $y\in \mathcal{K}$. Then, we  prove \cref{eq:eta_ball_around_x_eta} by showing that $\|y-\vx^\eta\|\geq \frac{r}{R}\min\left\{\frac{\eta}{2 \|\nabla q(\vx^\eta)\|}, \frac{r}{2\nu + 4 \sqrt{\nu}}\right\}$. To see this inequality, we note that \[ \|y-\vx^\eta\|= \|\bar{\vx} -(\bar{\vx} + r c)\| \cdot \frac{\|\vx^\eta-\hat{\vx}\|}{\|\bar{\vx} - \hat{\vx}\|} = r\cdot \frac{\|\vx^\eta-\hat{\vx}\|}{\|\bar{\vx} - \hat{\vx}\|}  \geq r \cdot \frac{1}{R} \cdot  \min \left\{\frac{\eta}{2\|\nabla q(\vx^\eta)\|}, \frac{r}{2\nu + 4 \sqrt{\nu}}\right\},\] where the first equation is by similarity of the  triangles formed by $\hat{\vx}$, $\bar{\vx}$, and $\bar{\vx}+rc$ and by $\hat{\vx}$, $\vx^\eta$, and $y$; the final step is by  \cref{ineq:dist_to_xeta_xo_xt} and the assumed upper bound of $R$ on the polytope diameter. 
We now proceed to prove \cref{ineq:dist_to_xeta_xo_xt}.  
Without loss of generality, let $\vx_1$ be such that $\nabla q(\vx^\eta)^\top \left(\frac{\bar{\vx} - \vx_1}{\|\bar{\vx} - \vx_1\|}\right) \geq 0$. We use this characterization of $\vx_1$ in both parts of our proof below. 

\paragraph{Lower bound on $\|\vx^\eta-\vx_2\|$.} Denote the restriction of the barrier $\phi$ (defined on the polytope $\mathcal{K}$) to the line  $\mathcal{S}$ by a univariate function $\xi$, so that \[\xi(t):= \phi(t\vx_2 + (1-t)\vx^\eta), \text{ with } \xi^\prime(t)= \nabla\phi(t\vx_2+(1-t)\vx^\eta)^\top (\vx_2 - \vx^\eta).\numberthis\label{eq:def_univariate_restriction_phi}\]  We note that $\xi(0)= \phi(\vx^\eta)$ and  $\xi(1)=\phi(\vx_2)$. By definition of $\phi$ as a barrier on $\mathcal{K}$, note that $\xi$ is also a barrier defined only on  $\mathcal{K}\cap \mathcal{S}$~\cite{renegar2001mathematical}. 
Define the following  quantities associated with $\xi$: \[t_{\mathrm{ac}}:=\arg\min_t \xi(t), \text{ and } \vx_{\mathrm{ac}} := t_{\mathrm{ac}}\vx_2 + (1-t_{\mathrm{ac}}) \vx^\eta.\numberthis\label{def:x_ac}\] In other words, $\vx_{\mathrm{ac}}$ is the analytic center of the barrier $\xi$ on $\mathcal{S}$. We now apply \Cref{lem:zong_opt_gap} with $c = \vx_{\mathrm{ac}} - \vx_2$, $\mathcal{K}=\mathcal{S}$, the barrier $\xi$ on $\mathcal{S}$,   $\eta \to \infty$, and denoting $\nu_{\xi}$ to be the self-concordance parameter of $\xi$. Then, combining   the lower bound in \cref{lem:zong_opt_gap} with the definition of $\vx_{\mathrm{ac}}$ and $\nu_{\xi}\leq \nu$~\cite{renegar2001mathematical} yields \[\|\vx_{\mathrm{ac}} - \vx_2\| \geq \frac{r}{2\nu_{\xi} + 4\sqrt{\nu_{\xi}}} \geq \frac{r}{2\nu + 4\sqrt{\nu}}.\numberthis\label[ineq]{ineq:xac_xtwo_lb}\]  Next we have  by  the choice of $\vx_1$ that $\nabla q(\vx^\eta)^\top [\bar{\vx} - \vx_1] \geq 0$. By the first-order optimality condition of $\vx^\eta$ (as in \cref{{eq:a}}), this is equivalent to $ \nabla \phi(\vx^\eta)^\top[\bar{\vx} - \vx_1]\leq 0$. This final inequality implies \[\nabla \phi(\vx^\eta)^\top[\vx_2 - \vx^\eta] \leq 0\numberthis\label[ineq]{ineq:grad_xeta_inner_xtwo_minus_xeta}\] since $\vx_1$ and $\vx_2$ are the end points of $\mathcal{K}\cap \mathcal{S}$ (the line segment whose interior contains $\bar{\vx}$ and $\vx^\eta$), and hence $\vx_2-\vx^\eta$ is a vector in the same direction as $\bar{\vx}-\vx_1$. 
Note that we have
\[\xi^\prime(0)\leq 0 \text{ and } \xi^\prime (t_{\mathrm{ac}})=0,\numberthis\label[ineq]{eq:psi_prime_zero_negative}\] where the first inequality is by using \cref{eq:def_univariate_restriction_phi} to equivalently rewrite \cref{ineq:grad_xeta_inner_xtwo_minus_xeta}, and the equality is by construction of $t_{\mathrm{ac}}$ in \cref{def:x_ac} and by convexity of $\xi$. Since the univariate function $\xi$ is strictly convex, its derivatives are strictly monotone; hence, \cref{eq:psi_prime_zero_negative} implies that $t_{\mathrm{ac}}\geq 0$. Recalling  that by \cref{eq:def_univariate_restriction_phi}, $t=0$ corresponds to $\vx^\eta$ and $t=1$ corresponds to $\vx_2$, and that  $\vx_{\mathrm{ac}} \in \mathrm{int}(\mathcal{S} \cap \mathcal{K})$, we may deduce from   $t_{\mathrm{ac}}\geq0$ that 
\[\vx_{\mathrm{ac}} \in \{\vx^\eta t + (1 - t)\vx_2 \mid t \in [0,1]\}.\numberthis\label[ineq]{ineq:vxac_between_vxtwo_and_vxeta}\] Combining \cref{ineq:xac_xtwo_lb,ineq:vxac_between_vxtwo_and_vxeta}, we have  \[\|\vx^\eta - \vx_2\| \geq \|\vx_{\mathrm{ac}} - \vx_2\| \geq \frac{r}{2\nu + 4\sqrt{\nu}},\numberthis\label[ineq]{ineq:x_eta_lb_eta_first_part}\] which proves one part of \cref{ineq:dist_to_xeta_xo_xt}. 

\paragraph{Lower bound on $\|\vx^\eta-\vx_1\|$. }
We now parameterize $\mathcal{S}$ by $\psi(t) = \vx_1 + t\frac{\bar{\vx} - \vx_1}{\|\bar{\vx} - \vx_1\|}$.  Define  $c := \nabla q(\vx^\eta)^\top \left(\frac{\bar{\vx} - \vx_1}{\|\bar{\vx} - \vx_1\|}\right)$.   
 We then define the following two optimizers
\begin{align*}
t^\star := \arg\min_{t, \psi(t) \in \mathcal{K}} c \cdot t, \text{ and } t^\eta := \arg\min_{t} c\cdot t + \eta \phi(\psi(t)).
\end{align*}
It follows from  $\psi([0,2r]) \subseteq \mathcal{K}$ and $c\geq0$ (by our choice of $\vx_1$) that $t^\star=0$.   
We then apply \Cref{lem:zong_opt_gap} with the above $c$, $t^\eta$, $t^\star$, and barrier $\phi(\psi({}\cdot{}))$ (with its associated self-concordance parameter $\nu_{\phi\circ\psi}\leq \nu$~\cite{renegar2001mathematical}) and Cauchy-Schwarz inequality to conclude that,
\begin{align*}
    \min\left\{\frac{\eta}{2\|\nabla q(\vx^\eta)\|}, \frac{r}{2\nu + 4 \sqrt{\nu}}\right\} \leq \min\left\{\frac{\eta}{2c}, \frac{r}{2\nu + 4 \sqrt{\nu}}\right\}\leq t^\eta. \numberthis\label[ineq]{ineq:x_eta_lb_eta_second_part}
\end{align*} Finally, note that the optimality condition of $\vx^\eta$ implies $\nabla q(\vx^\eta) + \eta \nabla \phi(\vx^\eta) = 0$, and specifically that $[\nabla q(\vx^\eta) + \eta \nabla \phi(\vx^\eta) ]^\top \left(\frac{\bar{\vx} - \vx_1}{\|\bar{\vx} - \vx_1\|}\right) = 0$. Since $\vx^\eta \in \mathcal{S}$, we can write 
\begin{align*}
    0 &= [\nabla q(\vx^\eta) + \eta \nabla \phi(\vx^\eta) ]^\top \left(\frac{\bar{\vx} - \vx_1}{\|\bar{\vx} - \vx_1\|}\right) \\
    &= \nabla q(\vx^\eta)^\top\left(\frac{\bar{\vx} - \vx_1}{\|\bar{\vx} - \vx_1\|}\right) + \eta \frac{d(\phi \circ \psi)}{dt} (\psi^{-1}(\vx^\eta))) \\
    &= c + \eta \frac{d(\phi \circ \psi)}{dt} (\psi^{-1}(\vx^\eta)).
\end{align*}
We can observe that $c + \eta \frac{d}{dt} (\phi \circ \psi)(t)\vert_{t=t^\eta} = 0$ is the optimality condition of $t^\eta$. Since $\phi$ and, by extension, $\phi \circ \psi$ are strongly convex, we have that $\psi^{-1}(\vx^\eta) = t^\eta$. Since $\psi$ is parameterized in terms of distance from $\vx_1$, we have that,
\[t^\eta=\|\vx^\eta - \vx_1\|. \numberthis\label{eq:t_eta_x_eta_minus_x_one}\] Combining \cref{ineq:x_eta_lb_eta_first_part,ineq:x_eta_lb_eta_second_part,eq:t_eta_x_eta_minus_x_one} yields \cref{{ineq:dist_to_xeta_xo_xt}}, which, as argued earlier, concludes the proof of the lemma.  
\end{proof}

\noindent Similar to \Cref{lem:iso_quad_upper_bound}, we now adapt this to get a lower bound for isotropic quadratics.

\begin{lemma}\label{lem:iso_quad_ball} Let $\mathcal{K} \subset \R^n$ be a polytope and $\phi$ be a $\nu$-self-concordant barrier function. Assume there exists $\bar{\vx} \in \mathcal{K}$ such that $\mathcal{B}(\bar{\vx},r) \subseteq \mathcal{K} \subseteq \mathcal{B}(\bar{\vx}, R)$ for some $r, R > 0$. Let $\vx^\eta := \argmin_{\vx} \frac{\alpha}{2}\|\vx - \vv\|^2 + \eta \phi(\vx)$, $\vx^\star := \argmin_{\vx \in \mathcal{K}} \frac{\alpha}{2}\|\vx - \vv\|^2$ for some $\vv \in \R^n$. Then we know the following ball centered around $\vx^\eta$ is contained within $\mathcal{K}$. 
\begin{align*}
    \mathcal{B}\left(\vx^\eta, \frac{r}{R} \min\left\{\frac{1}{2\sqrt{\nu}}\left(\sqrt{\frac{\eta}{\alpha} + \|\vx^\star - \vv\|^2} - \|\vx^\star -\vv\|\right), \frac{r}{2\nu + 4 \sqrt{\nu}}\right\}\right) \subseteq \mathcal{K}.
\end{align*}
\end{lemma}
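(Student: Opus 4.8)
The plan is to combine the general ball-containment result of \Cref{lem:contains_linear_ball} with the directional upper bound of \Cref{lem:iso_quad_upper_bound}, applied to the specific cost $q(\vx) = \frac{\alpha}{2}\|\vx - \vv\|^2$. First I would invoke \Cref{lem:contains_linear_ball} directly: with this $q$, $\nu$-self-concordant barrier $\phi$, and the assumed sandwiching $\mathcal{B}(\bar{\vx},r)\subseteq\mathcal{K}\subseteq\mathcal{B}(\bar{\vx},R)$, we immediately obtain
\[
\mathcal{B}\!\left(\vx^\eta, \frac{r}{R}\min\left\{\frac{\eta}{2\|\nabla q(\vx^\eta)\|}, \frac{r}{2\nu + 4\sqrt{\nu}}\right\}\right) \subseteq \mathcal{K}.
\]
So the entire task reduces to showing that $\tfrac{\eta}{2\|\nabla q(\vx^\eta)\|}$ is lower bounded by $\tfrac{1}{2\sqrt{\nu}}\big(\sqrt{\tfrac{\eta}{\alpha} + \|\vx^\star-\vv\|^2} - \|\vx^\star-\vv\|\big)$, since the second term inside the min matches verbatim and the monotonicity of the min in its first argument then transfers the bound.

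The core step is therefore to upper bound $\|\nabla q(\vx^\eta)\|$. Since $\nabla q(\vx) = \alpha(\vx - \vv)$, we have $\|\nabla q(\vx^\eta)\| = \alpha\|\vx^\eta - \vv\|$. Writing $\vx^\eta - \vv = (\vx^\eta - \vx^\star) + (\vx^\star - \vv)$ and using $\va = (\vx^\star - \vv)/\|\vx^\star - \vv\|$, I would decompose $\vx^\eta - \vx^\star$ into its component $p := \va^\top(\vx^\eta - \vx^\star)$ along $\va$ and an orthogonal remainder. By \Cref{lem:iso_quad_upper_bound}, $0 \le p \le \frac{1}{2}\big(\sqrt{\tfrac{4\eta\nu}{\alpha} + \|\vx^\star-\vv\|^2} - \|\vx^\star-\vv\|\big)$, and by \Cref{lem:iso_quad_universal_upper_bound}, the full deviation satisfies $\|\vx^\eta - \vx^\star\|\le\sqrt{\eta\nu/\alpha}$. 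The cleanest route: since $\vx^\star - \vv$ points along $\va$ and is the projection of $\vv$'s offset, $\|\vx^\eta - \vv\|^2 = (\|\vx^\star-\vv\| + p)^2 + \|(\vx^\eta-\vx^\star)_\perp\|^2$, where $(\cdot)_\perp$ is the component orthogonal to $\va$. Bounding the orthogonal part by $\sqrt{\eta\nu/\alpha}$ and $p$ by the directional bound, then taking a square root and simplifying, gives $\|\vx^\eta - \vv\| \le \sqrt{\eta\nu/\alpha} + \|\vx^\star - \vv\|$ after routine algebra (the $\sqrt{4\eta\nu/\alpha + \|\vx^\star-\vv\|^2} \le \sqrt{4\eta\nu/\alpha} + \|\vx^\star-\vv\|$ inequality feeds in here). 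This yields $\|\nabla q(\vx^\eta)\| = \alpha\|\vx^\eta-\vv\| \le \alpha\big(\sqrt{\eta\nu/\alpha} + \|\vx^\star-\vv\|\big) = \sqrt{\alpha\eta\nu} + \alpha\|\vx^\star-\vv\|$.

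Finally I would verify the algebraic inequality $\frac{\eta}{2\|\nabla q(\vx^\eta)\|} \ge \frac{1}{2\sqrt{\nu}}\big(\sqrt{\tfrac{\eta}{\alpha} + \|\vx^\star-\vv\|^2} - \|\vx^\star-\vv\|\big)$. Rationalizing the right-hand side as $\frac{\eta/\alpha}{2\sqrt{\nu}(\sqrt{\eta/\alpha + \|\vx^\star-\vv\|^2} + \|\vx^\star-\vv\|)}$ and using $\sqrt{\eta/\alpha + \|\vx^\star-\vv\|^2} \le \sqrt{\eta/\alpha} + \|\vx^\star-\vv\|$, this is at most $\frac{\eta/\alpha}{2\sqrt{\nu}(\sqrt{\eta/\alpha} + 2\|\vx^\star-\vv\|)} = \frac{\eta}{2\sqrt{\nu}(\sqrt{\alpha\eta} + 2\alpha\|\vx^\star-\vv\|)}$, so it suffices to check $\sqrt{\nu}(\sqrt{\alpha\eta} + 2\alpha\|\vx^\star-\vv\|) \ge \|\nabla q(\vx^\eta)\| = \alpha\|\vx^\eta-\vv\|$; since $\nu \ge 1$ and $\|\vx^\eta-\vv\| \le \sqrt{\eta\nu/\alpha} + \|\vx^\star-\vv\|$ this holds comfortably. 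I expect the main obstacle to be getting the constants in this last chain of inequalities to line up exactly with the lemma statement — in particular making sure the $2\sqrt{\nu}$ in the denominator (rather than $\sqrt{\nu}$) is respected, which is why the slightly lossy bound $\|\vx^\eta-\vx^\star\|\le\sqrt{\eta\nu/\alpha}$ from \Cref{lem:iso_quad_universal_upper_bound} is used for the orthogonal component rather than a tighter estimate.
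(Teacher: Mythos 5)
Your overall strategy is the same as the paper's: apply \Cref{lem:contains_linear_ball} with $q(\vx)=\tfrac{\alpha}{2}\|\vx-\vv\|^2$, reduce to upper-bounding $\|\nabla q(\vx^\eta)\|=\alpha\|\vx^\eta-\vv\|$, and then match the radius by algebra. One detour in your gradient bound does not work as described, though it is harmless: bounding $p=\va^\top(\vx^\eta-\vx^\star)$ by the directional bound of \Cref{lem:iso_quad_upper_bound} \emph{and} the orthogonal component by the full deviation $\sqrt{\eta\nu/\alpha}$ from \Cref{lem:iso_quad_universal_upper_bound} only yields $\|\vx^\eta-\vv\|^2\le\bigl(\|\vx^\star-\vv\|+\sqrt{\eta\nu/\alpha}\bigr)^2+\eta\nu/\alpha$, whose square root is strictly larger than the $\sqrt{\eta\nu/\alpha}+\|\vx^\star-\vv\|$ you claim; the bound you want follows instead from the plain triangle inequality $\|\vx^\eta-\vv\|\le\|\vx^\eta-\vx^\star\|+\|\vx^\star-\vv\|$ combined with \Cref{lem:iso_quad_universal_upper_bound}, which is exactly what the paper does.

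The genuine gap is in the final verification: your inequality goes the wrong way. Writing $s:=\|\vx^\star-\vv\|$, substituting $\sqrt{\eta/\alpha+s^2}\le\sqrt{\eta/\alpha}+s$ into the denominator of $\frac{\eta/\alpha}{2\sqrt{\nu}\left(\sqrt{\eta/\alpha+s^2}+s\right)}$ makes the denominator \emph{smaller} and the fraction \emph{larger}, so $\frac{\eta}{2\sqrt{\nu}(\sqrt{\alpha\eta}+2\alpha s)}$ is a \emph{lower} bound on the right-hand side, not an upper bound; hence verifying $\sqrt{\nu}(\sqrt{\alpha\eta}+2\alpha s)\ge\alpha\|\vx^\eta-\vv\|$ does not establish $\frac{\eta}{2\|\nabla q(\vx^\eta)\|}\ge\frac{1}{2\sqrt{\nu}}\bigl(\sqrt{\eta/\alpha+s^2}-s\bigr)$, and your "it suffices to check" step fails as written. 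The correct move (the paper's) is the opposite substitution, lower-bounding the denominator via $\sqrt{\eta/\alpha+s^2}\ge\sqrt{\eta/\alpha}$, i.e.\ $\sqrt{x+y^2}-y=\frac{x}{\sqrt{x+y^2}+y}\le\frac{x}{\sqrt{x}+y}$, which reduces the task to $\sqrt{\nu}\bigl(\sqrt{\eta/\alpha}+s\bigr)\ge\|\vx^\eta-\vv\|$ (the paper additionally keeps the factor $\tfrac{1}{\sqrt\nu}s$ and uses $\nu\ge1$); this follows immediately from $\|\vx^\eta-\vv\|\le\sqrt{\eta\nu/\alpha}+s$ and $\nu\ge1$. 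So the proof is repairable with exactly the ingredients you already have, but the algebraic step as proposed is invalid.
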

\begin{proof}To prove this we use \Cref{lem:contains_linear_ball} and techniques similar to \Cref{lem:iso_quad_upper_bound}. By the triangle inequality and \Cref{lem:iso_quad_universal_upper_bound}, we have 
\begin{align*}
    \alpha \|\vx^\eta - \vv\| 
    \leq \alpha (\|\vx^\eta - \vx^\star\| + \|\vx^\star - \vv\|) 
    \leq \sqrt{\alpha}\sqrt{\eta\nu} + \alpha \|\vx^\star  - \vv\|.
\end{align*}
We next apply this upper bound to the result of \cref{lem:contains_linear_ball} with $q(\vx)= \frac{\alpha}{2}\|\vx-\vv\|^2$ and obtain: 
\begin{align*}
    \mathcal{B}\left(\vx^\eta, \frac{r}{R} \min\left\{\frac{\eta}{2(\sqrt{\alpha \eta}\sqrt{\nu} + \alpha \|\vx^\star -\vv\|)}, \frac{r}{2\nu + 4 \sqrt{\nu}}\right\}\right) \subseteq \mathcal{K}.
\end{align*}
With some rearranging, we may express the above bound as:
\begin{align*}
    \mathcal{B}\left(\vx^\eta, \frac{r}{R} \min\left\{\frac{1}{2\sqrt{\nu}}\frac{\frac{\eta}{\alpha} }{\sqrt{\frac{\eta}{\alpha}} + \frac{1}{\sqrt{\nu}}\|\vx^\star - \vv\|}, \frac{r}{2\nu + 4 \sqrt{\nu}}\right\}\right) \subseteq \mathcal{K}.
\end{align*}
Observe that for any $x > 0, y \in \R$, we have that,
\begin{align*}
    \sqrt{x + y^2} - y = \frac{x}{\sqrt{x + y^2} + y} \leq \frac{x}{\sqrt{x} + y}.
\end{align*}
Since $\nu \geq 1$, we have $\frac{\eta/\alpha}{\sqrt{\eta/\alpha} + \frac{1}{\sqrt{\nu}} \|\vx^\star - \vv\|} \geq \frac{\eta/\alpha}{\sqrt{\eta/\alpha} +  \|\vx^\star - \vv\|} \geq \sqrt{\frac{\eta}{\alpha} + \|\vx^\star - \vv\|^2} - \|\vx^\star - \vv\|$. We can then simplify the last bound on the radius around $\vx^\eta$ to match the claimed bound. 
\end{proof}

\subsubsection{Consolidated Upper and Lower Bounds}

We now collect \Cref{lem:iso_quad_upper_bound} and \Cref{lem:iso_quad_ball}, performing a change of basis to provide bounds for arbitrary quadratic objective functions.

\begin{theorem}\label{thm:quad_opt_result}Let $\mathcal{K} = \{\vx : \mA \vx \geq \vb\}$ be a polytope for some $\mA \in \R^{n_r \times n}, \vb \in \R^{n_r}$. Let $\phi$ be a $\nu$-self-concordant barrier over $\mathcal{K}$. Assume there exists $\bar{\vx} \in \mathcal{K}$ such that $\mathcal{B}(\bar{\vx},r) \subseteq \mathcal{K} \subseteq \mathcal{B}(\bar{\vx}, R)$ for some $r, R > 0$. Let,
\begin{align*}
\vx^\eta &:= \argmin_{\vx} \frac{1}{2}(\vx - \vv)^\top \mH (\vx - \vv) + \eta \phi(\vx), \\
\vx^\star &:= \argmin_{\vx \in \mathcal{K}} \frac{1}{2}(\vx - \vv)^\top \mH (\vx - \vv),
\end{align*}
where $m\mI \preceq \mH \preceq M\mI$. Let $\va = \frac{\mH(\vx^\star - \vv)}{\|\mH(\vx^\star - \vv)\|}$ if $\|\vx^\star - \vv\| > 0$. Then the following hold: 
\begin{thmenum}
\compresslist{
\item\label[theorem]{item:thm_consolidated_i} $    \|\vx^\eta - \vx^\star\| \leq \sqrt{\frac{\eta\nu}{m}}$,
\item\label[theorem]{item:thm_consolidated_ii} $0 \leq \va^\top (\vx^\eta - \vx^\star) \leq \frac{1}{2\sqrt{m}}\left(\sqrt{{4\eta\nu} + \|\vx^\star - \vv\|_{\mH}^2} - \|\vx^\star - \vv\|_{\mH}\right)$,
\item\label[theorem]{item:thm_consolidated_iii} $ \mathcal{B}\left(\vx^\eta, \sqrt{\frac{m}{M}} \cdot \frac{r}{R}\cdot \min\left\{\frac{1}{\sqrt{\nu M}}\left(\sqrt{\eta + \|\vx^\star - \vv\|_{\mH}^2} - \|\vx^\star -\vv\|_{\mH}\right), \sqrt{\frac{m}{M}} \cdot \frac{r}{2\nu + 4 \sqrt{\nu}}\right\}\right) \subseteq \mathcal{K}.$
}
\end{thmenum}
Note that this implies that, if $\va$ exists, then
\begin{align*}
    \sqrt{\frac{m}{M}} \cdot \frac{r}{R} \cdot\min\left\{\frac{1}{\sqrt{\nu M}}\left(\sqrt{\eta + \|\vx^\star - \vv\|_{\mH}^2} - \|\vx^\star -\vv\|_{\mH}\right), \sqrt{\frac{m}{M}} \cdot \frac{r}{2\nu + 4 \sqrt{\nu}}\right\} \leq \va^\top(\vx^\eta - \vx^\star).
\end{align*}
\end{theorem}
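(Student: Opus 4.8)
The plan is to reduce the general quadratic objective to an isotropic one by the linear change of variables $\vy := \mH^{1/2}\vx$, apply the three isotropic bounds \Cref{lem:iso_quad_universal_upper_bound}, \Cref{lem:iso_quad_upper_bound}, and \Cref{lem:iso_quad_ball} in the new coordinates, and then transport everything back using $m\mI \preceq \mH \preceq M\mI$. Concretely, set $\vv' := \mH^{1/2}\vv$, $\mathcal{K}' := \mH^{1/2}\mathcal{K} = \{\vy : \mA\mH^{-1/2}\vy \geq \vb\}$, and $\phi'(\vy) := \phi(\mH^{-1/2}\vy)$; by affine invariance of self-concordant barriers (\Cref{thm:affine_scb}), $\phi'$ is a $\nu$-self-concordant barrier on $\mathcal{K}'$. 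Since $\tfrac12(\vx - \vv)^\top\mH(\vx - \vv) = \tfrac12\|\mH^{1/2}\vx - \vv'\|^2$, the images $\vy^\eta := \mH^{1/2}\vx^\eta$ and $\vy^\star := \mH^{1/2}\vx^\star$ are exactly the minimizers of the isotropic problem with $\alpha = 1$ over $\mathcal{K}'$. Finally, because the singular values of $\mH^{1/2}$ lie in $[\sqrt m, \sqrt M]$, the Dikin-type inclusion transforms as $\mathcal{B}(\mH^{1/2}\bar\vx, \sqrt m\, r) \subseteq \mathcal{K}' \subseteq \mathcal{B}(\mH^{1/2}\bar\vx, \sqrt M\, R)$: the inscribed ball of the ellipsoid $\mH^{1/2}\mathcal{B}(\bar\vx, r)$ has radius $\sqrt m\, r$, while $\mH^{1/2}\mathcal{B}(\bar\vx, R) \subseteq \mathcal{B}(\mH^{1/2}\bar\vx, \sqrt M\, R)$.

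Next I would invoke the isotropic lemmas with $\alpha = 1$, barrier parameter $\nu$, inner radius $\sqrt m\, r$, and outer radius $\sqrt M\, R$, and translate each conclusion. For \Cref{item:thm_consolidated_i}: \Cref{lem:iso_quad_universal_upper_bound} gives $\|\vy^\eta - \vy^\star\| \leq \sqrt{\eta\nu}$, and since $\|\vy^\eta - \vy^\star\| = \|\mH^{1/2}(\vx^\eta - \vx^\star)\| \geq \sqrt m\,\|\vx^\eta - \vx^\star\|$, this yields $\|\vx^\eta - \vx^\star\| \leq \sqrt{\eta\nu/m}$. For \Cref{item:thm_consolidated_ii}: writing $\va' := (\vy^\star - \vv')/\|\vy^\star - \vv'\|$, one has $\|\vy^\star - \vv'\| = \|\vx^\star - \vv\|_{\mH}$ and $(\va')^\top(\vy^\eta - \vy^\star) = (\vx^\star - \vv)^\top\mH(\vx^\eta - \vx^\star)/\|\vx^\star - \vv\|_{\mH}$, whereas $\va^\top(\vx^\eta - \vx^\star) = (\vx^\star - \vv)^\top\mH(\vx^\eta - \vx^\star)/\|\mH(\vx^\star - \vv)\|$; the bound $\|\mH(\vx^\star - \vv)\|^2 = (\vx^\star - \vv)^\top\mH^2(\vx^\star - \vv) \geq m\,\|\vx^\star - \vv\|_{\mH}^2$ gives $\va^\top(\vx^\eta - \vx^\star) \leq m^{-1/2}(\va')^\top(\vy^\eta - \vy^\star)$, and plugging in $0 \leq (\va')^\top(\vy^\eta - \vy^\star) \leq \tfrac12(\sqrt{4\eta\nu + \|\vx^\star - \vv\|_{\mH}^2} - \|\vx^\star - \vv\|_{\mH})$ from \Cref{lem:iso_quad_upper_bound} produces the stated two-sided bound. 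For \Cref{item:thm_consolidated_iii}: \Cref{lem:iso_quad_ball} produces a ball $\mathcal{B}(\vy^\eta, \rho) \subseteq \mathcal{K}'$ whose radius $\rho$ is the displayed expression with $r \mapsto \sqrt m\, r$, $R \mapsto \sqrt M\, R$, $\alpha = 1$, and $\|\vx^\star - \vv\| \mapsto \|\vx^\star - \vv\|_{\mH}$; pulling back through $\mH^{-1/2}$, whose smallest singular value is $1/\sqrt M$, the ellipsoid $\mH^{-1/2}\mathcal{B}(\vy^\eta, \rho) \subseteq \mathcal{K}$ contains $\mathcal{B}(\vx^\eta, \rho/\sqrt M)$, and simplifying $\rho/\sqrt M$ gives the claimed ball. (When $\vx^\star = \vv$, $\va$ is undefined and so is $\va'$; \Cref{lem:iso_quad_universal_upper_bound} is used in place of \Cref{lem:iso_quad_upper_bound}, exactly as noted in those lemmas.)

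For the concluding lower bound on $\va^\top(\vx^\eta - \vx^\star)$, I would use first-order optimality of $\vx^\star$ over $\mathcal{K}$: for every $\vz \in \mathcal{K}$, $\langle \nabla q(\vx^\star), \vz - \vx^\star\rangle \geq 0$, where $q(\vx) = \tfrac12(\vx - \vv)^\top\mH(\vx - \vv)$ so that $\nabla q(\vx^\star) = \mH(\vx^\star - \vv) = \|\mH(\vx^\star - \vv)\|\,\va$. Taking $\vz = \vx^\eta - \rho\,\va$, which lies in $\mathcal{B}(\vx^\eta, \rho) \subseteq \mathcal{K}$ by \Cref{item:thm_consolidated_iii} with $\rho$ the radius established there, yields $\va^\top(\vx^\eta - \vx^\star) \geq \rho$, which is the asserted inequality. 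The main obstacle is purely the bookkeeping of how balls become ellipsoids under $\mH^{\pm 1/2}$ — in particular that the round trip through the change of basis costs a factor $\sqrt{m/M}$ in any inscribed-ball radius — together with keeping straight that the ``good direction'' $\va$ in the original coordinates is $\mH(\vx^\star - \vv)$ normalized, which is only proportional (not equal) to the image of the isotropic direction $\va'$ and hence forces the $\mH^2 \succeq m\mH$ estimate used above.
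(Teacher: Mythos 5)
Your proposal is correct and follows essentially the same route as the paper's proof: the change of variables $\vy=\mH^{1/2}\vx$, affine invariance of the barrier, the three isotropic lemmas applied with inner/outer radii $\sqrt{m}\,r$ and $\sqrt{M}\,R$, and the $\sqrt{m/M}$ bookkeeping when pulling the inscribed ball back through $\mH^{-1/2}$. Your closing step (plugging $\vz=\vx^\eta-\rho\,\va$ into the first-order variational inequality at $\vx^\star$) is just a streamlined version of the paper's argument that $\mathcal{K}$ lies in the halfspace $\{\vx:\va^\top(\vx-\vx^\star)\geq 0\}$ and the inscribed ball bounds the distance to that hyperplane, so there is no substantive difference.
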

\noindent Note that \cref{{item:thm_consolidated_iii}} implies a lower bound on the residual (distance from boundary of $\mathcal{K}$) of $\vueta$. This is the bound we directly use in our smoothness bound in  \Cref{{thm:hess_ueta_bounded}}. 
\begin{proof}[Proof of \Cref{{thm:quad_opt_result}}]For both the upper and lower bounds, we use the change of basis $\vy = \mH^{1/2}\vx$, $\vz = \mH^{1/2}\vv$. We can then transform the assumed definitions of $\vx^\eta$ and $\vx^\star$ into the following optimization problem in $\vy$: 
\begin{align*}
\vy^\eta &:= \argmin_{\vy} \frac{1}{2}\|\vy - \vz\|^2 + \eta \cdot\phi(H^{-1/2}\vy), \\
\vy^\star &:= \argmin_{y\in H^{1/2} \cdot \mathcal{K}} \frac{1}{2}\|\vy - \vz\|^2.
\end{align*}
By \Cref{thm:affine_scb}, we have that $\phi \circ \mH^{-1/2}$ defined on the set $\mH^{1/2}\cdot\mathcal{K}$ is still a self-concordant barrier with parameter $\nu$. Therefore, 
we have that,
\begin{align*}
    \|\vx^\eta - \vx^\star\| \leq \frac{1}{\sqrt{m}}\|\vy^\eta -\vy^\star\| \leq \sqrt{\frac{\eta\nu}{m}},
\end{align*}
where the final step is by \Cref{lem:iso_quad_universal_upper_bound}. This completes the proof of \Cref{item:thm_consolidated_i}. Next, we define  $\tilde{\va} := \frac{\vy^\star - \vz}{\|\vy^\star - \vz\|}$ and $\hat{\va} := \frac{1}{\sqrt{m}} \cdot \mH^{1/2}\tilde{\va}$. 
Then we have $\va = \hat{\va}/\|\hat{\va}\| = \frac{\mH(\vx^\star - \vv)}{\|\mH(\vx^\star - \vv)\|}$ and 
\begin{align*}
    0 \leq \va^\top (\vx^\eta - \vx^\star) = \frac{1}{\|\hat{\va}\|} \hat{\va}^\top(\vx^\eta - \vx^\star) = \frac{1}{\|\hat{\va}\| \sqrt{m}} \tilde{\va}^\top(\vy^\eta - \vy^\star) &\leq \frac{1}{2\sqrt{m}}\left(\sqrt{4\eta\nu + \|\vy^\star - \vz\|^2} - \|\vy^\star - \vz\|\right).
\end{align*}
where the first and final inequalities are by  \Cref{lem:iso_quad_upper_bound} 
applied to $\vy^\eta$ and $\vy^\star$, and using that $\|\hat{\va}\| \geq 1$. Finally, by again using this affine transformation between $y$ and $x$, we obtain \cref{item:thm_consolidated_ii}:
\begin{align*}
    0 \leq \va^\top (\vx^\eta - \vx^\star) \leq \frac{1}{2\sqrt{m}}\left(\sqrt{4\eta\nu + \|\vx^\star - \vv\|_{H}^2} - \|\vx^\star - \vv\|_{H}\right).
\end{align*}
For the lower bound, we note that $\mH^{1/2} \mathcal{K}$ contains a ball of radius at least $r \cdot \sqrt{m}$ and is contained in a ball of radius at most $R\cdot \sqrt{M}$. Applying \Cref{lem:iso_quad_ball}, we have that, 
\begin{align*}
    \mathcal{B}\left(\vy^\eta, \sqrt{\frac{m}{M}} \cdot \frac{r}{R} \min\left\{\frac{1}{\sqrt{\nu}}\left(\sqrt{\eta + \|\vy^\star - \vz\|^2} - \|\vy^\star -\vz\|\right), \sqrt{m} \cdot \frac{r}{2\nu + 4 \sqrt{\nu}}\right\}\right) \subseteq \mH^{1/2} \cdot \mathcal{K}.
\end{align*}
Plugging in the definitions for $\vy$ and $\vz$ in the LHS, we have,
\begin{align*}
    \mathcal{B}\left(\vy^\eta, \sqrt{\frac{m}{M}} \cdot \frac{r}{R} \min\left\{\frac{1}{\sqrt{\nu}}\left(\sqrt{\eta + \|\vx^\star - \vv\|_{H}^2} - \|\vx^\star -\vv\|_{H}\right), \sqrt{m} \cdot \frac{r}{2\nu + 4 \sqrt{\nu}}\right\}\right) \subseteq \mH^{1/2} \cdot \mathcal{K}.
\end{align*}
Since $\|\mH^{1/2}\| \leq \sqrt{M}$, we have $\sigma_{\min}(\mH^{-1/2}) \geq \frac{1}{\sqrt{M}}$. This implies that the ball centered around $\vx^\eta$ in $\mathcal{K}$ is at least $\frac{1}{\sqrt{M}}$ the radius of the ball centered around $\vy^\eta$ in $\mH^{1/2} \cdot \mathcal{K}$. Thus, as claimed in \cref{item:thm_consolidated_iii}, we have,
\begin{align*}
    \mathcal{B}\left(\vx^\eta, \sqrt{\frac{m}{M}} \cdot \frac{r}{R} \min\left\{\frac{1}{\sqrt{\nu M}}\left(\sqrt{\eta + \|\vx^\star - \vv\|_{\mH}^2} - \|\vx^\star -\vv\|_{\mH}\right), \sqrt{\frac{m}{M}} \cdot \frac{r}{2\nu + 4 \sqrt{\nu}}\right\}\right) \subseteq \mathcal{K}.
\end{align*}
Finally, note that $\va := \frac{\mH(\vx^\star - v)}{\|\mH(\vx^\star - v)\|} = \frac{\nabla q(\vx^\star)}{\|\nabla q(\vx^\star)\|}$, where $q(\vx) := \frac{1}{2}(\vx - \vv)^\top \mH(\vx - \vv)$ is the objective for the hard-constrained problem. By the optimality of $\vx^\star$ for this problem, $\va$ is a combination of the active constraints at $\vx^\star$, meaning that $$\{\vx : \va^\top(\vx - \vx^\star) \geq 0\} \supseteq \mathcal{K}.$$
Since, by \cref{item:thm_consolidated_iii}, there exists a minimum-radius ball around $\vx^\eta$ (which in turn implies that the minimum distance of $\vx^\eta$ from the boundary of $\mathcal{K}$ is at least this radius) and $\va$ is unit-norm, we can conclude that the distance of $\vx^\eta$ from the hyperplane specified by $a$ is at least this radius: 
\begin{align*}
    \sqrt{\frac{m}{M}} \cdot \frac{r}{R} \min\left\{\frac{1}{\sqrt{\nu M}}\left(\sqrt{\eta + \|\vx^\star - \vv\|_{\mH}^2} - \|\vx^\star -\vv\|_{\mH}\right), \sqrt{\frac{m}{M}} \cdot \frac{r}{2\nu + 4 \sqrt{\nu}}\right\} \leq \va^\top(\vx^\eta - \vx^\star).
\end{align*}
\end{proof}

\end{document}